\newcommand{\web}[2]{#1}
\newcommand{\vs}{\vspace*{0.3cm}}
\newenvironment{lists}[1]{
                 \begin{list}{}{
                     \setlength{\listparindent}{0in}
                     \settowidth{\labelwidth}{#1}
                     \setlength{\leftmargin}{\labelwidth}
                     \addtolength{\leftmargin}{\labelsep}
                     }
                 }{
                 \end{list}
                 }
\newenvironment{given-find}[2]{
                               \vs 
                               \noindent \hrule
                               \begin{lists}{Given:XX}
                               \item[\sc Given: \hfill] #1                                 
                               \item[\sc Find: \hfill] #2                               
                               \vs 
                               \noindent \hrule 
                               }{
                               \end{lists}
                               }
\newcommand{\citeyear}[1]{\cite{#1}}
\newcommand{\True}{\mathbf{T}}
\newcommand{\False}{\mathbf{F}}
\newcommand{\I}{\mathcal{I}}
\newcommand{\IEvalT}[2]{\I(#1)(#2)} 
\newcommand{\R}{\mathbb{R}}
\newcommand{\C}{\mathbb{C}}
\newcommand{\N}{\mathbb{N}}
\renewcommand{\deg}{\mathrm{deg\,}}
\newcommand{\Width}[1]{\mathrm{width}(#1)}
\newcommand{\Subst}[3]{#1[#2\leftarrow #3]}
\newcommand{\acktext}{The work of Stefan Ratschan and Peter Franek was
supported by M{\v S}MT project number OC10048 and the Czech Science Foundation (GACR) grants number P202/12/J060 and 15-14484S with institutional support RVO:67985807.}
\newcommand{\ExtVer}[1]{\web{\footnote{#1}}{\thanks{#1}}}
\renewcommand{\epsilon}{\varepsilon}
\newcommand{\Long}[1]{} 
\title{Quasi-decidability of a Fragment of the\\ First-order Theory of Real
  Numbers\ExtVer{This is an extended and revised version of a paper that appeared in the
  proceedings of the 36th International Symposium on Mathematical Foundations of
  Computer Science~\cite{Franek:11}. \web{\acktext}{}}}
\institute{Peter Franek \and Stefan Ratschan (ORCID: 0000-0003-1710-1513)\at Institute of Computer Science, Academy of Sciences of the Czech Republic \and Piotr Zgliczynski \at Jagellonian University in Krakow}}
\begin{document}

\maketitle

\begin{abstract}
In this paper we consider a fragment of the first-order theory of the real numbers that includes systems of $n$ equations in $n$ variables, and for which all functions are computable in the sense that it is possible to compute arbitrarily close interval approximations. Even though this fragment is undecidable, we prove that---under the additional assumption of bounded domains---there is a (possibly non-terminating) algorithm for checking satisfiability such that (1) whenever it terminates, it computes a correct answer, and (2) it always terminates when the input is robust. A formula is robust, if its satisfiability does not change under small continuous perturbations. We also prove that it is not possible to generalize this result to the full first-order language---removing the restriction on the number of equations versus number of variables. As a basic tool for our algorithm we use the notion of degree from the field of topology.
\end{abstract}

\section{Introduction}

It is well known that, while the theory of real numbers with addition and multiplication is decidable~\cite{Tarski:51}, any periodic function makes the problem undecidable, since it allows encoding of the integers. The root existence problem for uni-variate functions defined by addition, multiplication, the sine function and the constant $\pi$ is also undecidable~\cite{Wang:74}. This even holds if we consider only functions on bounded domains, because an algorithm deciding it could be used to compute a fixed point of a continuous function from a ball to itself which is known to be non-computable for some computable functions~\cite{Baigger:85,Potgieter:08}.

Recently, several papers~\cite{Fraenzle:99,Ratschan:02b,Ratschan:02f,Damm:07} have argued, that in continuous domains  (where we have notions of neighborhood, perturbation etc.) such undecidability results do not always have much practical relevance. The reason is, that real-world manifestations of abstract mathematical objects in such domains will always be exposed to perturbations (imprecision of production, engineering approximations, unpredictable influences of the environment etc.). Engineers take these perturbations into account by coming up with \emph{robust} designs, that is, designs that do not change essentially under such perturbations. Hence, in this context, it is sufficient to come up with algorithms that are able to decide such robust problem instances. They are allowed to run forever in non-robust cases, but \emph{must not} return incorrect results, in whatever case. In a recent paper we called problems possessing such an algorithm \emph{quasi-decidable}~\cite{Ratschan:10a}. 


The main contribution of this paper can be summarized as follows:
\begin{itemize}
\item We show quasi-decidability of a certain fragment of the first-order theory of the reals (Theorem~\ref{thm:main}). The basic building blocks are existentially quantified disjunctions of systems of $n$ equalities over at most $n$ variables and arbitrarily many inequalities. Those blocks may be combined using universal quantifiers, conjunctions, and disjunctions. All variables are assumed to range over closed and bounded intervals.
\item We show that the result cannot be extended to the full first-order language. More specifically, in the basic building blocks (systems of equalities and inequalities) it is impossible to remove the restriction that the number of variables has to be at most the number of equalities (Theorem~\ref{thm:nonquasidec}). Still, while we show that this restriction cannot be removed completely, this leaves open the possibility to replace the restriction by a weaker constraint on the number of variables and equations.
\end{itemize}

The allowed function symbols include addition, multiplication, exponentiation, and sine. More specifically, they have to be continuous, and for compact intervals $I_1,\dots,I_n$, we need to be able to compute an interval $J\supseteq f(I_1\times \dots \times I_n)$ such that the over-approximation of $J$ over $f(I_1\times \ldots \times I_n)$ can be made arbitrarily small.

The main tool we use is the notion of the {\it degree of a continuous function} that comes from differential topology. 
For continuous functions $f: [a, b]\rightarrow \R$, the degree $\deg(f, [a, b], 0)$ is $0$ iff $f(a)$ and $f(b)$ have the same sign, otherwise the degree is either $1$ or $-1$, depending on whether the sign changes from negative to positive or the other way round. 
If $f$ is continuous and the degree is nonzero, then the equation $f(x)=0$ has a solution  by the intermediate value theorem.
For higher dimensional functions, the degree is a computable~\cite{Aberth:94,Franek:12b} integer whose value may be greater than $1$, 
and a nonzero degree still indicates the existence of a root of $f$.  
The converse is not true and the existence of a root does not imply nonzero degree in general. We show how, for robustly satisfiable formulas built up from certain blocks of $n$ equations in $n$ variables, to make the degree test eventually succeed, while at the same time handling inequalities and logical symbols. 

The proof of our second contribution---the class of equations and inequalities with no relation between the number of equations and variables is \emph{not} quasi-decidable---is based on a reduction from a~recent undecidability result~\cite{Franek:2014} for a~related robust satisfiability problem, cited in Theorem~\ref{thm:non-dec}.

Even though this work applies results from a quite distant field---topology---to
automated reasoning, the paper is largely self-contained. Usage of results from
topology that are not explicitly delineated in this paper is concentrated exclusively in
Section~\ref{technical_section}.

The content of the paper is as follows: In Section~\ref{sec:main-theorem}, we define the notions of robustness and quasi-decidability, and state the two main theorems of the paper.  In Section~\ref{problem}, we provide the quasi-decision procedure whose existence is claimed by the first main theorem. In Section~\ref{degdef}, we present the notion of topological degree and describe its main properties. In Section~\ref{sec:correctness}, we show that the quasi-decision procedure always returns a correct result. In Section~\ref{technical_section} we show some
non-algorithmic properties of the degree that will be the essential for showing termination for robust inputs in Section~\ref{sec:definiteness}. In Section~\ref{sec:nonquasidec} we prove the second main theorem. In Section~\ref{sec:related-work} we discuss related work. Finally, in Section~\ref{sec:conclusion}, we conclude the paper.

\section{Statement of the Results}
\label{sec:main-theorem}
We will start this section with informal discussion of a motivating example. Consider the first-order predicate logic formula \[ \exists x \;.\; [x\geq -1 \wedge x\leq 1 \wedge \sin x= 0 ] \] with the usual interpretation over the real numbers. This formula is true, and remains true, even if it is perturbed a little bit. On the other hand, the formula \[ \exists x \;.\; [ x\geq 1 \wedge x\leq 2\wedge \sin x= 1] \] is also true, but does not remain true when perturbing it, for example by increasing the right-most number $1$ a little bit. We will later call formulas of the first type robust, and formulas of the second type non-robust. Our first theorem will state that, for a certain class of formulas over the reals that includes function symbols such as $\sin$, there exists an algorithm (a ''quasi-decision procedure'') that decides whether a given formula is true, but that is only required to terminate for robust inputs while it may run forever for non-robust inputs. 

In the rest of the section, after fixing notation, we define the class of functions that we consider (Definition~\ref{def:intcomp}). Then we will formalize the notion of perturbing predicate-logical formulas (Definition~\ref{dist}) which results in a precisely defined notion of a formula being robust (Definition~\ref{robust}). Finally, we state Theorem~\ref{thm:main} that ensures the existence of such a quasi-decision procedure and the negative Theorem~\ref{thm:nonquasidec} that puts a limit on generalization of the approach.

We define a {\it box} in $\R^n$ (or also $n$-box) to be the Cartesian product of $n$ closed intervals of finite length (i.e., a hyper-rectangle). The \emph{width} $\Width{B}$ of a box $B$ is the maximum of the width of the constituting intervals of $B$.
For $x\in\R^n$, $|x|$ will refer to its \emph{maximum norm} $|x|:=\max \{|x_1|,\ldots, |x_n|\}$ and for a continuous function
$f:\Omega\to\R^n$, we use the supremum norm $||f||_\Omega:=\sup
\{|f(x)|;\,\,x\in\Omega\}$. 
If $||f-g||_\Omega\leq\alpha$ for some $\alpha>0$, we say that $g$ is an \emph{$\alpha$-perturbation} of $f$ in $\Omega$. If $\Omega$ is clear from the context then we will simply write $||f||$, or say that $g$ is an \emph{$\alpha$-perturbation} of $f$, without explicitly mentioning $\Omega$.
For a set $\Omega\subseteq\R^n$, $\bar\Omega$ is its closure, $\Omega^\circ$ its interior and 
$\partial\Omega=\bar\Omega\backslash\Omega^\circ$ 
its boundary with respect to the Euclidean topology. 
We will call the closure $\bar\Omega$ of an open connected bounded set $\Omega$ a \emph{closed region}. 

For defining the class of formulas, we will first fix the class of functions that we handle. Intuitively, we allow functions whose range can be arbitrarly closely approximated by boxes:

\begin{definition}
\label{def:intcomp}
Let $\Omega\subseteq\R^m$ be a box with rational vertices. 
We say that a function $f: \Omega\to\R^n$ is \emph{interval computable}, iff there exists a corresponding algorithm $\I(f)$ that computes, for any box $B\subseteq\Omega$ with rational vertices, an $n$-box $\I(f)(B)\subseteq\R^n$ with rational vertices such that
\begin{itemize}
\item $\I(f)(B)\supseteq \{ f(x) \mid x\in B\}$, and
\item for every $\varepsilon>0$ there is a $\delta>0$ such that for every box $B$
  with $0<\Width{B}<\delta$, $\Width{\IEvalT{f}{B}}<\varepsilon$.
\end{itemize}
\end{definition}

Each interval computable function is uniformly continuous. Moreover, a function $f: \Omega\to\R^n$, with $\Omega\subseteq\R^m$ a box with rational vertices, is interval computable iff it is computable in the sense of computable analysis~\cite{Brattka:08} (for seeing this, note especially that a function that is computable in the sense of computable analysis has a computable modulus of continuity~\cite[Theorem 2.13]{Ko:91}). 

For common function symbols that can be written in terms of symbolic expressions containing symbols denoting rational constants, the constant $\pi$, addition, multiplication, exponentiation, trigonometric functions and square root, the algorithm  $\I(f)$ can be implemented from the expression by interval arithmetic~\cite{Neumaier:90,Moore:09} with arbitrary precision interval endpoints.

In the rest of the paper, we assume that a set of function and predicate symbols is given, together with structure assigning to each function symbol an interval computable function and to each predicate symbol a corresponding relation over the real numbers. We assume that this symbol set contains at least all rational constants, addition, multiplication, and the predicate symbols $=$ and $\geq$ with their usual interpretation. Whenever we will write concrete function or predicate symbols, this structure will assign their standard meaning over the real numbers. From now on, we will restrict ourselves to formulas from the first-order language corresponding to the given symbol set.

We also assume that a map $\I$ is given that assigns, to each function symbol $f$, an algorithm $\I(f)$ satisfying the specification in Definition~\ref{def:intcomp}. This map $\I$ is assumed to be algorithmic. Such assignment $\I$ naturally extends to terms of the language
via composition of interval functions: if $t$ is a term of the language, then the algorithm $\I(t)$ 
represents the corresponding function and satisfies both assumptions of Definition~\ref{def:intcomp}. In addition, we will assume that every variable ranges over a closed bounded interval introduced by a corresponding quantifier of the form $\exists x\in I$ or $\forall x\in I$. Throughout the paper we will require those bounds to be small enough to avoid any function application outside of the domain of any interval computable function. In a similar way, whenever we introduce bounds on the free variables of a formula, we assume them to be small enough to avoid such function applications. 

As usual, a sentence will refer to a formula without free variables.  Now we formalize perturbations of formulas by defining some notion of distance on sentences. 

\begin{definition}
\label{dist}
  Let $F, G$ be two sentences. We say that $F$ and $G$ have the \emph{same structure} iff one can be obtained from the other
  by only exchanging terms (i.e., they have the same Boolean and quantification
  structure including bounds of quantified variables, and the same predicate symbols).

We define the distance $d$ on sentences as follows. If two sentences $F$ and $G$
do not have the same structure, then
$d(F, G):=\infty$. In the case where they do have the same structure, assume that the sentence $F$ contains
terms denoting functions $f_1,\ldots,f_p$ and the sentence $G$ contains in the corresponding 
places terms denoting the functions $g_1,\ldots,g_p$. We define the distance 
$$d(F, G):=\max_{i\in\{1,\ldots,p\}} ||f_i-g_i||_{\Omega_i},$$ where $\Omega_i$
denotes the respective domain of those functions, that is, the box defined by the quantification of all the variables.
\end{definition}

For example, the sentences 
$$\exists x\in [0,1]\,\forall y\in [0,1] \;.\; x^2-y=xy \wedge x=y$$ and 
$$\exists x\in [0,1]\,\forall y\in [0,1] \;.\; x^2-y=xy+1 \wedge x=y^2$$ 
have the same structure, because the only difference is in the terms involved.
The distance $d(F,G)=1$, because---with $(x,y)\in [0,1]^2$---we have that $\max |(x^2-y) -  (x^2-y)|=0$, $\max |xy - (xy+1)|=1$, $\max |x-x|=0$ and $\max |y-y^2|=1/4$. As another example, the sentences $1\geq 0$ and $\neg\neg 1\geq 0$ do not have the same structure, and hence their distance is $\infty$.



\begin{definition}
\label{robust}
Let $S$ be a sentence and $\varepsilon>0$.  We say that $S$ is \emph{$\varepsilon$-robust} iff for every sentence $S'$, $d(S',S)<\varepsilon$ implies that $S'$ and $S$ have the same truth value. We say that the sentence $S$ is \emph{robust} iff there is an $\varepsilon>0$ such that $S$  is $\varepsilon$-robust.
We say that a sentence  $S$ is \emph{robustly true} iff it is both robust and true.
We say that a sentence $S$ is \emph{robustly false} iff it is both robust and false.
\end{definition}
Note that, since we restricted ourselves to formulas with function symbols denoting interval-computable functions, all functions involved in the above definitions are interval computable, hence uniformly continuous.

Also note that equivalence of two formulas does not necessarily imply the same robustness. For example, the formula $\exists x \in [ 0, 2] \;.\; x-1=0$ is robust, but the formula $\exists x \in [ 0, 2] \;.\; x-1=0 \wedge x-1=0$ is not, since both occurrences of the function $x-1$ can be perturbed independently. 



\begin{definition}
\label{def:quasidec}
A \emph{quasi-decision procedure} for some class $\mathcal{B}$ of formulas is an algorithm that takes as inputs a \emph{sentence} $\varphi$ from $\mathcal{B}$ 
and an algorithm $\I$ converting function symbols $f$ to algorithms $\I(f)$.
The algorithm computes the truth value of $\varphi$ whenever $\varphi$ is robust. 
If $\varphi$ is non-robust, the algorithm may run forever but must not return an incorrect result.

If such a quasi-decision procedure exists for some class $\mathcal{B}$, then we say that $\mathcal{B}$ is \emph{quasi-decidable}.
\end{definition}

Now we are ready to state our first result.
\begin{theorem}
\label{thm:main}
The following class of formulas $\mathcal{B}$, defined recursively below, is quasi-decidable:
\begin{enumerate}[(a)]
\item $\mathcal{B}$ contains all formulas of the form
\label{item:Bbasic}
\[
\begin{array}{lc}
\exists x\in B \;.\; & [f_1=0\, \wedge\, f_2=0\,\wedge\, \ldots\wedge\,
f_{n}=0\,\wedge\, g_1\geq 0\,\wedge\, g_2\geq 0\,\wedge\ldots\wedge\,
g_{k}\geq 0 ]
\end{array}\]
where $f_1, \dots, f_n, g_1,\dots, g_k$ are terms denoting interval-computable functions, $B$ is an $m$-box (the expression $\exists x\in
B$ denoting a block of $m$ existential quantifiers) with rational vertices and either $n\geq m$ or $n=0$.
The integer $k$ may be arbitrary and we also admit $k=0$ (i.e., the case without
inequalities). 

\item \label{item:Buniv}
Let $I\subseteq \R$ be a closed bounded interval with rational endpoints. If $U$ is in $\mathcal{B}$, then
$$\forall x\in I \;.\; U$$ is also in $\mathcal {B}$.

\item \label{item:Bconn} If $U,V$ are in $\mathcal {B}$, then $$U\wedge V\quad\ U\vee V$$ are also in $\mathcal{B}$.


\end{enumerate}
\end{theorem}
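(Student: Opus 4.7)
My plan is to build the quasi-decision procedure by structural recursion on $\mathcal{B}$ and then separately argue (i) correctness, meaning every terminating answer is correct, and (ii) termination on every robust input. The two basic tools are interval arithmetic applied to a branch-and-bound exploration of quantifier domains, used to rule out sub-boxes and to certify strict inequalities, together with the topological degree, used to certify the existence of common zeros of a square equation system on a sub-box.

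For the base case~(\ref{item:Bbasic}), given $\exists x\in B\;.\;(f_1=0\wedge\cdots\wedge f_n=0\wedge g_1\geq 0\wedge\cdots\wedge g_k\geq 0)$, I maintain a work list of sub-boxes of $B$. On each sub-box $B'$ I evaluate the enclosures $\I(f_i)(B')$ and $\I(g_j)(B')$. If some $g_j$-enclosure is strictly negative or some $f_i$-enclosure is bounded away from $0$, then $B'$ is dropped. Otherwise, in the square case $n=m$, I try to certify truth on $B'$ by first checking via interval arithmetic that $(f_1,\ldots,f_n)$ has no zero on $\partial B'$ and then computing its topological degree on $B'$: a nonzero degree combined with strictly positive enclosures of all $g_j$ exhibits a witness in $B'$. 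The overdetermined case $n>m$ uses the same template, with a degree test on $m$ selected equations paired with interval-arithmetic checks on the remaining $n-m$ equations. When $n=0$ this collapses to standard interval branch-and-bound. Unresolved sub-boxes are subdivided and re-added. For the recursive constructors~(\ref{item:Buniv}) and~(\ref{item:Bconn}), $U\wedge V$ and $U\vee V$ interleave the sub-procedures for $U$ and $V$ and combine the verdicts in the obvious way, while $\forall x\in I\;.\;U$ uses a parametric variant of the existential procedure: subdivide $I$ into sub-intervals $J$ and, on each $J$, attempt to verify that $U$ holds uniformly for every $x\in J$. Since the degree is integer-valued and determined by boundary values, once $J$ is small enough that interval arithmetic certifies the absence of a boundary zero uniformly for $x\in J$, a single degree computation covers the whole of $J$.

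Correctness should follow by a routine structural induction, since every successful termination path emits a certificate that is sound by construction: an interval enclosure missing $0$, a strictly positive interval enclosure, or a nonzero degree on a sub-box whose boundary is certified zero-free. The main obstacle will be proving termination on robust inputs. For a robustly true existential block it must be shown that, after sufficiently fine subdivision of $B$, some sub-box simultaneously passes the boundary test, has nonzero degree, and gives strict inequality enclosures; dually, for a robustly false block the whole of $B$ must eventually be covered by sub-boxes on which a violation is detected. The topological heart of both statements is that a robust zero of a system is always witnessed by a nonzero degree on a sufficiently small surrounding sub-box, and that a robust absence of a zero yields a uniformly positive lower bound on the norm of the system after enough refinement; these are the non-algorithmic properties of the degree that the paper isolates in its later technical section. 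Lifting termination through $\wedge$ and $\vee$ is then immediate, and lifting through $\forall x\in I\;.\;U$ reduces to the same robust-zero detection applied uniformly in the parameter $x$.
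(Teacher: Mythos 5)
Your overall plan matches the paper's in spirit: a branch-and-bound / grid refinement procedure over variable domains, with interval arithmetic to falsify and the topological degree to verify, wrapped in a structural recursion over $\mathcal{B}$ and analyzed in two halves (correctness of every answer; termination on robust inputs). But there are two places where your sketch either would not work as described or misses the crux.

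\textbf{The overdetermined case $n>m$.} You propose to ``use the same template, with a degree test on $m$ selected equations paired with interval-arithmetic checks on the remaining $n-m$ equations.'' This cannot succeed: interval arithmetic can only produce enclosures of the values of the remaining $n-m$ functions, never a certificate that they vanish \emph{exactly} at the common zero found by the degree argument on the selected $m$ equations, so the positive branch will never fire. More importantly, you are solving a problem that does not need to be solved. The right observation (the paper's Lemma~\ref{m<n}, a Sard's-theorem argument) is that \emph{no} system of $n>m$ equations in $m$ variables has a robust zero: an arbitrarily small continuous perturbation always removes all zeros. Consequently the algorithm never needs to certify truth when $n>m$; it suffices to return ``indeterminate'' in the positive branch, and the definiteness (termination) requirement holds vacuously in the true case while the false case is handled by the ordinary interval-arithmetic falsification pass. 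Without noticing this, the proposal has a gap: you have to justify termination for robustly-true overdetermined systems, and there are none, but you have not observed it and your proposed mechanism would not detect them anyway.

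\textbf{Termination in the robustly true square case $n=m$.} You write that ``after sufficiently fine subdivision of $B$, some sub-box simultaneously passes the boundary test, has nonzero degree, and gives strict inequality enclosures.'' This is not true as stated, and it is exactly where the real work lies. What robustness gives you (the paper's Theorem~\ref{robustzero}) is that there exists \emph{some} open set $U\subseteq\Omega$ with $0\notin f(\partial U)$ and $\deg(f,U,0)\neq 0$; $U$ need not be a box, and no single grid cell of a uniform subdivision need carry nonzero degree. In particular the zero set may sit exactly on common faces of adjacent cells, so that every individual cell has $0$ in its boundary enclosure and the per-cell degree test is never even applicable. The paper handles this by \emph{merging} adjacent grid boxes whose common face cannot be certified zero-free, and then shows that the resulting merged region both approximates $U$ and splits it in a way compatible with the additivity axiom of the degree, so that some merged component inherits a nonzero degree. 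You should replace ``some sub-box has nonzero degree'' with this merge-and-additivity argument, and also be careful about what happens at the outer boundary $\partial B$ (cells touching $\partial B$ on which $f$ might vanish must be discarded, not merged). Without the merge step, the definiteness proof does not go through.

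Two smaller remarks. Your handling of $\forall x\in I\;.\;U$ is in the right spirit, but the clean way to make ``verify $U$ uniformly on $J$'' precise is to carry the quantified variable as an extra parameter box $P$ in the recursive call, and make both the correctness and definiteness specifications of the base procedure explicitly parametric in $P$, as the paper does with its $\mathrm{CheckSat}(S,P,r)$ interface and the lemmas relating robustness of $\Subst{S}{p}{p_0}$ to neighborhoods of $p_0$. Finally, note that the definiteness half of the argument needs the symbol set to contain at least rational polynomials so that arbitrary continuous perturbations can be approximated by perturbations \emph{expressible in the language}; your argument tacitly uses this when invoking robustness, and it should be stated.
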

The formulas corresponding to $(\ref{item:Bbasic})$ represent systems of
equations and inequalities.  However, we assume that
there are no more existential quantifiers than equations in~$(\ref{item:Bbasic})$, corresponding to the condition $n\geq m$.

The following sentence is an example of a formula in class $\mathcal{B}$:

\vspace*{0.4cm}
$\forall x\in [-1,1]$\\ 
\hspace*{1cm} $\exists y\in [-1,1]\,\, \exists z\in [-1,1]$\\
\hspace*{2cm} $[x^2-y^2-z^2=0\,\, \wedge\,\, x^3-y^3-z^3=0]$.
\vspace*{0.4cm}


The following sentence is an example of a sentence not in $\mathcal{B}$
$$
\exists x\in [0,1]\,\,\exists y\in [0,1] \;.\; x-y=0
$$
because the domain of the particular function is a $2$-dimensional box and there is only one equation, so the assumptions in $(\ref{item:Bbasic})$ are violated.

Throughout we will use the convention that logical connectives bind stronger
than quantifiers. Moreover, we use brackets to denote Boolean structure of
formulas. Sometimes we will use line breaks instead of brackets for this
purpose. We will use the symbol $\equiv$ to denote equality of first-order
formulas. 

If $\exists x\in B\;.\;\,F_1$ and $\exists x\in B\;.\;\,F_2$ are in the class
$\mathcal{B}$, then $\exists x\in B \;.\; [F_1\,\vee F_2]$ is robust if and only if 
the formula $[\exists x\in B\;.\; F_1]\,\vee\,[\exists x\in B\;.\; F_2]$ is robust
and they are equi-satisfiable. Hence a quasi-decision procedure for $\mathcal{B}$ can
handle disjunctions within existential quantification, too.
In the following, however, we will restrict ourselves to the class $\mathcal{B}$.

The following theorem shows a limitation of possible extension of quasi-decidability of the class $\mathcal{B}$ to the whole first-order theory removing the restriction on the number of equations versus number of variables:
\begin{theorem}
\label{thm:nonquasidec}
Assume that the our symbol set is rich enough to contain function symbols for all piecewise linear functions
defined on rational triangulations of boxes with rational values in the vertices.
Then there is no algorithm $Q$ with the following specification:
\begin{itemize}
\item Q is quasi-decision procedure for the class of sentences
of the form $$\exists x\in [0,1]^d\;.\; f_1(x)=0\,\wedge\,\ldots\,\wedge\,f_n(x)=0\,\wedge \,g(x)\geq 0$$
where $(f,g): [0,1]^d\to \R^n\times \R$ and $d$ and $n$ are arbitrary.
\item Q can access all functions $f_j$, $g$ in the formula only via the oracle $\I(f_j)$, resp. $\I(g)$. 
That is, $Q$ can call $\I(f_j)$ and $\I(g)$ arbitrary many times but has no access to the syntactical representation of $f_j$ and $g$.
\end{itemize}
\end{theorem}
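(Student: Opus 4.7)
The plan is to derive Theorem~\ref{thm:nonquasidec} by a reduction from the undecidability result cited in Theorem~\ref{thm:non-dec}. The argument proceeds by contradiction: assuming an algorithm $Q$ with the stated specification exists, I will exhibit an effective procedure for the robust satisfiability problem of Theorem~\ref{thm:non-dec}, contradicting its undecidability. The use of $Q$ only through the oracle interface is crucial, because the undecidable source problem is itself formulated for functions presented by interval oracles.

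Given an instance of the source problem (which, thanks to the hypothesis that the symbol set contains all piecewise linear functions on rational triangulations, one may take to be a system $f(x)=0$, $g(x)\geq 0$ with $f\colon[0,1]^d\to\R^n$ and $g\colon[0,1]^d\to\R$ piecewise linear on a rational triangulation), the reduction packages it as the sentence
\[
\exists x\in [0,1]^d \;.\; f_1(x)=0 \wedge \cdots \wedge f_n(x)=0 \wedge g(x)\geq 0,
\]
which is syntactically of the form admitted by $Q$. Since piecewise linear functions on rational triangulations are interval computable, the oracle $\I$ required by $Q$ can be implemented uniformly from the triangulation data. Calling $Q$ on this sentence then returns its truth value whenever the sentence is robust in the sense of Definition~\ref{robust}.

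The key technical step is to argue that robustness is preserved exactly in both directions. In one direction, every $\alpha$-perturbation of $(f,g)$ in supremum norm gives rise to a sentence of the same shape at distance $\alpha$ from the original in the sense of Definition~\ref{dist}; in the other direction, Definition~\ref{dist} only permits replacing the occurring terms $f_i$ and $g$ by other interval-computable terms in the corresponding places, so any sentence within distance $\alpha$ of the packaged one induces an $\alpha$-perturbation of the source instance in which the truth value is preserved. Consequently the sentence is $\varepsilon$-robust iff the source instance is $\varepsilon$-robust under the notion of perturbation used in Theorem~\ref{thm:non-dec}, and whenever the source instance is robust, $Q$ terminates and outputs the correct answer. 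This contradicts the undecidability stated in Theorem~\ref{thm:non-dec}.

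The main obstacle I expect is aligning the two notions of perturbation, that is, ensuring the class of admissible perturbations on the source side coincides with the class of term-replacements allowed by Definition~\ref{dist} on the sentence side. The oracle-only access requirement in the theorem statement is what ultimately makes this alignment clean: because $Q$ cannot inspect the syntactic form of $f_i$ or $g$, it cannot distinguish the piecewise linear representatives used by the reduction from arbitrary interval-computable perturbations of them, so its behavior depends only on the function values, and the two robustness notions match up precisely. Once this equivalence is established, the contradiction is immediate.
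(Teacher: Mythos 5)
There is a genuine gap, and it sits exactly at the point where you declare the contradiction "immediate." Theorem~\ref{thm:non-dec} asserts the non-existence of a \emph{total} algorithm: on every input it must halt and output one of the two (overlapping) options "robustly true" or "some $1$-perturbation is false." Your reduction only calls $Q$ on the packaged sentence, and $Q$ is merely a quasi-decision procedure: on a non-robust instance it may run forever, so you have not produced an algorithm meeting the specification of Theorem~\ref{thm:non-dec} at all. Worse, even when $Q$ does halt, its answer does not certify the option you need: $Q$ is permitted to terminate on some non-robust inputs (it must only be \emph{correct} when it terminates), so $Q$ returning $\True$ tells you the sentence is true, not that it is \emph{robustly} true --- and "true but non-robust" instances are precisely those where the correct output is "some $1$-perturbation is false."

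The paper closes both holes with two ingredients your proposal is missing. First, an adversary argument (Lemma~\ref{lem:strict}) that uses the oracle-only access in an essential way: by fattening every oracle answer to $\I'(f)(B):=\I(f)(B)+[-\Width{B},\Width{B}]^n$, one arranges that if $Q_{\I'}$ terminated on a non-robust input, a perturbed input with the opposite truth value would be indistinguishable from it through the finitely many oracle calls made --- hence $Q_{\I'}$ terminates \emph{iff} the input is robust, and termination with $\True$ now does certify robust truth. (Note you invoke the oracle restriction only to "align perturbation notions"; its real job is this lemma.) Second, a complementary semi-decision procedure that enumerates piecewise linear perturbations on iterated barycentric subdivisions with bounded denominators and halts when it finds a false $1$-perturbation; this halts whenever some $1/2$-perturbation is false. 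Running the two in parallel yields a total algorithm for the problem of Theorem~\ref{thm:non-dec} (every instance is either robustly true or has a false $1/2$-perturbation), which is the actual contradiction. There is also a smaller issue you elide: one must first reduce to instances with pairwise distinct function symbols (Lemma~\ref{l:nonrepeating}), since repeated symbols can be perturbed independently under Definition~\ref{dist} and would break the equivalence of the two robustness notions you rely on.
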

As will be seen from the proof in Section~\ref{sec:nonquasidec}, the second condition in Theorem~\ref{thm:nonquasidec} 
may be replaced by the alternative condition:
\begin{itemize}
\item Q does not terminate whenever the input is non-robust.
\end{itemize}
Whether or not the second condition in Theorem~\ref{thm:nonquasidec} can be omitted completely is---up to the best of our knowledge---an open problem.

\Long{
\bigskip
\hrule 
\bigskip

Stefan:

\bigskip 

Note also, that even in cases where the occurring functions can be arbitrarily
closely approximated by Taylor polynomials (e.g., the exponential function $\exp$),
it is not possible to arrive at a quasi-decision procedure by
applying a decision procedure for the polynomial case~\cite{Tarski:51} to the Taylor
approximation of the occurring transcendental functions with a conservative
enclosure of the approximation error by a universally quantified variable. The reason is
that, given a function $f$, the fact that formula \[ \forall a\in\mathbb{R}^n
\;.\; [ \: ||a||<\varepsilon
\Rightarrow \exists x\in B \;.\;
f(x)=a]\] holds for some $\varepsilon$ does in general \emph{not} imply that $f$ is robustly
satisfiable. For example, one can easily define a function that maps the unit sphere to itself
(i.e., the above formula holds for $\varepsilon=1$), but whose zero vanishes
under perturbations of $f$. We will provide a concrete example in Section~\ref{technical_section}.\\

\bigskip
\hrule 
\bigskip Peter: (Either this or (rather) nothing) 

\bigskip

In both theorems we consider robustness wrt. arbitrary close perturbations of the sentences in the given theory. 
This notion of robustness is in general stronger than requiring that the truth value is stable wrt. small changes of a finite number of 
parameters. At the end of Section~\ref{technical_section} we present an example of $f: B^2\to\R^2$ that has a non-robust zero, however,
the image of $f$ covers a whole neighbohood of $0$. Although the zero \emph{can} be removed by arbitrary small continuous perturbations,
it \emph{cannot} be removed by adding constants (a one-dimensional space of perturbation).  \\
\hrule

\bigskip

If we relax the assumption of bounded domains for all variables, then there are further limitations on generalizing our result to the full first order theory.
If $f: \R^m\to\R^n$ has a robust zero, then it has a robust zero in some compact box $B_k:=[-k,k]^m$. This can be checked for $B_1, B_2, \ldots$ and we would eventually find a $B_k$ where it has a robust zero. 
However, if the language contains polynomials, sin and the constant $\pi$, then there is no algorithm that would check the nonexistence of zero and terminate for all robust inputs. 
This follows from~\cite{Wang:74}, where a polynomial $P$ with integer coefficient is converted into an equation $f_P=0$ on unbounded domain containing real polynomials, the sin function and the constant $\pi$, such that it is satisfiable iff $P$ has an integral root. A careful analysis of the paper shows that the sentence $\exists x\,\cdot\, f_P(x)=0$ is robust, so a quasi-decision procedure on unbounded domain would  imply the decidability of the satisfiability of integral polynomial equations which is known to be undecidable.

}

\Long{The interval literature usually
calls an interval function fulfilling the first property of Definition~\ref{def:intcomp}
''enclosure''. Instead of the second property, it often uses a slightly stronger notion of an
interval function being ''Lipschitz continuous''~\cite[Section
2.1]{Neumaier:90}. 

Note that, in practice, interval arithmetic is often implemented in
fixed-precision floating-point arithmetic which violates the second property
above. However, the property can be fulfilled using some form of
arbitrary-precision arithmetic. In any case, in this paper we require that the result is
returned in the form of rational numbers. All further computations in this
paper will be done based on rational number arithmetic. }

\section{The Quasi-decision Procedure}
\label{problem}
In this section, we construct an algorithm that decides, whether a robust sentence in $\mathcal{B}$ is true. The algorithm serves purely for proving Theorem~\ref{thm:main}. We do not claim it to be practically efficient whatsoever and leave a practically efficient quasi-decision procedure for future work. 

For any formula $U\in\mathcal{B}$, variable $x$ and $x_0\in\R$ we denote by $\Subst{U}{x}{x_0}$ the formula derived from $U$ by substituting $x_0$ for $x$ in every free occurrence of $x$ in $U$. 
We also allow $x$ to be an $n$-tuple of
variables, and $x_0\in\R^n$, in which case $\Subst{U}{x}{x_0}$ denotes the parallel substitution of entries of $x_0$ with their corresponding entries of $x$.  


In our algorithms, we use an alternative form of the Cartesian product
that concatenates tuples from the argument sets, instead of forming pairs. That
is, for sets $X\subseteq\mathbb{R}^n$ and $Y\subseteq\mathbb{R}^m$ it produces the set $\{
(x_1,\dots,x_n,y_1,\dots, y_n) \mid (x_1,\dots,x_n)\in X, (y_1,\dots,y_m)\in
Y\}$. Especially, for the set $\{ () \}$ containing the $0$-tuple, $\{ ()
\}\times X$ will be $X$. The width of $\{ () \}$, viewed as a box, is zero by definition.


We construct an auxiliary algorithm \emph{$\mathrm{CheckSat}(S,P,r)$} with the following specification:\pagebreak[3]
\begin{description}
\item[Input: ] ~
  \begin{itemize}
  \item a formula $S$ from $\mathcal{B}$ in $l$ free variables $p$,
  \item an $l$-box $P$ bounding the free variables of $S$,
  \item $r\in\mathbb{Q}^{>0}$,
  \end{itemize}
 such that the width of $P$ is at most $r$.
\item[Output: ] a nonempty subset of $\{ \True, \False \}$
\end{description}
with the following two properties:
\begin{description}
\item[Correctness: ] If the algorithm returns $\{ \True \}$  ($\{ \False \}$), then for all $p_0\in P$, $\Subst{S}{p}{p_0}$ is robustly true (robustly false).
\item[Definiteness: ] If for a given $l$-box $P_0$ bounding the free variables of $S$, either for all $p_0\in P_0$ the sentence $\Subst{S}{p}{p_0}$ is
robustly true or for all $p_0\in P_0$ the sentence $\Subst{S}{p}{p_0}$ 
is robustly false, then there exists an $\varepsilon>0$
such that for every $r\leq \varepsilon$ and every sub-box $P\subseteq P_0$ with width smaller than $r$,
the algorithm returns $\{ \True \}$ or $\{ \False \}$ (as opposed to $\{
  \True, \False \}$).
\end{description}

CheckSat(S, P, r) terminates always, but may return the indefinite result $\{ \True, \False \}$. The existence of such an algorithm immediately implies Theorem~\ref{thm:main}, because then the algorithm below is a quasi-decision procedure for $\mathcal{B}$. 
\begin{ntabbing}
\hspace*{0.8cm}\= \hskip 0.8cm \= \hskip 0.8cm \= \hskip 0.8cm \= \hskip 0.8cm \=\kill
$\varepsilon\leftarrow 1$\\
\textbf{loop}\+\\
$R\leftarrow\mathrm{CheckSat}(S,\{ ()\},\varepsilon)$\hspace*{4cm}\\
\textbf{if} $|R|=1$ \textbf{then} \` // $R$ is either $\{\True\}$ or $\{ \False \}$\\
\> \textbf{return} $s$ s.t. $s\in R$\\
\textbf{else}\\
\> $\varepsilon\leftarrow\varepsilon/2$
\end{ntabbing}
Note that the specification of CheckSat does not only result in a quasi-decision procedure, but also checks robustness of the input. 

We will now define the algorithm $\mathrm{CheckSat}(S, P, r)$ in detail. We will leave the
proof that it fulfills the specification to Sections~\ref{sec:correctness}
(correctness) and~\ref{sec:definiteness} (definiteness).
The algorithm is recursive,  following the definition of class
$\mathcal{B}$. We will now describe the parts corresponding to the individual cases of this definition. 

%
%

\subsection{System of Equations and Inequalities}
\label{SoE}

We first consider the case $(\ref{item:Bbasic})$ of class $\mathcal{B}$, that is, a formula $S$ of the form 
\[
\exists x\in B \;.\; [f_1=0\, \wedge\, \ldots\wedge\,
f_{n}=0\,\wedge\, g_1\geq 0\,\wedge\ldots\wedge\,
g_{k}\geq 0 ]
\]
where $B$ is an $m$-box. In an abuse of notation we also use $f_1,\dots,f_n$ and
$g_1,\dots,g_k$ for the functions denoted by those terms. 
They are functions in $P\times B\rightarrow \mathbb{R}$ with $P\times B\subseteq\mathbb{R}^{l+m}$, where $l$ is the
number of free variables of $S$. We assume that the order of the
arguments of those functions is the same as the order in which the respective variables
are quantified in the overall formula. Finally, we denote by $f:
P\times B\rightarrow \mathbb{R}^n$ the function defined by the components
$(f_1,\ldots, f_n)$ and by $g: P\times B\rightarrow \mathbb{R}^k$ the
function defined by the components $(g_1,\ldots,g_k)$. 



Disproving the formula is straight-forward using the information given by $\I(f)$ and $\I(g)$. However, in order to ensure that the computed over-approximation is not too big, instead of working with $\I(f)(B)$ and $\I(g)(B)$ we work with elements of a partition $S_r$ of $B$ into small enough pieces, where ``small enough'' is determined by the parameter $r$ (Line~\ref{l:part} of the algorithm SoEI below). For this, we will call a set of boxes $S_r$ a \emph{grid covering $B$} iff  $\bigcup_{B'\in S_r} B'=B$ and for every $B^1\in S_r$ and $B^2\in S_r$, $\mathsf{int}(B^1)\cap \mathsf{int}(B^2)=\emptyset$.

The core of the algorithm for proving the formula is a test whether a system of
equations $f=0$ has a solution in a bounded region. The test analyzes the
boundary of the region and exploits continuity to deduce existence of a zero in
the interior.  

In the one-dimensional case, a bounded region is simply a closed
interval. If $f$ has opposite sign on the two end-points of the interval,
the intermediate value theorem tells us, that $f$ has a solution in the
interior. Here $f$ has to be non-zero on both interval endpoints (since $f$ is
in general non-polynomial, we cannot verify that $f$ is zero on an interval
endpoint, we can only exclude this). In general, we use the notion of the degree from the field of differential topology~\cite{Milnor:97,Outerelo:2009}.
For a continuous function $f:\Omega\to\R^n$ where $\Omega$ is a bounded open set and $p\notin f(\partial\Omega)$, the
degree of $f$ with respect to $\Omega$ and a point $p\in\R^n$ is an integer denoted by
$\deg(f,\Omega,p)$. If $\deg(f,\Omega,p)\neq 0$ then the equation $f=p$ has a
solution in $\Omega$. Since the degree is a non-trivial mathematical notion, we defer more details on the degree to Section~\ref{degdef} below.

For ensuring that the test $\deg(f,\Omega,p)\neq 0$ eventually succeeds we have to make sure that $\Omega$ encloses a robust zero closely enough (the notion ``closely enough'' will be made precise in Sections~\ref{technical_section} and~\ref{sec:definiteness}). So, also in this case, we work with the partition $S_r$ of $B$, and we compute the degree of
the individual pieces. However, for ensuring that $f$ is non-zero on the
boundary of the pieces, we merge those pieces of the partition $S_r$ for which we cannot prove that (Line~\ref{l:merge}).

Checking the inequalities is straight-forward (Lines~\ref{l:decomp}
to~\ref{l:ineq_end}) using $\I(g)$. In order to ensure that the used boxes are
small enough, we undo the mergings before the check (Line~\ref{l:decomp}) and
apply $\I(g)$ to the individual boxes (Line~\ref{l:ineq_iv}).

The algorithm looks as follows:\\\web{}{\pagebreak[3]}
\\
\textbf{Algorithm} $\mathrm{SoEI}(S,P,r)$ \hfill // System of equations and inequalities
\begin{ntabbing}
\hspace*{0.6cm}\= \hskip 0.6cm \= \hskip 0.6cm \= \hskip 0.6cm \= \hskip 0.6cm \=\kill 
Let $B$ be the $m$-box for the domain of the quantified variables in $S$.\label{}\\
Let $S_r$ be a grid of boxes covering $B$ s.t. \label{l:part}\\ \> each grid element has width at most $r$. \\
\textbf{if} for every box $A\in S_r$ \+\label{l:unsat} \\ 
either $0\notin \I(f)(P\times A)$ or $\I(g)(P\times A)\cap [0,\infty)^k=\emptyset$ \textbf{then}\+\\
\textbf{return} $\{ \False \}$ $\quad$ \`// $f=0\,\wedge\,g\geq 0$ has no solution \-\-\label{} \\
\textbf{if} $m=n$ \textbf{then}\+\label{} \\
Merge all boxes in $S_r$ containing a common face $C$ s.t. $0\in \I(f)(P\times C)$.\label{l:merge} \\
Remove all grid elements in $S_r$ containing a face $C$ s.t. 
$C\subseteq\partial B$ and $0\in \I(f)(P\times C)$.\label{}\\
Let $p_0$ be an arbitrary element of $P$\label{}\\
\textbf{for} each grid element $A\in S_r$ \textbf{do}\+\label{}\\
\textbf{if} $\deg(f(p_0),A,0)\neq 0$ \textbf{then}  \` // equations hold, so
check inequalities \+\label{l:deg_test}\\ 
Let $S_r(A)$ be a grid of boxes covering $A$ of width at most $r$\label{l:decomp}\\ 
{\bf if} for all $E\in S_r(A)$, $\I(g)(P\times E)\subseteq (0,\infty)^k$ \textbf{then}\+\label{l:ineq_iv}\\
  \textbf{return} $\{ \True \}$\-\-\-\-\label{l:ineq_end}\\
\textbf{return} $\{ \True, \False\}$ $\,\,$\label{}\`// no test succeeded, or
$n>m$ 
\end{ntabbing}
%
%
%
Here we suppose that $f$ is present in the formula (i.e., $n>0$). The algorithm
can be easily adapted to the case, where it is not. In the case $n>m$, the
algorithm can simply return $\{ \True, \False\}$, see Lemma~\ref{m<n} below. An illustration of the algorithm
is shown in Figure~\ref{fig:SoEI}.

\begin{figure}[htb]
\begin{center}
  \includegraphics{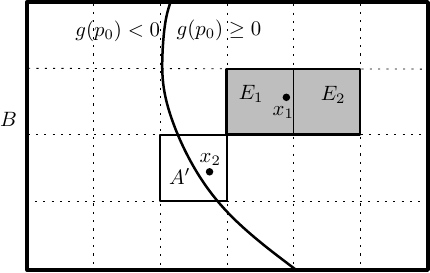}
  \caption{Illustration of the SoEI algorithm. Assume that  $f(p_0)$ has two zeros $x_1$ and $x_2$, and assume that $\{x\,|\,g(p_0,x)\geq 0\}$ is to the right of the thick curve. The algorithm creates a grid of boxes $S_r$ (line~\ref{l:part}). If each element of the grid provably does not contain a solution (check at line~\ref{l:unsat}), it returns $\{\False\}$. If this is not the case, then it checks whether $f$ is non-zero on all boundaries of grid elements~(line~\ref{l:merge}). In our example, $f$ is close to zero on the common boundary of $E$ and $E'$ and so the algorithm merges them into one grid element $A_1$.
  If $\deg(f(p_0), A_1, 0)\neq 0$, then it checks whether for each $p$, $g(p)\geq 0$ on
  $E_1$ and $E_2$ (line~\ref{l:ineq_iv}). If this is true as well,
  then $f=0\,\wedge\,g\geq 0$ is robustly satisfiable on $B$ and the algorithm terminates with $\{\True\}$. In case of another box $A_2$ containing a robust zero
  of $f$, the given partition may not provide enough evidence for the claim
  that $g(p, x_2)\geq 0$ for each $p$ (in which case the condition on line~\ref{l:ineq_iv} is not
  satisfied).}
\label{fig:SoEI}
\end{center}
\end{figure}

%

\subsection{Universal Quantifiers}
\label{Univ}

The recursive call corresponding to Case~(\ref{item:Buniv}) of class $\mathcal{B}$ looks as follows:\\
\\
\textbf{Algorithm} $\mathrm{Univ}(\forall x\in I\;.\; S,P,r)$:
\begin{tabbing}
\hspace*{0.8cm}\= \hskip 0.8cm \= \hskip 0.8cm \= \hskip 0.8cm \= \hskip 0.8cm \= \kill
Let $I_r$ be a grid of sub-intervals of $I$ of width at most $r$\\
\textbf{return} $\tilde{\bigwedge}_{ I'\in I_r} \mathrm{CheckSat}(S,P\times I',r)$
\end{tabbing}

Here, in the return statement, the symbol $\tilde{\bigwedge}$ denotes the lifting of
Boolean conjunction to sets of Boolean values: \[U \tilde{\wedge} V:= \{ u \wedge v\mid
u\in U, v\in V\}.\]

\subsection{Conjunctions and Disjunctions}
\label{cd}
Finally, the recursive call corresponding to Case~(\ref{item:Bconn}) of class $\mathcal{B}$ looks as follows:\\
\\
\textbf{Algorithm} $\mathrm{Conj}(S\wedge T,P,r)$
\begin{tabbing}
\hspace*{0.8cm}\= \hskip 0.8cm \= \hskip 0.8cm \= \hskip 0.8cm \= \hskip 0.8cm
\= \kill
\textbf{return} $\mathrm{CheckSat}(S,P_1,r) \:\tilde{\wedge}\: \mathrm{CheckSat}(T,P_2,r)$\\
\hspace*{4cm} where $P_1$ ($P_2$) is the projection of $P$\\
\hspace*{4.5cm} to the free variables of $S$ ($T$, respectively).\\ 
\end{tabbing}

Here, in the return statement, the symbol $\tilde{\wedge}$ again denotes the lifting of
conjunction to sets of Boolean values. The algorithm for disjunction is
completely analogous, replacing $\tilde{\wedge}$ with $\tilde{\vee}$ (and its lifting to sets of
Boolean values).

\section{Degree of a Continuous Function}
\label{degdef}

In this section we describe some basic properties of the topological degree.
We already mentioned in the introduction that in the one-dimensional case,
that is, for continuous functions $f: [a, b]\rightarrow \R$ with $f(a)\neq 0$ and $f(b)\neq 0$, the degree $\deg(f,
[a, b], 0)$ is $0$ iff $f(a)$ and $f(b)$ have the same sign, otherwise the
degree is either $-1$ or $1$, depending on whether the sign changes from
negative to positive or the other way round. Hence, in this case, the degree
gives the information given by the intermediate value theorem plus some
directional information.

In dimension two, the degree of a continuous function $f$ from a disc to
$\R^2$ is just the number of times $f(x)$ winds around the origin
counter-clockwise as $x$ follows
the circle forming the boundary of the disc (i.e., the ``winding
number''). Again, a non-zero winding number implies that $f$ has a zero. 


There are several ways of defining the degree in general. We work with an axiomatic
definition, that can be shown to be unique~\cite[Section I.5]{Outerelo:2009}. Let $\Omega\subseteq\R^n$ be open and
bounded, $f:\bar\Omega\to\R^n$ continuous, and $p\notin
f(\partial\Omega)$. Then $\deg(f,\Omega,p)$ is an integer satisfying the
following properties~\cite[Thm. 1.2.6.]{Cho:06}:
\begin{enumerate}
\item  For the identity function $I$, $\deg(I,\Omega,p)=1$ iff $p\in\Omega$ 
\item\label{item:degsol}  If $\deg (f,\Omega, p)\neq 0$ then $f(x)=p$ has a
  solution in $\Omega$ 
\item\label{item:deghomot}  If there is a continuous function (a ``homotopy'') $h: [0,1]\times\bar\Omega\to\R^n$ such that
 $h(0)= f$, $h(1)=g$  and
$p\notin h(t,\partial\Omega)$ for all $t$, then
$\deg(f,\Omega,p)=\deg(g,\Omega,p)$
\item\label{item:union} If $\Omega_1\cap\Omega_2=\emptyset$, $\Omega_1\subseteq\Omega$, $\Omega_2\subseteq\Omega$, and $p\notin f(\bar\Omega\setminus(\Omega_1\cup\Omega_2))$,
then $\deg(f,\Omega,p)=\deg(f,\Omega_1,p)+\deg(f,\Omega_2,p)$
\item $\deg(f,\Omega,p)$, as a function of $p$, is constant on any connected
  component of $\R^n\backslash f(\partial\Omega)$.
\end{enumerate}

The first axiom says that for the identity function, the degree counts the zeros
in $\Omega$ precisely. Due to the second axiom one can infer existence of a zero
from a non-zero degree.  Due to the third axiom, the degree is invariant under
continuous deformations of the function that do not cause any essential change
of the boundary information. From this it can be immediately seen that the degree depends only on the boundary
$\partial\Omega$: for two functions $f$ and $g$ that agree on $\partial\Omega$,
the function $h(t,x)= t f(x) + (1-t) g(x)$ is a homotopy between $f$ and $g$, as
needed by the premise of Axiom~\ref{item:deghomot}.

In the SoEI algorithm, we apply the degree to the triple $(f,A,0)$ where $A$ is not open but the closure of an open set (it is the union of boxes). 
For completeness, we define $\deg(f,A,p):=\deg(f,A^\circ,p)$ where $A^\circ$ is the interior of $A$, whenever $p\notin f(\partial A)$.

Many algorithms for computing the degree have been proposed~\cite{Erdelsky:73,Kearfott:79,Boult:86,Aberth:94,Franek:12b}. More specifically, if $B$ is an $n$-box, $f:B\to\R^n$ is interval computable, $0\notin f(\partial B)$ and an algorithm $\I(f)$ is given, then the degree $\deg(f,B,0)$ can be algorithmically computed. This justifies the use of
line 10 of algorithm SoEI in Section~\ref{SoE}.
%

The axioms defining the degree only argue about zeros, but not about
robustness. Still, a nonzero degree is closely connected with the existence of a robust root:
\begin{lemma}
\label{degnonzero}
Let $\bar\Omega\subseteq\R^n$ be a closed region with interior $\Omega$,
$f:\bar{\Omega}\to\R^n$ be continuous, $0\notin f(\partial\Omega)$ and let 
$\deg(f,\Omega,0)\neq 0$. 

Then any continuous $g: \bar\Omega\to\R^n$ such that $\|g-f\|<\min_{x\in \partial\Omega} |f|$ has a zero in $\Omega$. 
\end{lemma}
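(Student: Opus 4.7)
The plan is to use the homotopy invariance of the degree (Axiom~\ref{item:deghomot}) to transfer the nonzero degree from $f$ to $g$, and then apply Axiom~\ref{item:degsol} to conclude that $g$ has a zero in $\Omega$. The natural homotopy is the linear one,
\[
h(t,x) := (1-t)f(x) + t g(x), \qquad t\in[0,1],\ x\in\bar\Omega,
\]
which is continuous and satisfies $h(0,\cdot) = f$, $h(1,\cdot) = g$.

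The only nontrivial step is verifying the premise of Axiom~\ref{item:deghomot}, namely that $0 \notin h(t,\partial\Omega)$ for every $t \in [0,1]$. First I would note that the quantity $m := \min_{x\in\partial\Omega}|f(x)|$ is well-defined and strictly positive: $\partial\Omega = \bar\Omega\setminus\Omega$ is closed and contained in the bounded set $\bar\Omega$, hence compact, and $|f|$ is continuous and nonzero on it by hypothesis. Then for any $x\in\partial\Omega$ and any $t\in[0,1]$, the reverse triangle inequality gives
\[
|h(t,x)| = |f(x) + t(g(x)-f(x))| \geq |f(x)| - t\,|g(x)-f(x)| \geq m - \|g-f\| > 0,
\]
using the assumption $\|g-f\| < m$. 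Thus $h$ is an admissible homotopy, and Axiom~\ref{item:deghomot} yields $\deg(g,\Omega,0) = \deg(f,\Omega,0) \neq 0$.

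Finally, applying Axiom~\ref{item:degsol} to $g$ shows that the equation $g(x)=0$ has a solution in $\Omega$, which is exactly the conclusion of the lemma. The argument is essentially a one-paragraph calculation; the main (mild) obstacle is just checking that the linear homotopy stays away from $0$ on the boundary, which follows directly from the strict inequality $\|g-f\|<m$ combined with compactness of $\partial\Omega$ to ensure $m>0$.
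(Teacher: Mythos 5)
Your argument is correct and is essentially the same as the paper's own proof: both use the linear homotopy between $f$ and $g$, bound $|h(t,x)|$ from below on $\partial\Omega$ via the reverse triangle inequality and the hypothesis $\|g-f\|<\min_{\partial\Omega}|f|$, and then invoke homotopy invariance together with the solution property of the degree. Your additional remark that compactness of $\partial\Omega$ guarantees the minimum is attained and positive is a nice explicit touch the paper leaves implicit.
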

\begin{proof}
Let $\varepsilon<\min_{x\in\partial\Omega} |f|$. For any $g$ such that $||g-f||_{\bar\Omega}<\varepsilon$, we define a homotopy
$h(t,x)=t f(x)+(1-t)g(x)$ between $f$ and $g$. We see that for $x\in\partial\Omega$ and $t\in[0,1]$,
$$
|h(t,x)|=|t f(x)+(1-t)g(x)|=|f(x)+(1-t)(g(x)-f(x))|\geq |f(x)|-\varepsilon > 0
$$
so that $h(t,x)\neq 0$ for $x\in\partial\Omega$.
From Properties~\ref{item:degsol} and~\ref{item:deghomot}, we see that $g(x)=0$ has a
solution. \qed
\end{proof}

In particular, this implies that the sentence $\exists x\in B\,\cdot\,f(x)=0$ is not only true, but also robust,  whenever $\deg(f, B^\circ, 0)\neq 0$. The upper bound on the distance between $f$ and $g$ results in an $\varepsilon$ such that this sentence is $\varepsilon$-robust. This allows extensions of the algorithms of this paper to return such an $\varepsilon$, which may be useful in applications.

For proving definiteness, we will need a partial converse of this statement which will be given by Theorem~\ref{robustzero} in Section~\ref{technical_section}.

\section{Proof of Correctness}
\label{sec:correctness}

We will prove here that the algorithm $\mathrm{CheckSat}$ proposed in Section~\ref{problem}
fulfills the first part of its specification, that is: it always returns a
correct result. The proof will again be divided into the cases constituting the
definition of class $\mathcal{B}$, from which correctness of the overall,
recursive algorithm follows by induction. 

Before that, we prove some technical results on the relationship between the class
$\mathcal{B}$ and robustness. 

Note that, in this section, the assumption that our symbol set contains addition and multiplication, is not used. Hence the algorithm is correct even if we do not have those symbols in the symbol set.

\subsection{Robustness and the Class $\mathcal{B}$}

First we prove a lemma on the effect of substitution of nearby constants on robustness.

\begin{lemma}
\label{lem:robust_neighborhood}
  Let $S$ be a formula in $l$ free variables, $P$ an $l$-box bounding the free variables of $S$ and $p_0$ be a point in the interior of $P$. If $\Subst{S}{p}{p_0}$ is a robust sentence, 
  then there exists a neighborhood $U\subseteq \R^l$ of
  $p_0$, such that for all $u\in U$, $\Subst{S}{p}{u}$ is robust and has the same truth value as $\Subst{S}{p}{p_0}$.
\end{lemma}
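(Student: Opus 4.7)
The plan is to exploit uniform continuity of the interval-computable functions together with the $\varepsilon$-robustness of $S[p\leftarrow p_0]$. Let $\varepsilon>0$ be such that $S[p\leftarrow p_0]$ is $\varepsilon$-robust. Enumerate the function terms of $S$ as $t_1,\dots,t_q$, denoting functions $f_1,\dots,f_q$ on the respective boxes $P\times\Omega_1,\dots,P\times\Omega_q$ (where $\Omega_i$ is determined by the quantification of the variables bound above $t_i$). Since $p_0$ is in the interior of $P$, one can choose an open neighborhood $U\subseteq P$ of $p_0$ on which the substitutions remain inside the domain of each $f_i$.

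Next I would show that, as $u\to p_0$, the partially evaluated functions $f_i(u,\cdot)$ converge to $f_i(p_0,\cdot)$ uniformly on $\Omega_i$. This is exactly uniform continuity of $f_i$ on the compact box $P\times\Omega_i$, which is guaranteed by interval computability. Hence there is $\delta>0$ such that for every $u\in U$ with $|u-p_0|<\delta$ and every $i$,
\[
\|f_i(u,\cdot)-f_i(p_0,\cdot)\|_{\Omega_i}<\varepsilon/2.
\]
Since $S[p\leftarrow u]$ and $S[p\leftarrow p_0]$ obviously have the same structure in the sense of Definition~\ref{dist} (the substitution only changes the constants appearing inside terms), this yields $d(S[p\leftarrow u],S[p\leftarrow p_0])<\varepsilon/2$. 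By $\varepsilon$-robustness of $S[p\leftarrow p_0]$, the sentence $S[p\leftarrow u]$ therefore has the same truth value as $S[p\leftarrow p_0]$.

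Finally, I would argue robustness of $S[p\leftarrow u]$ by the triangle inequality. For any sentence $T$ with $d(T,S[p\leftarrow u])<\varepsilon/2$, we have
\[
d(T,S[p\leftarrow p_0])\leq d(T,S[p\leftarrow u])+d(S[p\leftarrow u],S[p\leftarrow p_0])<\varepsilon,
\]
so $T$ has the same truth value as $S[p\leftarrow p_0]$, which equals the truth value of $S[p\leftarrow u]$. Hence $S[p\leftarrow u]$ is $(\varepsilon/2)$-robust. Shrinking $U$ to the open ball of radius $\delta$ around $p_0$ (intersected with the original $U$) gives the desired neighborhood.

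The only mildly delicate point is verifying that the distance $d$ is really well-defined between $S[p\leftarrow u]$ and $S[p\leftarrow p_0]$; this reduces to checking that substitution of a constant for a free variable preserves Boolean structure, quantification structure, and predicate symbols — which is immediate from Definition~\ref{dist}. The triangle inequality used above also deserves an explicit mention, but it follows directly from the definition of $d$ as a maximum of supremum norms over corresponding function slots.
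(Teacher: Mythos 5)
Your proof is correct and follows essentially the same route as the paper's: uniform continuity of the interval-computable functions on the compact box gives $d(S[p\leftarrow u],S[p\leftarrow p_0])<\varepsilon/2$ for $u$ near $p_0$, and the triangle inequality for $d$ then transfers both the truth value and ($\varepsilon/2$-)robustness to $S[p\leftarrow u]$. The extra remarks about preservation of structure under substitution and the well-definedness of $d$ are fine but not needed beyond what the paper already takes for granted.
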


\begin{proof}
  Assume that $\Subst{S}{p}{p_0}$ is robust. Then there is an $\varepsilon>0$ such that for all formulas $T$ with $d(\Subst{S}{p}{p_0}, T)<\varepsilon$, $T$ and $\Subst{S}{p}{p_0}$ have the same truth value.
 Since all functions in $S$ are interval-computable, they are uniformly continuous.
Hence for $\varepsilon>0$, there exists a number $\delta>0$ such that for each function $f$ occurring in $S$ it holds that $|f(x,u)-f(x,p_0)|<\varepsilon/2$ whenever $u,p_0\in P$ and $|u-p_0|<\delta$. In other words, there exists a $\delta>0$ s.t. for all $u\in P$ with $|u-p_0|<\delta$, $d(\Subst{S}{p}{p_0}, \Subst{S}{p}{u})<\varepsilon/2$, and hence $\Subst{S}{p}{p_0}$ and $\Subst{S}{p}{u}$ have equal truth value. We claim that $\Subst{S}{p}{u}$ is 
also robust: this is because if $T'$ is any sentence with $d(\Subst{S}{p}{u}, T')<\epsilon/2$, then $d(T',\Subst{S}{p}{p_0})<\epsilon$ and
$T'$ has still the same truth value as $\Subst{S}{p}{p_0}$. So the neighborhood $U:=\{ u\in P \mid |u-p_0|<\delta \}$ of $p_0$ satisfies the required properties.  \qed
\end{proof}


Due to the syntactical structure of formulas in the class $\mathcal{B}$ we automatically have robustness in the false case:

\begin{lemma}
\label{robustfalse}
Let $S$ be a sentence from $\mathcal{B}$. If $S$ is false, then it is robustly false.
\end{lemma}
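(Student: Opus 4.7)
The plan is to prove Lemma~\ref{robustfalse} by structural induction on $S\in\mathcal{B}$, strengthened so that free variables are handled: for every $S\in\mathcal{B}$ with free variables $p$ ranging in a box $P$ and every $p_0\in P$, if $\Subst{S}{p}{p_0}$ is false then there exist $\varepsilon>0$ and an open neighborhood $N\subseteq P$ of $p_0$ such that for every $\varepsilon$-perturbation $S'$ of $S$ and every $p_0'\in N$, the sentence $\Subst{S'}{p}{p_0'}$ is false. The lemma itself is the special case $P=\{()\}$.

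For the base case~(a), $S\equiv\exists x\in B\;.\;[f_1=0\wedge\ldots\wedge f_n=0\wedge g_1\geq 0\wedge\ldots\wedge g_k\geq 0]$, I would introduce the continuous ``violation'' function $\phi(p,x):=\max\!\big(\max_i|f_i(p,x)|,\,\max_j\max(0,-g_j(p,x))\big)$. Since $\Subst{S}{p}{p_0}$ is false, $\phi(p_0,\cdot)$ is strictly positive on the compact box $B$, so it attains a positive minimum $\varepsilon_0>0$. Joint continuity of $\phi$ and compactness of $B$ make $p\mapsto\min_{x\in B}\phi(p,x)$ continuous, so some neighborhood $N$ of $p_0$ keeps this minimum above $\varepsilon_0/2$. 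A direct triangle-inequality computation then shows that for any perturbation of the $f_i,g_j$ of size less than $\varepsilon_0/2$ and any $p_0'\in N$, the corresponding violation function stays strictly positive on $B$, so the inner conjunction cannot be satisfied and the perturbed sentence is false.

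For case~(b), $S\equiv\forall x\in I\;.\;U$: falsity of $\Subst{S}{p}{p_0}$ yields a witness $x_0\in I$ with $\Subst{U}{(p,x)}{(p_0,x_0)}$ false, and applying the strengthened inductive hypothesis to $U$—whose free-variable block is now $(p,x)$ over $P\times I$—produces an $\varepsilon>0$ and a product neighborhood $N_P\times N_I$ of $(p_0,x_0)$. Keeping $x_0\in N_I$ as a fixed witness, any $\varepsilon$-perturbation $U'$ of $U$ is false at $(p_0',x_0)$ for every $p_0'\in N_P$, so $\forall x\in I\;.\;U'$ is false at every such $p_0'$. For case~(c), a conjunction $U_1\wedge U_2$ is false at $p_0$ because at least one conjunct is false there, and the inductive hypothesis applied to that conjunct supplies the required $\varepsilon$ and $N$ directly; a disjunction $U_1\vee U_2$ is false only when both disjuncts are, in which case one takes the minimum of the two $\varepsilon$'s and the intersection of the two neighborhoods.

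The only real obstacle is formulating the strengthened induction hypothesis cleanly, so that the bound variable of case~(b) can be absorbed into the free-variable block of the recursive instance and the extracted witness $x_0$ can be reused uniformly over a neighborhood of $p_0$; once that is set up, each inductive step reduces to compactness plus a triangle-inequality argument, and the whole proof crucially exploits the \emph{closed} character of every atomic condition in $\mathcal{B}$ (both $f_i=0$ and $g_j\geq 0$ cut out closed subsets of $B$), which is exactly what makes their non-satisfaction on a compact domain uniformly robust.
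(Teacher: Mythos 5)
Your proof is correct and follows essentially the same route as the paper's: structural induction on $\mathcal{B}$, with the base case resting on compactness of $B$ (your single ``violation function'' just packages the paper's three sub-cases---no zero of $f$, no point with $g\geq 0$, or two disjoint compact sets at positive distance---into one uniform positive-minimum argument) and the universal case resting on extracting a witness $x_0$ and invoking the inductive hypothesis on the substitution instance. Your strengthened induction hypothesis carrying free variables and a parameter neighborhood is a slightly cleaner formalization than the paper's, which states the lemma for sentences only and treats $\Subst{S}{x}{x_0}$ as a sentence of the same structural complexity, but the underlying argument is identical.
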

\begin{proof}
We proceed by induction, following the cases of class $\mathcal{B}$. 
Let $S$ be the sentence $\exists x\in B \;.\; f=0\,\wedge g\geq 0$, where $f=0$,
and $g\geq 0$ are the usual short-cuts for conjunctions of equalities, and
inequalities, respectively. Let $S$ be false. If $f=0$ has no solution in $B$, then
$||f||>\varepsilon$ for some $\varepsilon>0$ and $||\tilde{f}||>0$ for small enough perturbations $\tilde{f}$ of $f$.
Similarly, if $g<0$ on $B$, then the same is true for small enough perturbations of $g$.
Finally, if 
$f^{-1}(\{0\})\cap B$ and $g^{-1}[0,\infty)^k\cap B$ are both nonempty, then they are compact and disjoint, 
which implies that they have a positive distance. For small perturbations $\tilde{f}, \tilde{g}$
of $f$ and $g$, $\tilde{f}^{-1}\{0\}$ and $\tilde{g}^{-1}[0,\infty)^k$ are still disjoint, which implies
that $S$ is robustly false.

Further, assume that $I\subseteq\R$ is a compact interval and $\forall x\in I\;.\; S$ is a false sentence. 
Then there exists an $x_0\in I$ such that $\Subst{S}{x}{x_0}$ is false. From the induction hypothesis, it is robustly false. 
Let $\varepsilon>0$ be such that $\Subst{S}{x}{x_0}$ is $\varepsilon$-robust and let 
$S'$ be a formula such that
$d(\forall x\,.\,S',\,\forall x\,.\, S)<\varepsilon$. Then $d(\Subst{S'}{x}{x_0}, \Subst{S}{x}{x_0})<\varepsilon$
and $\Subst{S'}{x}{x_0}$ is false. So, $\forall x\in I\;.\; S'$ is false and it follows that
$\forall x\in I\;.\; S$ is robustly false. 

Finally, let $U$ and $V$ be sentences in $\mathcal{B}$ and $U\wedge V$ be false. Then either $U$ or $V$ is false
and the induction hypothesis says that it is robustly false. So, $U \wedge V$ is robustly false.
Similarly, if $U\vee V$ is false, then both $U$ and $V$ are robustly false and
$U\vee V$ is robustly false. \qed
\end{proof}

In the case of this lemma, the proof goes through for any number of equalities, independent of the restriction that class $\mathcal{B}$ puts on this number. Further, the last lemma remains true even if we leave the set of interval-computable functions and allow arbitrary, small enough continuous
perturbations. Moreover, it holds even if all functions in the original formula $S$ are only continuous and not interval computable.  We only have used continuity of the perturbations and the proof does not use any algorithmic input.

Universal quantification preserves robustness in the following sense:
\begin{lemma}
\label{robusttrue}
Let $S$ be a formula containing a free variable $x$ and let $I$ be a bounded closed interval. Then the sentence $\Subst{S}{x}{x_0}$ is robustly true for all $x_0$ in $I$ if and only if the sentence $\forall x\in I \;.\; S$ is robustly true.
\end{lemma}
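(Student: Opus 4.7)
The plan is to prove the two directions separately, with the forward direction relying on the compactness of $I$ together with the neighborhood version of robustness proved in Lemma~\ref{lem:robust_neighborhood}, and the backward direction relying on a lifting construction that converts perturbations of $\Subst{S}{x}{x_0}$ back into perturbations of $S$ of the same size. The truth-value parts are immediate in both directions: if all substitutions are (robustly) true then $\forall x\in I\;.\; S$ is true, and conversely if $\forall x\in I\;.\; S$ is (robustly) true then each $\Subst{S}{x}{x_0}$ is true.

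For the forward direction, assume each $\Subst{S}{x}{x_0}$ with $x_0\in I$ is robustly true. Apply Lemma~\ref{lem:robust_neighborhood} at every $x_0\in I$ to obtain an open neighborhood $U(x_0)\subseteq\R$ and a positive modulus $\varepsilon(x_0)$ such that $\Subst{S}{x}{u}$ is $\varepsilon(x_0)$-robust and true for all $u\in U(x_0)$. Since $I$ is compact, finitely many such neighborhoods $U(x_0^{(1)}),\dots,U(x_0^{(k)})$ cover $I$; let $\varepsilon^\ast:=\min_i\varepsilon(x_0^{(i)})>0$. Now take any sentence $S'$ with $d(\forall x\in I\;.\; S',\,\forall x\in I\;.\; S)<\varepsilon^\ast$. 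By Definition~\ref{dist}, this amounts to replacing each function $f_i$ occurring in $S$ by $f_i'$ with $\|f_i-f_i'\|_{\Omega\times I}<\varepsilon^\ast$, so for every $x_0\in I$ the functions $f_i(\cdot,x_0)$ and $f_i'(\cdot,x_0)$ differ by less than $\varepsilon^\ast$ on the respective domain, i.e., $d(\Subst{S}{x}{x_0},\Subst{S'}{x}{x_0})<\varepsilon^\ast$. Picking an $i$ with $x_0\in U(x_0^{(i)})$ we get that $\Subst{S'}{x}{x_0}$ has the same truth value as $\Subst{S}{x}{x_0}$, namely true. Hence $\forall x\in I\;.\; S'$ is true, establishing $\varepsilon^\ast$-robustness.

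For the backward direction, assume $\forall x\in I\;.\; S$ is $\varepsilon$-robust and true. Fix any $x_0\in I$; then $\Subst{S}{x}{x_0}$ is true. The main step is the lifting construction: given any perturbation $S''$ of $\Subst{S}{x}{x_0}$ of size less than $\varepsilon$, in which each function $h_i(y)=f_i(y,x_0)$ (one per term occurrence in $S$) is replaced by $h_i'(y)$ with $\|h_i-h_i'\|<\varepsilon$, define the lifted perturbation by
\[
f_i'(y,x)\;:=\;f_i(y,x)+h_i'(y)-f_i(y,x_0).
\]
Then $\|f_i'-f_i\|_{\Omega\times I}=\|h_i'-h_i\|_\Omega<\varepsilon$, so the corresponding sentence $S'$ satisfies $d(S',S)<\varepsilon$, and by construction $\Subst{S'}{x}{x_0}=S''$ literally. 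By the assumed $\varepsilon$-robustness of $\forall x\in I\;.\; S$, the sentence $\forall x\in I\;.\; S'$ is true, hence in particular $S''=\Subst{S'}{x}{x_0}$ is true. This shows $\Subst{S}{x}{x_0}$ is $\varepsilon$-robust.

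The main obstacle I anticipate is justifying the lifting construction within the formal framework of the language: strictly speaking, formulas are built from term symbols and the lifted function $f_i'$ must still be denotable as an admissible interval-computable term and a rational substitution point may also be required. This is, however, a cosmetic issue, since uniform continuity (and hence interval computability under a suitably extended symbol set) is preserved by the construction, and the same argument can equivalently be phrased contrapositively: a sequence of perturbations of $\Subst{S}{x}{x_0}$ witnessing non-robustness lifts to a sequence of perturbations of $S$ of the same shrinking size, contradicting $\varepsilon$-robustness of $\forall x\in I\;.\; S$.
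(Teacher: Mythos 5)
Your proof is correct and follows essentially the same route as the paper: your backward direction is exactly the paper's lifting construction (the paper perturbs $S$ to $U$ by replacing $f$ with $f+f_X-\Subst{f}{x}{x_0}$, which is your $f_i'(y,x)=f_i(y,x)+h_i'(y)-f_i(y,x_0)$), and your forward direction is, like the paper's, a compactness argument yielding a uniform robustness modulus over $I$ (the paper obtains it from continuity and positivity of the pointwise modulus $\mu(x_0)$, you from Lemma~\ref{lem:robust_neighborhood} together with a finite subcover), after which the concluding step is identical. The only point worth noting is that you invoke a modulus $\varepsilon(x_0)$ valid uniformly on all of $U(x_0)$, which is slightly stronger than the literal statement of Lemma~\ref{lem:robust_neighborhood} but is precisely what its proof establishes (it gives $\varepsilon/2$-robustness throughout the neighborhood), so the step is sound.
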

\begin{proof}  
Let $\forall x\in I\;.\; S$ be $\varepsilon$-robust and true, and let $x_0$ be an arbitrary, but fixed element of the interval $I$. Then clearly $\Subst{S}{x}{x_0}$ is true. For showing that it is also robust, we assume an arbitrary, but fixed sentence $X$ such that
$d(X,\Subst{S}{x}{x_0})=:\varepsilon'<\varepsilon$ and prove that $X$ is true, as well. 
Let $f_X$, resp. $g_X$ be the functions that occur in $X$ on the places corresponding to $\Subst{S}{x}{x_0}$; this is well-defined, because $X$
and $\Subst{S}{x}{x_0}$ have the same structure.
Consider the formula $U$ that is equal to $S$ except for the fact that every equality of the form $f=0$ is replaced by $f+f_X-\Subst{f}{x}{x_0}=0$ and $g\geq 0$ is replaced by $g+g_X-\Subst{g}{x}{x_0}\geq 0$.
The distance $d(\forall x\in I\;.\; S, \forall x\in I\;.\; U)=\varepsilon'<\varepsilon$ and so, due to $\varepsilon$-robustness of $\forall x\in I\;.\; S$, $\forall x\in I\;.\;U$ is true. In particular, $\Subst{U}{x}{x_0}\equiv X$ is true and it follows that $\Subst{S}{x}{x_0}$ is $\varepsilon$-robust and true.

For the converse, assume that for all $x_0\in I$, $\Subst{S}{x}{x_0}$ is robustly true. Let 
$$\mu(x_0):=\sup \{\mu>0;\,\text{$\Subst{S}{x}{x_0}$ is $\mu$-robust}\}.$$
Clearly, $\mu$ is a continuous function in $x_0$ and has strict lower bound $m>0$ on the compact interval $I$.
So, for each $x_0\in I$, $\Subst{S}{x}{x_0}$ is $m$-robust. If $d(\forall x\in I\;.\; S, \forall x\in I\;.\; U)<m$,
then for each $x_0\in I$, $d(\Subst{S}{x}{x_0}, \Subst{U}{x}{x_0})<m$ and $\Subst{U}{x}{x_0}$ is true. So, $\forall x\in I\;.\; U$
is true and $\forall x\in I\;.\; S$ is robustly true. \qed
\end{proof}
Again, the last lemma remains true in the stronger formulation where we consider a statement robustly true iff any small enough continuous perturbation of its function symbols is true---that is, perturbation by functions that do not necessarily correspond to terms formed from the given set of function symbols or functions that are not necessarily interval computable.

\subsection{System of Equations and Inequalities}

For proving correctness of the algorithm $\mathrm{CheckSat}$ we again start with the case $(\ref{item:Bbasic})$ of class $\mathcal{B}$, that is, a formula $S$ of the form 
\[
\exists x\in B \;.\; [f_1=0\, \wedge\, f_2=0\,\wedge\, \ldots\wedge\,
f_{n}=0\,\wedge\, g_1\geq 0\,\wedge\, g_2\geq 0\,\wedge\ldots\wedge\,
g_{k}\geq 0 ]
\]
where $B$ is an $m$-box. Assuming that the formula has $l$ free variables, we
again denote by $f: \mathbb{R}^{l+m}\rightarrow \mathbb{R}^n$ the function defined by the
components $(f_1,\ldots, f_n)$ and $g:\mathbb{R}^{l+m }\rightarrow \mathbb{R}^k$ the function defined by the components
$(g_1,\ldots,g_k)$. 

\begin{theorem}
  The algorithm $\mathrm{SoEI}(S,P,r)$ fulfills the correctness property of  the specification of $\mathrm{CheckSat}(S, P, r)$ (defined at the beginning of Section~\ref{problem}).
\end{theorem}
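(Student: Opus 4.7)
The plan is to handle the three possible outputs separately, noting that $\{\True,\False\}$ is trivially correct, and to exploit the axioms of degree together with the enclosure property of $\I$. I would proceed by induction on the structure of $\mathcal{B}$, but since this theorem only concerns the base case (\ref{item:Bbasic}), I will focus on showing that when $\mathrm{SoEI}$ returns a singleton, its verdict is robustly justified uniformly over $p_0 \in P$.

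For the case $\{\False\}$, I would argue that the enclosure property of $\I$ guarantees that whenever $0\notin \I(f)(P\times A)$ we have $f(p,x)\neq 0$ throughout $P\times A$, and similarly $\I(g)(P\times A)\cap[0,\infty)^k=\emptyset$ forces some inequality to fail strictly on $P\times A$. Since the grid $S_r$ covers $B$, every $p_0\in P$ makes $\Subst{S}{p}{p_0}$ false. Lemma~\ref{robustfalse} then upgrades this to robust falsity without further work.

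For the case $\{\True\}$, which arises only when $m=n$, fix the point $p_0\in P$ chosen by the algorithm and let $A$ denote the (possibly merged) grid element that passed the tests. The merging and boundary-removal steps leave $A$ with the property that every face $C\subseteq\partial A$ satisfies $0\notin\I(f)(P\times C)$, so $f(p,x)\neq 0$ for all $(p,x)\in P\times\partial A$. This is the crucial uniform non-vanishing that enables the degree step. Since $P$ is a box (hence path-connected), for any $p\in P$ the straight-line homotopy $h(t,x):=f((1-t)p_0+tp,\,x)$ avoids $0$ on $\partial A$ for all $t\in[0,1]$; by Axiom~\ref{item:deghomot}, $\deg(f(p),A,0)=\deg(f(p_0),A,0)\neq 0$ for every $p\in P$.

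To conclude robust truth at an arbitrary $p_0'\in P$, I would invoke Lemma~\ref{degnonzero} with $\Omega=A^\circ$ and $\alpha:=\min_{x\in\partial A}|f(p_0',x)|>0$ to get that every $\alpha$-perturbation of $f(p_0',\cdot)$ still has a zero in $A$. The test $\I(g)(P\times E)\subseteq(0,\infty)^k$ on each sub-box $E$ of $A$ yields a compactness-based lower bound $\beta>0$ for each $g_j(p_0',\cdot)$ on $A$. Taking $\varepsilon:=\tfrac12\min(\alpha,\beta)$, any sentence at distance $<\varepsilon$ from $\Subst{S}{p}{p_0'}$ in the sense of Definition~\ref{dist} is realised by a common witness in $A$, which shows $\varepsilon$-robust truth. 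The main obstacle I anticipate is this uniformity in $p$: the algorithm computes the degree only at one $p_0$, and the proof must justify transferring this to every $p_0'\in P$, which is precisely the role played by the merging step guaranteeing $0\notin\I(f)(P\times C)$ on the boundary and by the homotopy invariance of the degree.
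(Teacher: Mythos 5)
Your proof is correct and follows essentially the same route as the paper: the $\{\False\}$ case via the enclosure property of $\I$ plus Lemma~\ref{robustfalse}, and the $\{\True\}$ case via the observation that merging guarantees $0\notin f(P\times\partial A)$, so that a path (straight-line, in your case) in $P$ induces a homotopy transferring the non-zero degree from $p_0$ to every $p\in P$, after which Lemma~\ref{degnonzero} and the uniform positivity of $g$ on $A$ yield robust truth. Your explicit construction of the robustness margin $\varepsilon=\tfrac12\min(\alpha,\beta)$ is slightly more detailed than the paper's, but not a different argument.
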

\begin{proof}
 
Assume first that the algorithm terminates with a negative result
$\{\False\}$. It follows directly from Definition~\ref{def:intcomp}, that 
the input sentence $\Subst{S}{p}{p_0}$ is false for any $p_0\in P$. Lemma~\ref{robustfalse} implies robustness.

Now assume that it terminates with a positive result $\{ \True \}$. Then there exists
a point $p_0\in P\subseteq \mathbb{R}^l$ and a connected grid element $A\subseteq \mathbb{R}^{m}$ such that $\deg(f(p_0),A,0)\neq 0$.
For any $p\in P$, $p$ and
$p_0$ can be connected by a curve $\phi: [0,1]\rightarrow P$, and $f\circ \phi$ is then 
a~homotopy between $f(p_0)$ and $f(p)$ nowhere zero on $\partial A$.  So,
$\deg(f(p),A,0)\neq 0$ and it follows from Lemma~\ref{degnonzero} that
$f(p)=0$ has a robust solution in $A$.  Moreover,
the successful check whether for all $E\in S_r(A)$, $\I(g)(P\times E)\subseteq (0,\infty)^k$
implies that for some small enough $d>0$, for all $p\in P$, $x\in A$ and $j=1,\ldots, k$, $g_j(p, x)>d$. 
It follows that the input formula is robustly true for all parameter values in $P$. \qed
\end{proof}
\subsection{Universal Quantifiers}

\begin{theorem}
Let $S$ be a formula containing free variables $p$. Let $P$ be an $l$-box and $I$ a closed interval.
Assume that an algorithm $\mathrm{CheckSat}$ fulfilling the correctness property
is given. Then also the algorithm $\mathrm{Univ}(\forall x\!\in\! I \;.\;
S,P,r)$ fulfills the correctness property.
\end{theorem}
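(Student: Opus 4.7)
The plan is to reduce the claim to the behavior of $\tilde{\bigwedge}$ together with the two robustness lemmas already established (Lemmas~\ref{robustfalse} and~\ref{robusttrue}). First I would spell out the semantics of the iterated lifted conjunction: for a family of nonempty subsets $R_1,\dots,R_k$ of $\{\True,\False\}$, the set $\tilde{\bigwedge}_i R_i$ equals $\{\True\}$ precisely when every $R_i=\{\True\}$, and it equals $\{\False\}$ precisely when at least one $R_i=\{\False\}$ (any remaining factors being either $\{\False\}$ or $\{\True,\False\}$ do no harm, since $\False\wedge b=\False$ for all $b$). Hence only two cases need to be analyzed.

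In the True case, assume $\mathrm{Univ}(\forall x\!\in\!I\;.\;S,P,r)$ returns $\{\True\}$. Then every recursive call $\mathrm{CheckSat}(S,P\times I',r)$ with $I'\in I_r$ returned $\{\True\}$. By the correctness property of $\mathrm{CheckSat}$ (the induction hypothesis), for each such $I'$ and each $(p_0,x_0)\in P\times I'$ the sentence $\Subst{S}{(p,x)}{(p_0,x_0)}$ is robustly true. Since $\bigcup_{I'\in I_r}I' = I$, this holds for all $(p_0,x_0)\in P\times I$. Fixing any $p_0\in P$, the sentence $\Subst{\Subst{S}{p}{p_0}}{x}{x_0}$ is robustly true for every $x_0\in I$; Lemma~\ref{robusttrue} then gives that $\forall x\in I\;.\;\Subst{S}{p}{p_0}$, which is exactly $\Subst{(\forall x\in I\;.\;S)}{p}{p_0}$, is robustly true.

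In the False case, assume the algorithm returns $\{\False\}$. By the semantics of $\tilde{\bigwedge}$, at least one sub-interval $I'\in I_r$ yields $\mathrm{CheckSat}(S,P\times I',r)=\{\False\}$. Correctness of $\mathrm{CheckSat}$ then says that for every $(p_0,x_0)\in P\times I'$, $\Subst{S}{(p,x)}{(p_0,x_0)}$ is robustly false, and a fortiori false. Fixing any $p_0\in P$ and picking any $x_0\in I'\subseteq I$ therefore witnesses that $\forall x\in I\;.\;\Subst{S}{p}{p_0}$ is false. Lemma~\ref{robustfalse}, applied to the sentence $\forall x\in I\;.\;\Subst{S}{p}{p_0}$ which belongs to $\mathcal{B}$, upgrades this to robust falsity.

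The only subtle point I anticipate is the case analysis of $\tilde{\bigwedge}$ itself, in particular verifying that the presence of indefinite returns $\{\True,\False\}$ among the recursive calls does not corrupt either conclusion; everything else then reduces to direct applications of the induction hypothesis and the two quoted lemmas, and no new topological or analytic content is needed.
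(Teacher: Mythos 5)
Your proof is correct and follows essentially the same route as the paper's: the False case uses one failing sub-interval plus Lemma~\ref{robustfalse}, and the True case uses all sub-intervals succeeding plus Lemma~\ref{robusttrue}. The only difference is that you make the semantics of $\tilde{\bigwedge}$ (and the harmlessness of indefinite results $\{\True,\False\}$) explicit, which the paper leaves implicit; this is a welcome clarification but not a different argument.
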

\begin{proof}
  If $\mathrm{Univ}(\forall x\!\in\! I \;.\; S,P,r)$
  returns $\{\False\}$, then $\mathrm{CheckSat}(S,P\times I',r')$ returned
   $\{\False\}$ for some  $I'\in I_r$ and it follows that for all $p_0\in P$ and
  $x_0\in I'$, $\Subst{\Subst{S}{p}{p_0}}{x}{x_0}$ is robustly false. Then
  $\forall x\in I \;.\; \Subst{S}{p}{p_0}$ is false for each $p_0\in P$ and it
  follows from Lemma~\ref{robustfalse} that it is robustly false.

If the algorithm returns $\{\True\}$, then $\mathrm{CheckSat}(S,P\times
I',r')$ returned $\{\True\}$ for all $I'\in I_r$ and the sentence
$\Subst{\Subst{S}{p}{p_0}}{x}{x_0}$ is robustly true for all $x_0\in I$ and
$p_0\in P$. It follows from Lemma~\ref{robusttrue} that for each $p_0\in P$,
$\forall x\in I\;.\; \Subst{S}{p}{p_0}$ is robustly true, so the result is
correct. \qed
\end{proof}

\subsection{Conjunction and Disjunction}

\begin{theorem}
  Let $S$ and $T$ be two formulas in $\mathcal{B}$  and assume that $\mathrm{CheckSat}$ fulfills the correctness property both when applied to $S$, and when applied to $T$. Then  $\mathrm{Conj}(S\wedge T, P, r)$ also fulfills the correctness property.
\end{theorem}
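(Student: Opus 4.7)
The plan is to verify the two halves of the correctness specification (outputs $\{\True\}$ and $\{\False\}$) separately, exploiting the pointwise semantics of the lifted connective $\tilde{\wedge}$. First I would observe that, by the definition $U\tilde{\wedge}V=\{u\wedge v\mid u\in U,\,v\in V\}$, the output set of $\mathrm{Conj}$ is $\{\True\}$ iff both recursive calls returned $\{\True\}$, and contains $\False$ iff at least one of them returned $\{\False\}$ (and in that latter case the output is exactly $\{\False\}$).

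For the $\{\True\}$ case, both $\mathrm{CheckSat}(S,P_1,r)$ and $\mathrm{CheckSat}(T,P_2,r)$ return $\{\True\}$. Fix an arbitrary $p_0\in P$ and let $p_{0,S}\in P_1$, $p_{0,T}\in P_2$ denote its projections onto the free variables of $S$ and $T$. The inductive correctness hypothesis then gives that $\Subst{S}{p_S}{p_{0,S}}$ and $\Subst{T}{p_T}{p_{0,T}}$ are each robustly true, with respective robustness thresholds $\varepsilon_S,\varepsilon_T>0$. Set $\varepsilon:=\min(\varepsilon_S,\varepsilon_T)$. Any sentence at distance less than $\varepsilon$ from $\Subst{(S\wedge T)}{p}{p_0}$ must, by Definition~\ref{dist}, again be a conjunction $S'\wedge T'$ with the same Boolean and quantifier structure, whose distances to $\Subst{S}{p_S}{p_{0,S}}$ and $\Subst{T}{p_T}{p_{0,T}}$ are each bounded by that same quantity (the overall distance is the max of the individual distances). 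Both $S'$ and $T'$ are therefore true, hence so is $S'\wedge T'$. This gives $\varepsilon$-robust truth of $\Subst{(S\wedge T)}{p}{p_0}$.

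For the $\{\False\}$ case, without loss of generality $\mathrm{CheckSat}(S,P_1,r)$ returned $\{\False\}$. By the correctness hypothesis, for every $p_1\in P_1$ the sentence $\Subst{S}{p_S}{p_1}$ is (robustly) false. Consequently, for every $p_0\in P$, $\Subst{(S\wedge T)}{p}{p_0}$ is false. Because $S\wedge T\in\mathcal{B}$, Lemma~\ref{robustfalse} upgrades this to robust falsity, as required. The disjunction case is dual and uses the same argument scheme: the lifted $\tilde{\vee}$ outputs $\{\True\}$ iff at least one call returned $\{\True\}$ (so truth of one disjunct and its robustness suffices to make $U\vee V$ robustly true), and $\{\False\}$ iff both returned $\{\False\}$ (combined via Lemma~\ref{robustfalse} applied to $U\vee V\in\mathcal{B}$).

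The only subtlety worth flagging is the decomposition of a perturbation of $S\wedge T$ into independent perturbations of $S$ and $T$; this is not an added assumption but is built into Definition~\ref{dist} via the ``same structure'' clause and the $\max$ in the distance definition, so $\min(\varepsilon_S,\varepsilon_T)$ is a valid common robustness threshold. Everything else is a bookkeeping check of the Boolean semantics of $\tilde{\wedge}$ against the two clauses of the correctness specification.
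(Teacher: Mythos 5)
Your proposal is correct and takes essentially the same route as the paper's proof: case split on the returned set, projection of $P$ onto the free variables of each conjunct, robust truth of both conjuncts combined into robust truth of the conjunction (your explicit $\min(\varepsilon_S,\varepsilon_T)$ argument via the $\max$ in Definition~\ref{dist} just fills in a step the paper leaves implicit), and falsity from one conjunct upgraded to robust falsity (the paper asserts this directly from robust falsity of that conjunct, while you route it through Lemma~\ref{robustfalse}; both are valid). One small bookkeeping slip: the output of $\tilde{\wedge}$ contains $\False$ iff at least one recursive call's output \emph{contains} $\False$ (both calls returning $\{\True,\False\}$ also puts $\False$ in the product), not iff one returned exactly $\{\False\}$ --- but the implication you actually use, namely that an output equal to $\{\False\}$ forces some call to have returned exactly $\{\False\}$, is correct, so the proof stands.
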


\begin{proof}
Let $p_S$, and $p_T$, respectively, be the function that projects any
$l$-tuple corresponding to the free variables of $S\wedge T$ to those
components corresponding to the free variables of $S$, and $T$, respectively.

If $\mathrm{Conj}$ returned $\{\True \}$ then the recursive calls for both $S$ and $T$ returned 
$\{ \True \}$. Hence, by correctness of the result of the recursive calls,
for all $p_0\in P$, $\Subst{S}{p_S(p)}{p_S(p_0)}$ and $\Subst{T}{p_T(p)}{p_T(p_0)}$ are robustly true, and 
hence also $\Subst{(S\wedge T)}{p}{p_0}$.

If $\mathrm{Conj}$ returned $\{ \False \}$ then the recursive calls for either
$S$ or $T$ returned $\{ \False \}$. Hence, by correctness of the result of the
recursive calls, either for all $p_0\in P$, $\Subst{S}{p_S(p)}{p_S(p_0)}$ is robustly
false, or for all $p_0\in P$, $\Subst{T}{p_T(p)}{p_T(p_0)}$ is robustly false. Hence, also
for all $p_0\in P$, $\Subst{(S\wedge T)}{p}{p_0}$ is robustly false. \qed
\end{proof}

For disjunctions the situation is analogous. 

\section{From Robustness To Non-Zero Degree}
\label{technical_section}

For proving that the algorithm CheckSat fulfills the second part of its specification, definiteness, we need to prove that for a robust system of equations, the test provided by a non-zero topological degree eventually succeeds. While the algorithmic aspects of the proof are part of the next section, in this section we prove two properties of the degree necessary for this (Lemma~\ref{m<n} and Theorem~\ref{robustzero}). The first property, Lemma~\ref{m<n}, simply says that in the case overdetermined system of $n$ equations in $m<n$ variables, the input cannot be robust, and hence the implication (robust input implies succeeding test for non-zero degree) holds
vacuously. The second property, Theorem~\ref{robustzero}, shows that robustness implies existence of
a region for which the degree is non-zero. More precisely, we will show a partial converse to Lemma~\ref{degnonzero}, that is, that a robust solution of $f=0$ on $\Omega$ implies the existence of a region $U\subseteq\Omega$ s.t. $0\notin f(\partial U)$ and $\deg(f, U,0)\neq 0$.

The rest of the paper will only refer to the two mentioned properties, so a reader can safely skip this section after noting Lemma~\ref{m<n} and Theorem~\ref{robustzero}. The proofs in the section are the only place in the
paper that uses results from topology that are not explicitly delineated in this paper. 


\begin{lemma}
\label{m<n}
Let $\bar\Omega$ be a closed region in $\R^m$, $n>m$ and $f:\Omega\to\R^n$ be continuous. Then for each $\epsilon>0$ there exists
a function $g: \bar\Omega\to\R^n$, $\|g-f\|<\epsilon$, with no root.
\end{lemma}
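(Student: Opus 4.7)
The plan is a measure-theoretic genericity argument: I would first smoothly approximate $f$, observe that the image of a smooth map from an $m$-dimensional set into $\R^n$ with $n>m$ is a null set, and then subtract a small generic vector to remove every zero.

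Step one: extend $f$ continuously to an open neighborhood of $\bar\Omega$ (by Tietze) and convolve with a smooth mollifier of small support, or equivalently apply the Stone--Weierstrass theorem on the compact set $\bar\Omega$, to obtain a $C^\infty$ map $\tilde f : \bar\Omega \to \R^n$ with $\|\tilde f - f\|_{\bar\Omega} < \epsilon/2$.

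Step two: show that $\tilde f(\bar\Omega)$ has Lebesgue measure zero in $\R^n$. By Sard's theorem this is immediate, since at every $x$ the differential $d\tilde f_x : \R^m \to \R^n$ has rank at most $m<n$, so every point of $\bar\Omega$ is a critical point and every image value is therefore critical. An elementary alternative is a direct Lipschitz/covering estimate: cover $\bar\Omega$ by $O(\delta^{-m})$ cubes of side $\delta$; the image of each fits in a ball of volume $O(\delta^n)$, so the total volume of the image is $O(\delta^{n-m}) \to 0$ as $\delta \to 0$.

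Step three: because the open ball $\{v\in\R^n : |v|<\epsilon/2\}$ has positive Lebesgue measure, it contains some $v \notin \tilde f(\bar\Omega)$. Setting $g := \tilde f - v$, the equation $g(x)=0$ would force $\tilde f(x)=v$, contradicting the choice of $v$, and $\|g-f\| \leq \|\tilde f - f\| + |v| < \epsilon$. I do not anticipate a serious obstacle: the only substantive ingredient is the measure-zero statement about $\tilde f(\bar\Omega)$, and both the Sard and the elementary covering argument are standard. The mildly delicate point is making sure the smooth approximation lives on all of $\bar\Omega$ rather than only on the open set $\Omega$, which is exactly what the Tietze extension in step one handles.
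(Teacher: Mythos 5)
Your proof is correct and follows essentially the same route as the paper's: smooth approximation of $f$ (Stone--Weierstrass), the observation via Sard's theorem that the image of a smooth map from an $m$-dimensional domain into $\R^n$ with $n>m$ has measure zero, and the subtraction of a small vector outside that image. The paper merely phrases the argument by contradiction (assuming every $\epsilon$-perturbation has a root forces $\tilde f(\Omega)$ to contain a neighborhood of $0$, contradicting Sard), whereas you construct the root-free perturbation directly; the mathematical content is the same.
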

\begin{proof}
We assume that for some $\epsilon$, it holds that each $g$ closer to $f$ than $\epsilon$ has a root, and derive a contradiction. It follows from the Stone-Weierstrass theorem that the continuous function $f$ may be approximated arbitrarily precisely with a smooth function (even with a polynomial), 
and so we can approximate it by a smooth function $\tilde{f}$ closer than $\epsilon/2$ to $f$. Moreover, each such $\tilde{f}$ with $\|\tilde{f}-f\|<\epsilon/2$ has a root.
In particular, $\tilde{f}(x)-c$ has a root for any constant $c$, $|c|<\epsilon/2$ and so $\tilde{f}(\Omega)$ contains 
a~neighborhood of $0\in\R^n$.  However, all values in $\tilde{f}(\Omega)$ are critical values (that is, for each $x\in\bar\Omega$, 
the rank of $f'(x)$---a matrix $n\times m$---is smaller than $n$). Due to Sard's theorem~\cite[Chapter 2]{Milnor:97} 
the set of critical values of a smooth function has zero measure in $\R^n$, and so $\tilde{f}(\Omega)$ cannot contain 
a neighborhood of $0\in\R^n$, a contradiction. \qed
\end{proof}

%
%
The rest of the section considers the case of equal dimensions $m=n$. First we
show that a zero degree of a function implies that any possible zero of the
function can be removed by a change of the function only in the
interior. Moreover, the result of the change will be small in a certain sense. 
\begin{lemma}
\label{degzerolemma}
Let $\bar\Omega$ be a closed region in $\R^n$, $f:\bar\Omega\to\R^n$ continuous,
$0\notin f(\partial\Omega)$ and
$\deg(f,\Omega, 0)=0$. Then there exists a continuous nowhere zero function $g:\bar\Omega\to\R^n$ 
such that $g=f$ on $\partial\Omega$ and $||g||_{\bar\Omega}\leq ||f||_{\bar\Omega}$.
\end{lemma}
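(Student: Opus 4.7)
The plan is to split the construction into two parts: first, an elementary ``radial truncation'' that enforces the norm bound given any nowhere-zero continuous extension of $f|_{\partial\Omega}$; and second, the existence of such a nowhere-zero extension, which is precisely the content of the classical Hopf extension theorem under the hypothesis $\deg(f,\Omega,0)=0$.

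For the truncation step, suppose we have a continuous $h:\bar\Omega\to\R^n$, nowhere zero, with $h|_{\partial\Omega}=f|_{\partial\Omega}$. Set $M:=\|f\|_{\bar\Omega}$, which is strictly positive because $0\notin f(\partial\Omega)$ forces $f$ to be nonzero somewhere on $\bar\Omega$. Define
\[
g(x)\;:=\;\frac{\min(|h(x)|,M)}{|h(x)|}\,h(x).
\]
Since $h$ is nowhere zero, the scalar multiplier is well-defined, continuous, and lies in $(0,1]$, so $g$ is continuous and nowhere zero. Its norm satisfies $|g(x)|=\min(|h(x)|,M)\leq M=\|f\|_{\bar\Omega}$. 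Finally, on $\partial\Omega$ we have $|h|=|f|\leq M$, so the multiplier equals $1$ and $g=h=f$ there.

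It remains to produce $h$, which is the Hopf extension theorem: if $f:\bar\Omega\to\R^n$ is continuous with $0\notin f(\partial\Omega)$ and $\deg(f,\Omega,0)=0$, then $f|_{\partial\Omega}$ extends to a continuous map $\bar\Omega\to\R^n\setminus\{0\}$. The standard argument approximates $f$ in the interior by a smooth map $\tilde f$ coinciding with $f$ on $\partial\Omega$ and having $0$ as a regular value. Then $\tilde f^{-1}(0)\cap\Omega$ is a finite set $\{x_1,\dots,x_N\}$ whose signed count $\sum_i \mathrm{sign}\,\det \tilde f'(x_i)$ equals $\deg(f,\Omega,0)=0$. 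One pairs off zeros of opposite signs, joins each pair by a smooth arc in $\Omega$ disjoint from the other zeros and from $\partial\Omega$, and inside disjoint thin tubular neighborhoods of those arcs replaces $\tilde f$ by a modification (supported in the tube) that removes both endpoints as zeros while agreeing with $\tilde f$ on the tube boundary.

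The main obstacle is this last local surgery: one must verify that any small ball containing exactly two zeros of opposite sign admits a modification of the map, supported in the ball, which stays nonzero on the boundary and has no zero inside. This pair-cancellation step is classical but genuinely the technical heart of the matter, and it is worked out rigorously in the textbook of Outerelo and Ruiz~\cite{Outerelo:2009} already cited as the paper's primary degree-theory reference. Accordingly, in the write-up I would invoke Hopf's extension theorem directly from that reference rather than reproduce the surgery argument, keeping the proof of the lemma itself concentrated in the radial-truncation construction above.
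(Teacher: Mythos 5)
Your proof is correct and rests on the same key fact as the paper's---the Hopf extension criterion: a boundary map into $\R^n\setminus\{0\}$ extends without zeros iff the degree vanishes---but the two arguments are organized differently. The paper does not apply the extension theorem on all of $\bar\Omega$; it first localizes to a neighborhood $U\subseteq\Omega$ of $f^{-1}(0)$ chosen so that $\partial U$ is an $(n-1)$-manifold, extends $f/|f|\colon\partial U\to S^{n-1}$ over $U$ (citing the manifold version of the theorem from Hirsch), recovers the magnitude by a Tietze extension of $|f|$ from $\partial U$, glues this to $f$ outside $U$, and finally rescales inside $U$ to get the norm bound. Your version trades that localization for a direct appeal to the extension theorem on a general bounded connected open set (which is exactly the smooth-approximation-plus-pair-cancellation argument you sketch), and your radial truncation $g=\min(|h|,M)|h|^{-1}h$ is a cleaner and entirely correct way to enforce $\|g\|\leq\|f\|$ than the paper's Tietze-plus-rescaling step. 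The paper's localization buys a citable off-the-shelf statement (manifold boundary); your route buys a simpler norm argument at the cost of needing the extension theorem in the rougher setting of an arbitrary closed region.

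One point you should make explicit rather than leave implicit: connectedness of $\Omega$ is essential for the step where you pair zeros of opposite sign and join them by arcs inside $\Omega$. For a disconnected open set the statement you are invoking is false (degree $+1$ on one component and $-1$ on another gives total degree zero with no nowhere-zero extension), so you must say where connectedness comes from---here it is supplied by the paper's definition of a closed region as the closure of an open \emph{connected} bounded set. A second, minor remark: the arc-and-tube surgery as described needs $n\geq 2$ (disjoint embedded arcs avoiding the remaining zeros); the case $n=1$, where $\Omega$ is an interval and degree zero just means $f$ has the same sign at both endpoints, should be dispatched separately by an elementary interpolation.
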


\begin{proof}
If $0\notin f(\Omega)$, we may take $g=f$. Otherwise, take a neighborhood $U\subseteq\Omega$ of $f^{-1}(0)$
such that $\partial U$ is an $(n-1)$-manifold (i.e. locally homeomorphic to $\R^{n-1}$). Such a neighborhood
$U$ might be constructed as a finite union of balls. It follows from the degree axioms 
that $\deg(f,U,0)=\deg(f,\Omega,0)$ and it is a well-known fact in differential topology
that $f/|f|: \partial U\to S^{n-1}$ can be extended to a function $g_1: U\to S^{n-1}$ iff
the degree is zero~\cite[Theorem 8.1.]{Hirsch:76}. Let $h: \bar{U}\to \R^+$ be an extension of $|f|: \partial U\to\R^+$
(such extension exists due to Tietze's Extension Theorem~\cite[Thm. 4.22]{Bruckner:1997})
and let $i: S^{n-1}\to\R^n\setminus\{0\}$ be the inclusion.
Then $g_2:=h\,(i\circ g_1): \bar{U}\to\R^{n}\setminus\{0\}$ is a nowhere zero extension of $f|_{\partial U}$.
Define $g: \bar\Omega\to\R^n\setminus\{0\}$ by $g(x)=g_2(x)$ for $x\in U$ and $g(x)=f(x)$ for $x\notin U$.
This function is continuous, nowhere zero and coincides with $f$ on $\partial U$. Possibly multiplying 
$g$ by a positive scalar valued function that equals 1 on $\partial U$ and is small inside $U^\circ$, 
we achieve that $||g||_\Omega \leq ||f||_\Omega$.
\qed\end{proof}

\Long{Using the last theorem, we see that even if $0$ is in the interior of $f(\bar\Omega)$, $0$ does not have to be a robust zero of $f$. 
Let $\Omega$ be an open unit ball in $\R^2$. Let $(r,\varphi)$ be polar coordinates on $\Omega$, i.e. $0\leq r<1$ and
$\varphi\in\R$. Define the function $f:\Omega\to\R^2$ in polar coordinates by $f(r,\varphi)=(r, 4\sin\varphi)$.
This map preserves the diameter, and restricted to any circle of diameter $r$, the image is the whole circle,
because the angle $\varphi$ ranges from $-4$ to $4$, covering an interval larger then $2\pi$. 
So, the image of $f$ is the whole unit ball $\Omega$ and $0$ is in the interior of $f(\Omega)$. However,
the degree $\deg(f, B_r(0),0)$ is clearly zero for each ball $B_r(0)$ of radius $r$ (the boundary map from the unit circle to itself has zero 
winding number), so we may change $f$ inside $B_r$ to construct a nowhere zero $2r$-perturbation of $f$. 
This can be done for each $r$, so $0$ is not a robust zero of $f$.}

Now we show that for a smooth function $f$, we might change it within a small region $N$ where the function is nonzero, to produce
arbitrary many regular zero points, both orientation-preserving and orientation-reversing.
\begin{lemma}
\label{createzeros}
Let $U$ be an open set in $\R^n$, $f:U\to\R^n$ be smooth. Let $N$ be a neighborhood of $x^0\in U$ such that
$0\notin f(N)$ and let $k\in\N$.
Then there exists a function $f_1$ such that the following conditions are satisfied:\\
(1) $f_1=f$ on $U\setminus N$\\
(2) $||f_1||\leq ||f||$ \\
(3) $0$ is a regular value of $f_1|_N$\\
(4) $N$ contains $2k$ points $x_1,\ldots, x_k, y_1,\ldots, y_k$ such that
$f_1(x_i)=f_1(y_i)=0$, $f_1$ is orientation-preserving in the neighborhood of $x_i$ and orientation-reversing
in the neighborhood of $y_i$.
\end{lemma}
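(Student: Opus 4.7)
The plan is to leave $f_1 = f$ outside a small ball around $x^0$ on which $f$ is bounded away from zero, and to install the prescribed zeros via explicit linear ``prototypes'' glued to $f$ by smooth cutoffs. Since $f$ is continuous and $0\notin f(N)$, I first pick a closed ball $\bar W\subset N$ around $x^0$ such that $c:=\min_{\bar W}|f|>0$; setting $f_1=f$ on $U\setminus W$ immediately yields condition~(1) and localises the rest of the argument to $\bar W$. Next I choose $2k$ distinct interior points $x_1,\dots,x_k,y_1,\dots,y_k$ of $W$ together with pairwise disjoint closed balls $\bar B_j^{\pm}\subset W$ of a common small radius $r$ centred at these points.

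Inside each $\bar B_j^{\pm}$ with centre $c_j$ I take a smooth radially symmetric cutoff $\chi_j:\bar B_j^{\pm}\to[0,1]$ that equals $1$ on the inner ball of radius $r/2$, vanishes on $\partial B_j^{\pm}$, and satisfies $\nabla\chi_j(c_j)=0$. Define
\[
f_1(x):=\chi_j(x)\,L^{\pm}(x-c_j)+(1-\chi_j(x))\,f(x),\qquad x\in\bar B_j^{\pm},
\]
with $L^{+}:=\mu\,I$ and $L^{-}:=\mu\cdot\mathrm{diag}(-1,1,\dots,1)$, where $\mu>0$ is chosen small enough that $\mu r<c$. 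Then $|f_1(x)|\le\chi_j(x)\mu r+(1-\chi_j(x))|f(x)|\le\|f\|_U$, giving condition~(2). A product-rule computation at $c_j$, using $\chi_j(c_j)=1$, $\nabla\chi_j(c_j)=0$ and $L^{\pm}(0)=0$, yields $Df_1(c_j)=L^{\pm}$, whose determinant is $\pm\mu^{n}$; this shows that $x_j$ (resp.\ $y_j$) is a regular orientation-preserving (resp.\ orientation-reversing) zero of $f_1$, which is condition~(4) once~(3) is established.

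The main obstacle is condition~(3), that $0$ be a regular value of $f_1|_N$, i.e.\ that no non-regular zero appears in the transition annuli $r/2<|x-c_j|<r$. A vanishing point there would force both the magnitude balance $(1-\chi_j)|f|=\chi_j|L^{\pm}(x-c_j)|$ and the antiparallelism $L^{\pm}(x-c_j)\parallel -f(x)$; since $f$ is nearly constant on $\bar B_j^{\pm}$ for small $r$, the antiparallelism confines a potential extra zero to (approximately) a single ray from $c_j$, and the magnitude equation then singles out at most one point on that ray, so that only finitely many extras can occur. I expect the technical work to lie in controlling these extras, which I plan to do by a generic-perturbation argument: after the initial construction, add to $f_1$ a small smooth perturbation compactly supported in $W$, chosen small enough that each prescribed zero $x_j,y_j$ merely moves slightly while keeping its orientation (by continuity of $\det Df_1$ and the implicit function theorem), but generic enough that Sard's theorem makes $0$ a regular value of the perturbed $f_1$. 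Since $f_1|_{\partial W}=f|_{\partial W}$ and $\deg(f,W,0)=0$ (as $f$ is nowhere zero on $\bar W$), the local-index formula for the degree forces any remaining accidental zeros to cancel in $\pm$ pairs, so the required $k$ positive and $k$ negative regular zeros of condition~(4) survive in $N$.
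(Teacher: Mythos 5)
Your construction is correct in outline but follows a genuinely different route from the paper's. The paper works with a single small cube $x^0+[-2\delta,2\delta]^n$ and defines $f_1$ explicitly on the inner cube by the piecewise-linear ``tent'' formula $(f_1)_i(x)=\bigl(|x_i-x^0_i|/\delta-\tfrac12\bigr)f_i(x^0)$, which produces $2^n$ zeros at $(x^0_1\pm\delta/2,\dots,x^0_n\pm\delta/2)$, half of each orientation; it then bridges the annulus between the inner and outer cube by a \emph{nowhere-zero} extension agreeing with $f$ on the outer boundary (possible because the degree over the inner cube is $0$), and iterates at further points to reach $k$ zeros of each sign. You instead install one zero per prescribed point by gluing a linear model $L^{\pm}$ to $f$ with a cutoff, and you handle the transition region by a generic (parametric transversality / Sard) perturbation rather than by forcing it to be zero-free. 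Your route is cleaner in that $f_1$ is genuinely smooth (the paper's $|x_i-x^0_i|$ is not differentiable on the coordinate hyperplanes, though its zeros avoid them) and you get exactly the prescribed signs without counting $2^n$ tent zeros; the price is that you must tolerate possible extra zeros in the gluing annuli (harmless, since conditions (3)--(4) only require all zeros in $N$ to be regular and at least $k$ of each sign to exist, and the paper's application in Theorem~\ref{robustzero} tolerates unpaired extra zeros outside $\bar\Omega$), whereas the paper's annulus extension avoids Sard altogether. Two details you should make explicit: (i) the generic perturbation must be supported in a neighborhood of $f_1^{-1}(0)\cap W$ on which $|f_1|$ stays strictly below $\|f\|$, so that condition (2) survives the perturbation; and (ii) the cutoffs must vanish to infinite order at $\partial B_j^{\pm}$ for $f_1$ to be smooth across those spheres. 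With these, your argument is complete.
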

\begin{proof}
Choose $\delta>0$ such that $x^0+[-2\delta,2\delta]^n\subseteq N$.
We construct $f_1$ such that $f_1(x)=f(x)$ for $x \notin
(x^0+[-2\delta,2\delta]^n)$.  For $x \in (x^0+[-\delta,\delta]^n)$ we
set 
$$(f_1)_i(x)=\left(\frac{|x_i - x^0_i|}{\delta}
- \frac{1}{2}\right) f_i(x^0).
$$

It is easy to see that $f_1^{-1}(0)$ contains in $(x^0+[\pm\delta])$ $2^n$ points
of the form $(x^0_1 \pm \delta/2, x^0_2 \pm \delta/2,\ldots, x^0_n\pm\delta/2)$,
half of them preserve orientation and half reverse orientation. Clearly, 
$|f_1(x)|\leq |f(x^0)|\leq ||f||$ on $x_0+[\pm\delta]$. Because 
$\deg(f_1,x^0+[\pm\delta],0)=\deg(f,x^0+[\pm 2\delta])=0$, it is easy to see that
$f_1$ may be extended to $x^0+[\pm 2\delta]$ so that $f_1=f$ on $\partial (x^0+[\pm 2\delta])$,
$f_1$ is nonzero in $x_0+[\pm 2\delta]\setminus (x_0+[\pm\delta])$
and the norm $||f_1||\leq ||f||$. The only zero points of $f_1$ in $N$ 
are $(x^0_1\pm\delta/2,\ldots,x^0_n\pm\delta/2)$,
so  $0$ is a regular value of $f_1|_N$. The details are left to the reader.

To produce more zeros we can choose any point $x_1\in N$ s.t. $f_1(x_1)\neq 0$ and a small neighborhood of $x_1$
in $N$ where $f_1$ is nonzero and continue in the same way.
\qed\end{proof}

Finally, we prove the following theorem that will be used in the proof of definiteness of the CheckSat procedure.
\begin{theorem}
\label{robustzero}
Let $\bar\Omega$ be a non-empty closed region in $\R^n$ with interior $\Omega\subseteq\R^n$ and $f:\bar\Omega\to\R^n$ be continuous. 
Then there exists an $\epsilon>0$ such that each continuous $g$, $\|g-f\|<\epsilon$, has a zero in $\bar\Omega$
\emph{if and only if} 
there exists an open set $U\subseteq\Omega$ such that $0\notin f(\partial U)$  and  $\deg(f,U,0)\neq 0$.
\end{theorem}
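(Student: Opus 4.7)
The plan is to prove the two directions separately. Direction ($\Leftarrow$) is immediate from Lemma~\ref{degnonzero}: given an open $U\subseteq\Omega$ with $0\notin f(\partial U)$ and $\deg(f,U,0)\neq 0$, I set $\epsilon:=\min_{x\in\partial U}|f(x)|>0$; any continuous $g:\bar\Omega\to\R^n$ with $\|g-f\|_{\bar\Omega}<\epsilon$ then satisfies $\|g-f\|_{\bar U}<\epsilon$ and so has a zero in $U\subseteq\bar\Omega$ by that lemma.

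For direction ($\Rightarrow$) I argue by contraposition: assuming every open $U\subseteq\Omega$ with $0\notin f(\partial U)$ satisfies $\deg(f,U,0)=0$, I construct, for an arbitrary $\epsilon>0$, a nowhere-zero continuous $g$ with $\|g-f\|<\epsilon$. The guiding idea is to enclose the small-$|f|$ region in an open $U\subseteq\Omega$ with $0\notin f(\partial U)$, apply Lemma~\ref{degzerolemma} component-wise (the degree-zero hypothesis plus additivity of the degree gives $\deg(f,C,0)=0$ on each connected component $C$) to obtain a nowhere-zero continuous $g_0:\bar U\to\R^n$ that coincides with $f$ on $\partial U$ and satisfies $\|g_0\|_{\bar U}\leq\|f\|_{\bar U}$, and then glue $g_0$ with $f|_{\bar\Omega\setminus U}$.

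When $f$ has no zeros on $\partial\Omega$ this runs cleanly: with $m:=\min_{\partial\Omega}|f|>0$ I pick $\delta:=\min(\epsilon/4,\,m/2)$ and set $U:=\{x\in\Omega:|f(x)|<\delta\}$. A short continuity argument yields $\bar U\subseteq\Omega$ (any limit point on $\partial\Omega$ would have $|f|\leq\delta<m$) and $\partial U\subseteq\{x:|f(x)|=\delta\}$, so $0\notin f(\partial U)$. The glued $g$ is continuous, equals $f$ on $\bar\Omega\setminus U$ (where $|f|\geq\delta$, hence nonzero), equals the nowhere-zero $g_0$ on $\bar U$, and satisfies $\|g-f\|\leq\|g_0\|_{\bar U}+\|f\|_{\bar U}\leq 2\delta<\epsilon$.

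The main obstacle is the remaining sub-case, where $f$ vanishes somewhere on $\partial\Omega$: the sublevel set $U$ then accumulates onto a boundary zero, $\partial U$ contains a point where $f=0$, and the degree hypothesis can no longer be applied. My plan here is first to produce a small perturbation $\hat f$ of $f$ with no zeros on $\partial\Omega$ that still inherits the degree-zero property on neighbourhoods of its remaining zeros, and then apply the previous sub-case to $\hat f$. Concretely, I would extend $f$ by Tietze to a neighbourhood of $\bar\Omega$, approximate it by a smooth $\tilde f$ within $\epsilon/8$, and use Sard's theorem (applied on a slightly shrunken region with smooth boundary, obtained for example as a regular sublevel set of a mollified distance function to $\partial\Omega$) to select a regular value $c$ of $\tilde f$ with $|c|<\epsilon/8$ and $c\notin\tilde f(\partial\Omega)$; the function $\hat f:=\tilde f-c$ then has only finitely many regular zeros, all in $\Omega$, and homotopy invariance of the degree transports the hypothesis from $f$ to $\hat f$ on every open set whose boundary is avoided by the straight-line homotopy. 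The delicate point, and the step I expect to be the technical heart of the proof, is arranging that this Sard-type choice of $c$ simultaneously be a regular value and lie outside the possibly wild compact set $\tilde f(\partial\Omega)$ when $\partial\Omega$ is only the topological boundary of an open connected bounded set.
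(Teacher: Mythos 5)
Your ($\Leftarrow$) direction and your treatment of the case where $f$ has no zero on $\partial\Omega$ are sound and essentially coincide with how the paper handles the interior components of the zero set (cover $f^{-1}(0)$ by a sublevel set, invoke the degree-zero hypothesis on each connected component, and apply Lemma~\ref{degzerolemma} to kill the zeros without increasing the norm). The gap is in the boundary-zero case, and it is not merely the technicality you flag: the proposed reduction is wrong in substance, for two reasons.

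First, the step you identify as delicate can genuinely fail: $\partial\Omega$ is only the topological boundary of an open connected bounded set, and such boundaries can have positive $n$-dimensional Lebesgue measure (e.g.\ Jordan domains bounded by Osgood curves, which are regular open, so the hypothesis $\Omega=\mathrm{int}\,\bar\Omega$ does not exclude them). Then $\tilde f(\partial\Omega)$ can contain a full neighborhood of $0$ -- already for $\tilde f$ the identity -- and Sard's theorem, which only controls critical values, gives you no admissible $c$. Second, and more fundamentally, even when a regular value $c\notin\tilde f(\partial\Omega)$ with $|c|$ small exists, the shifted function $\hat f=\tilde f-c$ need \emph{not} inherit the degree-zero hypothesis, so the reduction to your first sub-case breaks. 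Concretely, take $\bar\Omega$ the closed unit disk in $\R^2$ and $f(x)=x-(1,0)$: every open $U\subseteq\Omega$ with $0\notin f(\partial U)$ has $\deg(f,U,0)=0$ (the only zero lies on $\partial\Omega$), yet the shift $c=(-\eta,0)$ produces $\hat f(x)=x-(1-\eta,0)$, whose unique zero sits in $\Omega$ with local degree $1$; the sublevel set around it has nonzero degree and $\hat f$ is robustly zero there, so no small nowhere-zero perturbation of $\hat f$ can be built by your interior argument. The homotopy-invariance transfer you invoke does not apply to the sets you actually need (on $\partial\{|\hat f|<\delta\}$ the straight-line homotopy passes through $0$ once $\delta$ is small compared to $|c|$). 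What is missing is a mechanism that pushes boundary zeros \emph{out of} $\bar\Omega$ rather than into $\Omega$; this is exactly where the paper uses $\Omega=\mathrm{int}\,\bar\Omega$ to guarantee that every ball around a boundary zero meets the complement of $\bar\Omega$, extends $f$ there by Tietze, creates auxiliary regular zeros of both orientations outside $\bar\Omega$ (Lemma~\ref{createzeros}), and cancels each near-boundary zero against an opposite-orientation partner along a curve on whose neighborhood the degree vanishes. A constant shift cannot encode that choice of direction.
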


The assumption that $\Omega$ is the interior of $\bar\Omega$ is necessary
to exclude some degenerate cases such as $\Omega=(-1,1)\setminus\{0\}$
and $f(x)=x$; in this case, $f$ has a robust zero in $\bar\Omega=[-1,1]$ 
but for any $U\subseteq\Omega$
with $0\notin f(\partial U)$, $\deg(f,U,0)=0$.


\begin{proof}
If the dimension is $n=1$, then $\bar\Omega$ is a compact interval and clearly there exists an $\epsilon>0$ such that each 
continuous $\epsilon$-perturbation of $f$ has a zero iff there exists $x,y\in\bar\Omega$
s.t. $f(x)<0<f(y)$, and the statement follows. In the rest of the proof we
assume that $n\geq 2$.

If $\deg(f,U,0)\neq 0$ for some $U$, then we may choose $\epsilon:=\min_{x\in\partial U} |f|$ by
Lemma~\ref{degnonzero} which proves one implication. 

For proving the other direction, we assume that for each open $U\subseteq\Omega$ s.t. $0\notin f(\partial U)$,
$\deg(f,U,0)=0$. We choose a positive $\epsilon>0$ and will show that there exists a continuous $4\epsilon$-perturbation $g$ 
of $f$ with no root.

Let $\Omega_\epsilon:=\{x\,|\:|f(x)|<\epsilon\}$.
This is an open set in $\bar\Omega$. 
Let $x\in f^{-1}(0)\cap\Omega$. Then there exists a 
ball $U(x)\subseteq\R^n$ open in $\R^n$ such that $U(x)\subseteq\Omega_{\epsilon}$. For $y\in f^{-1}(0)\cap\partial\Omega$,
we choose $U(y)\subseteq\R^n$ to be an open ball in $\R^n$ such that $U(y)\cap\bar\Omega\subseteq\Omega_{\epsilon}$.
We assumed that $\Omega$ is the interior of $\bar\Omega$, which implies $\partial\Omega=\partial\bar\Omega$. So, for each such
$U(y)$, the set $U(y)\setminus\bar\Omega$ is a nonempty open set in $\R^n$.

The set $\{U(x)\,|\,x\in f^{-1}(0)\}$ is an open cover of the compact set $f^{-1}(0)$, so we may take
finitely many of these sets $U_1,\ldots, U_k$ that still cover  $f^{-1}(0)$.
Each $U_i$ is either contained in $\Omega_{\epsilon}$, or has a nontrivial intersection with $\partial\Omega$.
Let $V_1,\ldots, V_l, W_1,\ldots, W_m$ be the pairwise disjoint connected components of $\cup_i U_i$ such that
$V_i\subseteq \Omega_{\epsilon}$ and $W_j\cap\partial\Omega\neq\emptyset$ for each $i,j$.

If $x\in \partial V_i$, then $f(x)\neq 0$, otherwise $x$ would be contained in the interior of the same connected component $V_i$ of $\cup_i U_i$. 
In particular, $0\notin f(\partial V_i)$ and
due to the assumption above $\deg(f,V_i,0)=0$. $V_i$ is connected and it follows from Lemma~\ref{degzerolemma} that we may change $f$ inside $V_i$,
without changing it on $\bar\Omega\setminus V_i$, to construct a function $f_1: \bar\Omega\to\R^n$, $0\notin f_1(V_i)$ and 
$||f_1||_{V_i}\leq ||f||_{V_i}$. The inequalities $||f_1||_{V_i}\leq ||f||_{V_i}\leq\epsilon$ imply that $f_1$ is a continuous $2\epsilon$-perturbation of $f$.
This can be done independently for each $i$, so we may assume that $0\notin f_1(\cup_i V_i)$.

Let us extend $f_1$ to a continuous function $f_2: \bar\Omega\cup_j \bar{W}_j
\to\R^n$ (such an extension exists by Tietze's Theorem).
Possibly multiplying $f_2$ by a positive scalar valued function that equals $1$ on $\bar\Omega$ and is small outside $\bar\Omega$, 
we may assume that $||f_2||_{\cup_j W_j}\leq \epsilon$.
The zero set of $f_2$ is contained in $\cup_j \bar{W_j}$ and if $f_2(x)=0$ for some $x\in\partial W_j$, then $x\notin\bar\Omega$
(otherwise, $x$ would be contained in the same connected component of $\cup_i U_i$ as $W_j$, contradicting $x\in\partial W_j$). 
Therefore, $f_2$ is nowhere zero on the compact set $\bar\Omega\setminus\cup_j W_j$ 
and there exists some $0<\epsilon_1<\epsilon$ s.t. $|f(x)|>\epsilon_1$ for $x\in\bar\Omega\setminus\cup_j W_j$.
Let $f_3$ be a continuous $\epsilon_1$-perturbation of $f_2$ that is smooth and $0$ is a regular value of $f_3$ 
(such a perturbation exists by Stone-Weierstrass and Sard's theorems). The set $f_3^{-1}(0)$ is finite and contained in $\cup_j \bar{W_j}$. 
For each $j$ and each $x\in f_3^{-1}(0)\cap \partial W_j$, we may find a small neighborhood $O_x$ of $x$ such that $x$
is the only zero point of $f_3$ on $\bar O_x$, $\bar O_x\cap\bar\Omega=\emptyset$, $W_j\setminus \bar O_x$ is still connected, 
and replace $W_j$ by $W_j\setminus \bar O_x$. 
So, we can assume that $0\notin f_3(\partial W_j)$
for each $j$. Let $A^+(W_j)=\{x\in W_j\,|\,f_3(x)=0,\,\det(f_3'(x))>0\}$ and $A^-(W_j)=\{x\in W_j\,|\,f_3(x)=0,\,\det(f_3'(x))<0\}$.

$W_j\setminus\bar\Omega$ is open and nonempty, and we can use Lemma~\ref{createzeros}
to create at least $2||A^+(W_j)|-|A^-(W_j)||$ zeros in $W_j\setminus\bar\Omega$ of $f_3$ in which $f_3$ is orientation-preserving, resp. orientation-reversing,
without changing $f_3$ in $W_j\cap\bar\Omega$. We can then pair all points in $A^+(W_j)\cap\bar\Omega$ with 
points in $A^-(W_j)\setminus\bar\Omega$ and points in $A^-(W_j)\cap\bar\Omega$ with $A^+(W_j)\setminus\bar\Omega$ (some zeros of $f_3$
outside $\bar\Omega$ may still remain unpaired). We suppose that the dimension $n\geq 2$, so we may connect each pair of points $x_a^+$ and
$x_a^-$ by a curve $c_a$ so that the curves do not intersect themselves and the complement of these curves in $W_j$ is still connected. 
Further, there exist connected and pairwise disjoint open neighborhoods $N_a$ of these curves such that the only zero points of 
$f_3$ in $N_a$ are $x_a^+$ and $x_a^-$ for each $a$.
The degree $\deg(f_3, N_a, 0)=0$, so we may change $f_3$ inside $N_a$ to a continuous function $f_4$ s.t. $||f_4||_{N_a}\leq ||f_3||_{N_a}$,
and $0\notin f_4(N_a)$. In this way, we destroy all zeros of $f_3$ in $\bar\Omega$ (although some zeros may still exists outside $\bar\Omega$).
We assumed that $||f_2||_{W_j}\leq \epsilon$, so $||f_3||_{W_j}\leq \epsilon+\epsilon_1\leq 2\epsilon$ and
$f_4|_{W_j}$ is a continuous $4\epsilon$-perturbation of $f|_{W_j}$. Changing $f_3$ independently in each $N_a$,
the resulting function $f_4|_{\bar\Omega}$ is a nowhere zero continuous $4\epsilon$-perturbation of $f$.
\qed
\end{proof}

\Long{
As an illustration of the last theorem, we present an example where the image of $f$ contains a neighborhood of zero, although arbitrary close
to $f$ are continuous functions with no root. Let $\Omega$ be a unit ball and $f$ is given in polar coordinates by 
$f(r, \varphi):=(r, 4\cos \varphi)$.
This maps circles of radius $r$ to circles of radius $r$, each of them with winding number zero 
(it goes around the circle in one direction and then back). Also, the image of each circle is the full circle, because $4\cos \varphi$ ranges
from $-4$ to $4$, which is an interval of length larger than $2\pi$. So, the image of $f$ is the full unit ball $\Omega$, 
but $\deg(f,U,0)=0$ for any $U\subseteq\Omega$. 
It follows from the last theorem that arbitrary close to $f$ are continuous functions with no root.}

\section{Proof of Definiteness}
\label{sec:definiteness}

We will prove here that the algorithm $\mathrm{CheckSat}$ proposed in
Section~\ref{problem} fulfills the second part of its specification, that is,
definiteness. This will complete the proof of Theorem~\ref{thm:main}. The definiteness proof will again be divided into the cases constituting
the definition of class $\mathcal{B}$, from which correctness of the overall,
recursive algorithm follows by induction.

Unlike in Section~\ref{sec:correctness}, in this section, the assumption that the symbol set of our language contains rational constants, addition, and multiplication, and consequently all polynomials with rational coefficients is needed: it will allow us to construct terms representing functions that are arbitrarily close to a given continuous function.

\subsection{System of Equations and Inequalities}

We again start with the case $(\ref{item:Bbasic})$ of class $\mathcal{B}$, that is, a formula $S$ of the form 
\[
\exists x\in B \;.\; [f_1=0\, \wedge\, f_2=0\,\wedge\, \ldots\wedge\,
f_{n}=0\,\wedge\, g_1\geq 0\,\wedge\, g_2\geq 0\,\wedge\ldots\wedge\,
g_{k}\geq 0 ]
\]
where $B$ is an $m$-box. Assuming that the formula has $l$ free variables, we
again denote by $f: \mathbb{R}^{l+m}\rightarrow \mathbb{R}^n$ the function defined by the
components $(f_1,\ldots, f_n)$ and $g:\mathbb{R}^{l+m }\rightarrow \mathbb{R}^k$ the function defined by the components
$(g_1,\ldots,g_k)$. 

\begin{theorem}
The algorithm $\mathrm{SoEI}(S,P,r)$ described in Section~\ref{SoE} fulfills the
definiteness property of the specification of $\mathrm{CheckSat}$ (defined at the beginning of Section~\ref{problem}).
\end{theorem}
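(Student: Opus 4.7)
We split on the two disjuncts of the definiteness hypothesis.

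\emph{Robustly false case.} Since $\Subst{S}{p}{p_0}$ is false for every $p_0\in P_0$, the open sets $\{(p,x)\in P_0\times B:f_i(p,x)\neq 0\}$, $i=1,\dots,n$, and $\{(p,x)\in P_0\times B:g_j(p,x)<0\}$, $j=1,\dots,k$, form an open cover of the compact set $P_0\times B$. A Lebesgue-number argument together with uniform continuity and interval computability (Definition~\ref{def:intcomp}) yields an $\varepsilon>0$ such that, whenever $r\leq\varepsilon$ and $P\subseteq P_0$, every grid box $P\times A$ of widths at most $r$ lies inside a single member of this cover, and the corresponding $\I(f_i)$ or $\I(g_j)$ witnesses it. Hence the unsatisfiability check in line~\ref{l:unsat} succeeds on every box and the algorithm returns $\{\False\}$.

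\emph{Robustly true case: reduction to a single component.} A compactness argument modelled on Lemma~\ref{robusttrue} produces a uniform $\varepsilon>0$ such that $\Subst{S}{p}{p_0}$ is $\varepsilon$-robust for all $p_0\in P_0$. If $n>m$ this case is vacuous by Lemma~\ref{m<n}; the trivial case $n=0$ is handled by the adapted algorithm via uniform continuity. In the essential case $n=m>0$, let $p_0\in P$ be the arbitrary point chosen by the algorithm prior to the degree test. Replacing each $g_j$ by $g_j-\varepsilon/2$ is an $\varepsilon/2$-perturbation of $S$, so $\varepsilon$-robustness implies that every continuous $\varepsilon/2$-perturbation $\tilde f$ of $f(p_0,\cdot)$ on $B$ admits a zero $\tilde x$ with $\min_j g_j(p_0,\tilde x)\geq\varepsilon/2$; in particular $\tilde x\in\Omega_0:=\{x\in B^\circ:\min_j g_j(p_0,x)>\varepsilon/4\}$. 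The compact zero set of $f(p_0,\cdot)$ in $\bar{\Omega_0}$ meets only finitely many connected components of the open set $\Omega_0$. A contradiction argument that glues component-wise bad perturbations into a single global $\varepsilon/2$-perturbation, via Tietze extensions multiplied by disjointly supported bump functions, shows that at least one component $\Omega^*$ inherits the robustness property. Theorem~\ref{robustzero} then provides an open $U\subseteq\Omega^*$ with $0\notin f(p_0,\partial U)$ and $\deg(f(p_0),U,0)\neq 0$, and $\min_j g_j(p_0,\cdot)\geq\varepsilon/4$ on $\bar U$.

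\emph{From $U$ to the algorithm.} Set $\eta:=\min_{\partial U}|f(p_0,\cdot)|>0$ and fix a tubular neighborhood $V$ of $\partial U$ on which $|f(p_0,\cdot)|>\eta/2$. By uniform continuity of $f$ and $g$ on $P_0\times B$ together with interval computability, for all sufficiently small $r$ and every $P\subseteq P_0$ of width at most $r$: (i) no face meeting $V$ satisfies the merging criterion $0\in\I(f)(P\times C)$, so merging does not cross $V$; (ii) every merged piece $A$ that meets $U\setminus V$ lies in an $r$-neighborhood of $\bar U$ and has boundary faces contained in $\partial B\cup\{C:0\notin\I(f)(P\times C)\}$, hence $0\notin f(p_0,\partial A)$; (iii) $g(p,x)>0$ strictly on $P\times A$, so the inclusion check $\I(g)(P\times E)\subseteq(0,\infty)^k$ at line~\ref{l:ineq_iv} succeeds. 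By the additivity axiom of the degree, the degrees of the merged pieces meeting $U$ sum to $\deg(f(p_0),U,0)\neq 0$, so at least one piece has nonzero degree and the algorithm returns $\{\True\}$. The main obstacle is the passage from robust satisfiability on $\bar{\Omega_0}$ to robust satisfiability on a single connected component $\Omega^*$: the assembled global bad perturbation must simultaneously be $\varepsilon/2$-small and avoid creating zeros in the untouched components of $\Omega_0$, which is why finiteness of the relevant components (inherited from compactness of the zero set) is essential.
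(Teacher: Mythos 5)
Your overall strategy matches the paper's: compactness plus interval computability for the false case, and Theorem~\ref{robustzero} followed by a grid/merging analysis for the true case. The negative case is essentially the paper's argument and is fine. In the positive case, however, there are two genuine gaps.

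First, you pass directly from $\varepsilon$-robustness of the \emph{sentence} to the claim that ``every \emph{continuous} $\varepsilon/2$-perturbation $\tilde f$ of $f(p_0,\cdot)$ admits a zero.'' Robustness in Definition~\ref{robust} only quantifies over sentences, i.e., over perturbations of $f$ that are denotable by terms of the language; an arbitrary continuous $\tilde f$ need not be such. The implication you want does hold, but only via a density argument: a hypothetical zero-free continuous perturbation is bounded away from $0$ on the compact domain, so it can be approximated by a rational polynomial (available as a term since the signature contains $+$, $\times$ and rational constants) that is still zero-free and still within $\varepsilon$ of $f(p_0,\cdot)$, contradicting sentence-robustness. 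This bridge is exactly why the paper's definiteness section, unlike the correctness section, needs the assumption that polynomials are in the signature; without stating it your appeal to Theorem~\ref{robustzero} (whose hypothesis is about arbitrary continuous perturbations) and to Lemma~\ref{m<n} does not go through.

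Second, your quantifiers in the final part are in the wrong order. Definiteness demands a single $\varepsilon>0$ working for \emph{all} sub-boxes $P\subseteq P_0$ of width at most $r\le\varepsilon$, but your thresholds ($\eta$, the tubular neighborhood $V$, and hence ``sufficiently small $r$'') are built from the set $U$ attached to the point $p_0$ chosen by the algorithm, which varies with $P$. You need the additional step the paper carries out: for each $p_0\in P_0$ construct a \emph{connected} parameter neighborhood $U(p_0)$ and a threshold $\varepsilon(p_0)$ valid for every $P'\subseteq U(p_0)$ (using homotopy invariance of the degree along a path in $U(p_0)$ to transfer $\deg(f(p_0),\cdot,0)\neq 0$ to the point actually chosen by the algorithm), then extract a finite subcover of $P_0$ and take a Lebesgue-number-type minimum. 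As written, your proof establishes termination with $\{\True\}$ only box-by-box, not uniformly. Finally, your reduction to a single connected component $\Omega^*$ is an unnecessary detour: the paper applies Theorem~\ref{robustzero} to the interior of the closure of a (possibly disconnected) neighborhood of $C=\{x\in B\mid g(p_0,x)\ge\alpha\}$, and the gluing of component-wise bad perturbations that you sketch is delicate (continuity across component boundaries, zeros of $f(p_0,\cdot)$ on $\partial\Omega_0$, keeping the glued perturbation small) and essentially re-proves part of Theorem~\ref{robustzero}; if you do keep it, you must also replace $\Omega^*$ by the interior of its closure before invoking that theorem.
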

\begin{proof}

Let $P_0$ be an $l$-box bounding the free variables of $S$. We divide the proof into two parts:

\noindent {\bf Negative case: }

\nopagebreak[4] Assume that $\exists x\in B \;.\; f(p_0)=0\,\wedge\,g(p_0)\geq 0$ is robustly false for each $p_0\in P_0$. 
We construct an $\varepsilon>0$ such that for every $r\leq \varepsilon$ and every sub-box $P\subseteq P_0$ with width smaller than $r$,
the algorithm returns $\{ \False \}$:

The sets $X=\{(p,x)\in P_0\times B \mid f(p, x)=0\}$ and $Y=\{(p,x)\in P_0\times B
\mid g(p,x)\geq 0\}$ 
are compact and disjoint, so they have a positive distance. For a small enough $\alpha>0$, the sets 
$X'=\{(p,x)\in P_0\times B \mid |f(p, x)|\leq\alpha\}$
and $Y'=\{(p,x)\in P_0\times B \mid  g(p,x)\geq (-\alpha,\ldots,-\alpha)\}$ 
are still disjoint and have a positive distance $d>0$.\footnote{This follows from the fact that $X$ resp. $Y$ can be separated by
open $\epsilon'$-neighborhoods $U(X)$ resp. $U(Y)$ with positive distance from each other, and the fact that using the uniform continuity of $|f|$ and $g$,
$X'\subseteq U(X)$ and $Y'\subseteq U(Y)$ for $\alpha$ small enough.}
If $\varepsilon_0$ is small enough, any box of width smaller than $\varepsilon_0$ either has
 an empty intersection with $X'$ or an empty intersection with $Y'$. 

The second property of interval computability implies that for $\alpha$ there exists a $\delta>0$ such that
any box $A\subseteq B$ with $\Width{A}<\delta$ and box $P\subseteq P_0$ with $\Width{P}<\delta$ have the following properties:
\begin{itemize}
\item If $P\times A$ has empty intersection with $X'$, then $0\notin \I(f)(P\times A)$.
\item If $P\times A$ has empty intersection with $Y'$, then $\I(g)(P\times A)\cap [0,\infty)^k=\emptyset$.
\end{itemize}

So, if we call the CheckSat algorithm with $r\leq \varepsilon:=\min\{\delta,\varepsilon_0\}$ and $P\subseteq P_0$ of width smaller than $r$, then for every $A\subseteq B$ in the resulting $S_r$-grid, 
either $P\times A$ has empty intersection with $X'$ or it has empty intersection with $Y'$ and due to the 
above properties, $A$ satisfies that $0\notin \I(f)(P\times A)$ or $\I(g)(P\times A)\cap [0,\infty)^k=\emptyset$. So the test at Line~\ref{l:unsat} of the algorithm succeeds and the algorithm terminates with $\{ \False \}$.\\

\noindent {\bf Positive Case: } 

\nopagebreak[4] Assume now that $\exists x\in B \;.\; f(p_0)=0\,\wedge\,g(p_0)\geq 0$ is robustly true for each $p_0\in P_0$. We prove that there exists an $\varepsilon>0$
such that for every $r\leq \varepsilon$ and every sub-box $P\subseteq P_0$ with width smaller than $r$,
the algorithm returns $\{ \True \}$.

Exploiting that our given set of functions symbols allows us to form polynomials with rational coefficients, it follows that for some $\alpha>0$, each 
$\alpha$-perturbation $\tilde{f}$ of $f(p_0)$  and $\tilde{g}$ of $g(p_0)$ such that each component of $\tilde{f}$ and of $\tilde{g}$ is a polynomial with rational coefficients,
 satisfies that 
$\exists x\in B \;.\; \tilde{f}=0\,\wedge\,\tilde{g}\geq 0$ is true. In particular, 
each polynomial $\alpha$-perturbation of $f(p_0)=0$ with rational coefficients has a root in 
the compact set $C:=\{x\in B \mid g(p_0,x)\geq \alpha\}$. 

Now we show that $m=n$. Otherwise $m<n$ and by Lemma~\ref{m<n} there exist arbitrary close continuous perturbations $\tilde{f}$ 
of ${f}(p_0)$ with no root in $C$. The absolute value of each such $\tilde{f}$ has 
a positive minimum on $C$ and arbitrary close to $\tilde{f}$ are rational polynomials with no root in $C$. But then arbitrary close to $f(p_0)$
would be polynomials with no root in $C$  which contradicts our assumption. 
Therefore, $m=n$.

We will now prove that for all $p_0\in P_0$ there is an open neighborhood $U(p_0)$ of $p_0$ and $\varepsilon(p_0)>0$ such that for all $P'\subseteq U(p_0)$, $\mathrm{SoEI}(S,P',\varepsilon(p_0))$ terminates with $\{\True\}$. So let $p_0\in P_0$ be arbitrary, but fixed, for which we will now construct such a $U(p_0)$ and $\varepsilon(p_0)$.

Let $\Omega_1\subseteq B$ be an open neighborhood of $C$ in $B$
such that $\Omega_1\subseteq \{x\in B \mid g(p_0,x)\geq\alpha/2\}$ and let $\Omega$ be the interior of $\bar\Omega_1$ in $\R^n$.
We already know that each small enough polynomial perturbation of $f(p_0)$ has a zero in $\bar\Omega$.

By construction, $\bar\Omega=\bar\Omega_1$ and $\bar\Omega$ is the closure of its interior, so we are now ready to use Theorem~\ref{robustzero}. 
It implies that there exists an open $U\subseteq\Omega$ such that $0\notin f(p_0)(\partial U)$ and $\deg(f(p_0),U,0)\neq 0$. Otherwise, by Theorem~\ref{robustzero} there would exist continuous perturbations of $f(p_0)$ with no zero in $\bar\Omega$ arbitrary
close to $f(p_0)$ and it easily follows that there would also exist rational polynomial perturbations arbitrary close to $f(p_0)$ with no zero in $\bar\Omega$.

While $\deg(f(p_0),U,0)\neq 0$ and the inequalities of $S$ strictly hold for all elements of $\{p_0\}\times U$, the set $U$ is not a union of boxes, and hence the algorithm will, in general, not come up with this set. So our goal is now to   construct $U(p_0)$ and $\varepsilon(p_0)$ in such a way that for all $P'\subseteq U(p_0)$, $\mathrm{SoEI}(S,P',\varepsilon(p_0))$ approximates $U$ closely enough for the degree test (Line~\ref{l:deg_test} of the algorithm) and the test of inequality satisfaction (Line~\ref{l:ineq_iv}) to succeed.

Let $U(p_0)\subseteq P_0$ be an open neighborhood of $\{p_0\}$ in $P_0$ such that
$(U(p_0)\times\Omega)\subseteq \{(p,x) \mid g(p,x)\geq\alpha/4\}$\footnote{The set $\{(p,x) \mid g(p,x)>\alpha/4\}$ is an open neighborhood
of $\{p_0\}\times\bar\Omega$ and the compactness of $\bar\Omega$ implies that there is a neighborhood $U(p_0)$ of $\{p_0\}$ such that $U(p_0)\times \Omega\subseteq\{(p,x) \mid g(p,x)>\alpha/4\}$.}
and let $\varepsilon_g(p_0)$ be so small
that for every box $K\subseteq U(p_0)\times \Omega$ of width less than $\varepsilon_g(p_0)$,
\begin{equation}
\label{g-ineq}
\I(g)(K)\subseteq (0,\infty)^k
\end{equation}
which exists due to the second property of the definition of interval computability.


Possibly making $U(p_0)$ smaller, we may assume that $0\notin f(\overline{U(p_0)} \times\partial U)$. 
Let $V\subseteq\Omega$ be a neighborhood of $\partial U$ open in $B$ such that 
$0\notin f(\overline{U(p_0)}\times\bar{V})$ (in these constructions we exploit the compactness of $\partial U$, resp. $\overline{U(p_0)}$).
We will further assume that $U(p_0)$ is connected (if it were not, we could replace it by the connected component of $p_0$ in $U(p_0)$).
The compactness
of $\overline{U(p_0})\times\bar{V}$ implies that $|f|$ has a positive minimum on this set and 
the second property of the definition of interval computability
implies that there exists an $\varepsilon_f(p_0)$ such that for
every sub-box 
$K\subseteq U(p_0)\times V$
of width smaller than $\varepsilon_f(p_0)$, 
\begin{equation}
\label{I(f)}
0\notin \I(f)(K).
\end{equation} 
Let $\varepsilon_V(p_0)$ be such that each box of width less than $\varepsilon_V(p_0)$ that has a nonempty intersection with $\partial U$ lies in $V$. Let $\varepsilon(p_0)$ be $\min \{ \varepsilon_f(p_0), \varepsilon_g(p_0), \varepsilon_V(p_0) \}$.

Having constructed $U(p_0)$ and $\varepsilon(p_0)$ we will now show that they are indeed small enough for the algorithm to return a positive result: Let $P'\subseteq U(p_0)$ be a box of width at most $\varepsilon(p_0)$. 
We will show that $\mathrm{SoEI}(S,P',\varepsilon(p_0))$ terminates with $\{\True\}$.
The algorithm creates a grid of boxes $S_r$ such that each grid element has width at most $\varepsilon(p_0)$.
It merges boxes containing a face $C$ such that 
$0\in \I(f)(P'\times C)$ and removes elements (i.e. merged boxes) containing a face $C\subseteq\partial B$ such that $0\in \I(f)(P'\times C)$. Let us denote by $S_r^{\mathrm{m}}$ the set containing all these merged boxes after the removal. So, elements of $S_r^\mathrm{m}$ can be identified with unions of boxes in $S_r$. Let $M$ be the smallest union of elements in $S_r^\mathrm{m}$ such that $M\supseteq U$. $M$ consists of unions of boxes in $S_r$ that are either contained in $U$ or
intersect $\partial U$ and hence are contained in $V$. It follows that $M\subseteq\Omega$ (by a slight abuse of notation, we denote by $M$ both
the set of elements as well as the underlying space).
Further, $\partial M\subseteq V$, $0\notin \I(f)(P'\times C)$ for any boundary box $C\subseteq \partial M$
(due to $(\ref{I(f)})$)
and $$\deg(f(p'),M^\circ,0)=\deg(f(p_0), M^\circ, 0)=\deg(f(p_0), U, 0)\neq 0$$ for any $p'\in P'$. 
The first identity follows from the fact that $U(p_0)$ is connected, hence $p_0$ and $p'$ can be connected by a curve 
that gives rise to a homotopy between $f(p')$ and $f(p_0)$ that is nowhere zero on the boundary faces of $M$, see axiom~\ref{item:deghomot} defining the degree in Section~\ref{degdef}. 
The second identity follows from the fact that $0\notin f(M\setminus U)$ and axiom~\ref{item:union} of Section~\ref{degdef} applied to $\bar\Omega=M$, $\Omega_1=U$ and $\Omega_2=\emptyset$.

Let $p'\in P'$ be chosen in the algorithm. There exists a subset $M'\subseteq M$ that consists of elements in
$S_r$ where the algorithm finds that $\deg(f(p'),(M')^\circ, 0)\neq 0$ (otherwise, $M$ would be 
a union of subsets on which $f(p')$ has zero degree, contradicting $\deg(f(p'),M^\circ ,0)\neq 0$).
Then it splits elements of $M'$ back to the corresponding elements in $S_r$ and checks the condition whether
for all boxes $E\in S_r(M')$, $\I(g)(P'\times E)\subseteq (0,\infty)^k$. 
This is satisfied due to $(\ref{g-ineq})$ and the algorithm terminates
with $\{\True\}$.

So we now know that for all $p_0\in P_0$, there is an $U(p_0)$ and $\varepsilon(p_0)>0$ such that for all $P'\subseteq U(p_0)$, $\mathrm{SoEI}(S,P',\varepsilon(p_0))$ terminates with $\{\True\}$. So, we have a covering $\{U(p_0)\,|\,p_0\in P_0\}$ of the compact
set $P_0$ and can choose a finite sub-covering $\{U(p_1),\ldots,U(p_s)\}$. 
There exists an $\varepsilon'$ such that each box $P\subseteq P_0$ of width smaller than $\varepsilon'$
is contained in some $U(p_j)$. Let $\varepsilon$ be the minimum of $\varepsilon'$ and all the $\varepsilon(p_j)$, $j\in \{ 1,\dots,s\}$.
For any $P\subseteq P_0$ 
of width at most $\varepsilon$, $\mathrm{SoEI}(S,P,\varepsilon)$ terminates with a positive result $\{\True\}$.
\qed\end{proof}


\subsection{Universal quantifiers}
\begin{theorem}
Let $S$ be a formula and let $I$ be a closed interval. Let $P$ be an $l$-box bounding the free variables of the formula $\forall x\!\in\! I \;.\; S$. Assume that an algorithm $\mathrm{CheckSat}$ fulfilling the definiteness
property is given.
Then also the algorithm $\mathrm{Univ}(\forall x\!\in\! I \;.\; S,P,r)$ described in Section~\ref{Univ}
fulfills the definiteness property.
\end{theorem}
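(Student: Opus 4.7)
The plan is to split along the definiteness hypothesis: either $\forall x\in I\;.\;\Subst{S}{p}{p_0}$ is robustly true for every $p_0\in P_0$, or it is robustly false for every $p_0\in P_0$. In the true subcase, Lemma~\ref{robusttrue} (applied pointwise in $p_0$) turns the hypothesis into the statement that $\Subst{S}{(p,x)}{(p_0,x_0)}$ is robustly true at every $(p_0,x_0)\in P_0\times I$. Applying the assumed definiteness of $\mathrm{CheckSat}$ with the $(l+1)$-box $P_0\times I$ as its uniform-robustness region then produces an $\varepsilon>0$ such that any call $\mathrm{CheckSat}(S,Q,r)$ with $r\le\varepsilon$ and $Q\subseteq P_0\times I$ of sufficiently small width returns $\{\True\}$. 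For any $P\subseteq P_0$ of width below $r$ and any $I'\in I_r$, the box $P\times I'$ is such a $Q$, so every recursive call returns $\{\True\}$ and $\tilde{\bigwedge}_{I'}\{\True\}=\{\True\}$.

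The robustly false subcase is subtler because the definiteness of $\mathrm{CheckSat}$ presupposes uniform robustness of the \emph{inner} $S$, which need not hold on the whole slab $P\times I'$. The key observation is that it suffices to produce, for each admissible input, at least one $I'\in I_r$ on which $\mathrm{CheckSat}(S,P\times I',r)=\{\False\}$: since $\{\False\}$ absorbs $\tilde{\wedge}$, the conjunction then collapses to $\{\False\}$ irrespective of what the other recursive calls return. For each $p_0\in P_0$, falseness of $\forall x\in I\;.\;\Subst{S}{p}{p_0}$ gives a witness $\phi(p_0)\in I$ at which $\Subst{S}{(p,x)}{(p_0,\phi(p_0))}$ is false; Lemma~\ref{robustfalse} upgrades this to robustly false, and Lemma~\ref{lem:robust_neighborhood} then supplies an open product neighbourhood $V(p_0)=V_p(p_0)\times V_x(p_0)$ of $(p_0,\phi(p_0))$ inside $P_0\times I$ on which $S$ is uniformly robustly false.

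The endgame is a compactness argument. The sets $V_p(p_0)$ form an open cover of the compact box $P_0$; I extract a finite subcover $V_p(p_1),\dots,V_p(p_n)$ with a Lebesgue number $r_0$. Applying the definiteness of $\mathrm{CheckSat}$ to each $V(p_i)$ (or a closed sub-box thereof, which still carries uniform robust falseness) yields constants $\varepsilon_i>0$. Let $\varepsilon$ be the minimum of $r_0$, the $\varepsilon_i$, and the distances from each $\phi(p_i)$ to $\partial V_x(p_i)$. Then for $r\le\varepsilon$ and $P\subseteq P_0$ of width below $r$, some $V_p(p_i)$ contains $P$ and the grid element $I'\in I_r$ containing $\phi(p_i)$ lies inside $V_x(p_i)$, so $P\times I'\subseteq V(p_i)$ has width at most $\varepsilon_i$; the definiteness of $\mathrm{CheckSat}$ on $V(p_i)$ therefore forces $\{\False\}$ on that call and hence on the whole $\tilde{\bigwedge}$. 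The main obstacle is precisely this multi-parameter book-keeping: picking a single $\varepsilon$ that simultaneously (i) places $P$ inside some $V_p(p_i)$, (ii) places a grid element of $I_r$ inside the corresponding $V_x(p_i)$, and (iii) lies below the thresholds supplied by the definiteness of $\mathrm{CheckSat}$ applied on each $V(p_i)$.
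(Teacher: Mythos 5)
Your proof is correct and follows the same route as the paper's: Lemma~\ref{robusttrue} disposes of the robustly-true branch outright, while the robustly-false branch picks a falsifying $x_0$ for each $p_0$, promotes it to a product neighbourhood via Lemmas~\ref{robustfalse} and~\ref{lem:robust_neighborhood}, and closes with a compactness argument on $P_0$ using the absorbing behaviour of $\{\False\}$ under $\tilde{\wedge}$. You are in fact a touch more scrupulous than the published proof, which leaves implicit the requirement that $\varepsilon$ be small enough for a grid element of $I_r$ to land inside the $x$-neighbourhood; your explicit inclusion of the distances from $\phi(p_i)$ to $\partial V_x(p_i)$ in the final minimum fills exactly that gap.
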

\begin{proof}
Assume that for all $p_0\in P_0$, the sentence $\forall
x\in I\;.\; \Subst{S}{p}{p_0}$ is robustly true.  Then, by
Lemma~\ref{robusttrue}, for all $p_0\in P_0$ and all $x_0\in I$,
$\Subst{\Subst{S}{p}{p_0}}{x}{x_0}$ is robustly true and the property follows
directly from the assumption on $\mathrm{CheckSat}$.

Assume now that for all $p_0\in P_0$, $\forall x\in I\;.\; \Subst{S}{p}{p_0}$ is
robustly false.  Let $p_0\in P_0$. Then there exists a $x_0\in I$ such that
$\Subst{\Subst{S}{p}{p_0}}{x}{x_0}$ is false, and hence, due to
Lemma~\ref{robustfalse}, it is also robustly false. From this,
Lemma~\ref{lem:robust_neighborhood} implies that there is a neighborhood $P(p_0)$
of $p_0$ and $I_0$ of $x_0$ such that for all $p_0'\in P(p_0)$ and $x_0'\in I_0$,
$\Subst{\Subst{S}{p}{p_0'}}{x}{x_0'}$ is false.  It follows from the assumption
on $\mathrm{CheckSat}$ that there exists an $\varepsilon_{p_0}$ such that for all
$P'\subseteq P(p_0), I'\subseteq I$ of width at most $\varepsilon_{p_0}$,
$\mathrm{CheckSat}(P'\times I',S,\varepsilon_{p_0})$ terminates with
$\{\False\}$. 

Because $P_0$ is compact, we can cover it by $\{P(p_0);\,p_0\in\Lambda\}$ for a finite set $\Lambda$. 
It is easy to see that there exists an $\varepsilon'$ such that any box of side-length smaller than $\varepsilon'$ is in at least one of these $P(p_0)$. 
Now, choose $\varepsilon$ to be smaller than $\varepsilon'$ and smaller than $\varepsilon_{p_0}$ for all $p_0\in\Lambda$. For any box $P$ of side-length at most $\varepsilon$, the algorithm
$\mathrm{Univ}(\forall x\in I\;.\; S, P,\varepsilon)$
terminates with $\{\False\}$. \qed
\end{proof}

\Long{If we used a dual algorithm to $\mathrm{Univ}$ for handling additional
existential quantifiers (in addition to those from the base case of our class
$\mathcal{B}$), the definiteness part of the proof would not go through in the 
case of a robustly true input (corresponding to the robustly false case for
universal quantification). If $\exists x\in I\;.\; S$ is robustly true, then
this does \emph{not} imply that there exists a $x_0\in I$ such that
$\Subst{S}{x}{x_0}$ is robustly true. The topological degree was a way to get
around this problem in the case where the number
of variables is the same as the number of equations. }

\subsection{Conjunction and Disjunction}

\begin{theorem}
  Let $S$ and $T$ be two formulas in $\mathcal{B}$  and assume that $\mathrm{CheckSat}$ fulfills the definiteness property both when applied to $S$ and when applied $T$. Then $\mathrm{Conj}(S\wedge T, P, r)$ (described in Section~\ref{cd}) also fulfills the definiteness property.
\end{theorem}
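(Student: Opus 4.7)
The plan is to verify the two cases of the definiteness hypothesis separately: either $\Subst{(S\wedge T)}{p}{p_0}$ is robustly true for every $p_0\in P_0$, or it is robustly false for every such $p_0$.

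In the robustly true case, observe first that since terms appearing in $S$ and terms appearing in $T$ can be perturbed independently in a perturbation of the whole conjunction, if $\Subst{(S\wedge T)}{p}{p_0}$ is robustly true then both $\Subst{S}{p_S}{p_S(p_0)}$ and $\Subst{T}{p_T}{p_T(p_0)}$ are robustly true (were $S$ not robust at $p_S(p_0)$, a small perturbation of the terms of $S$ only would flip $S$ to false while leaving $T$ true, contradicting robust truth of the conjunction). Applying the definiteness hypothesis on $\mathrm{CheckSat}$ to $S$ over the projected box $p_S(P_0)$ and to $T$ over $p_T(P_0)$ yields constants $\varepsilon_S, \varepsilon_T >0$. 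Set $\varepsilon := \min(\varepsilon_S, \varepsilon_T)$. Any sub-box $P\subseteq P_0$ with width at most $\varepsilon$ projects to sub-boxes $P_1, P_2$ of width at most $\varepsilon$, so both recursive calls return $\{\True\}$ and $\{\True\}\,\tilde{\wedge}\,\{\True\}=\{\True\}$.

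The harder case is the robustly false one, where at different points of $P_0$ a different conjunct may be responsible for the falsity. For each $p_0\in P_0$, robustness implies in particular falsity, so at least one of $\Subst{S}{p_S}{p_S(p_0)}$ or $\Subst{T}{p_T}{p_T(p_0)}$ is false, hence robustly false by Lemma~\ref{robustfalse}. By Lemma~\ref{lem:robust_neighborhood}, this robust falsity extends to a neighborhood of the projected point, and pulling back one gets an open neighborhood $U(p_0)\subseteq P_0$ on which the same conjunct is robustly false pointwise. These neighborhoods cover the compact set $P_0$; extract a finite subcover $\{U(p_1),\dots,U(p_s)\}$. For each $U(p_i)$, invoke the definiteness hypothesis for whichever of $\mathrm{CheckSat}(S,\cdot,\cdot)$ or $\mathrm{CheckSat}(T,\cdot,\cdot)$ corresponds to the robustly false conjunct on that neighborhood, yielding a constant $\varepsilon_i>0$. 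Pick $\varepsilon'>0$ such that every box of width at most $\varepsilon'$ in $P_0$ is contained in at least one $U(p_i)$, and set $\varepsilon := \min(\varepsilon',\varepsilon_1,\dots,\varepsilon_s)$. Then for any $P\subseteq P_0$ of width at most $\varepsilon$, $P$ lies inside some $U(p_i)$, the corresponding recursive call returns $\{\False\}$, and since $\{\False\}\,\tilde{\wedge}\,X = \{\False\}$ for any nonempty $X\subseteq\{\True,\False\}$, the $\mathrm{Conj}$ algorithm returns $\{\False\}$.

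The main obstacle is the robustly false case, precisely because the identity of the failing conjunct need not be globally constant over $P_0$; the combination of Lemma~\ref{robustfalse} (false implies robustly false in $\mathcal{B}$), Lemma~\ref{lem:robust_neighborhood} (robust falsity is locally stable under parameter change), and compactness of $P_0$ is what repairs this. The disjunction algorithm is handled dually, with the roles of true and false swapped; the robustly true case plays the role requiring the covering argument, and uses that for $U\vee V$ robustly true one of the two disjuncts is robustly true at each parameter by an analogous observation to Lemma~\ref{robustfalse} restricted to disjunctions in $\mathcal{B}$.
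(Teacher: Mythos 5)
Your proposal is correct and follows essentially the same route as the paper's proof: in the robustly true case both conjuncts are robustly true and one takes the minimum of the two definiteness thresholds, while in the robustly false case one combines Lemma~\ref{robustfalse}, Lemma~\ref{lem:robust_neighborhood}, and compactness of $P_0$ to get a finite cover by neighborhoods on which a fixed conjunct is robustly false. The only difference is that you spell out the independent-perturbation justification for the robustly true case, which the paper leaves implicit.
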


\begin{proof}
  Let $p_S$, and $p_T$, respectively, be the function that projects any
  $l$-tuple corresponding to the free variables of $S\wedge T$ to those
  components corresponding to the free variables of $S$, and $T$, respectively.

We first assume that for all $p_0\in P_0$
the sentence $\Subst{(S\wedge T)}{p}{p_0}$ is
robustly true. Then for all $p_0\in P_0$, 
$\Subst{S}{p_S(p)}{p_S(p_0)}$ is robustly true and for all $p_0\in P_0$,
$\Subst{T}{p_T(p)}{p_T(p_0)}$ is robustly true. So, by definiteness of the recursive call, there exists an
$\varepsilon_1>0$ such that if $r\leq\varepsilon_1$ and the width of
$P_{1}\subseteq P_0$ is less than $r$, then $\mathrm{CheckSat}(S, p_S(P_{1}),r)$
terminates with $\{ \True \}$. 
An analogous $\varepsilon_2$ exists for $T$.  For $\varepsilon<\min \{\varepsilon_1, \varepsilon_2\}$,
$r\leq\varepsilon$ and $P\subseteq P_0$ of width less than $r$, 
$\mathrm{Conj}(S\wedge T,P,r)$ terminates with $\{ \True \}$.

Suppose that for all $p_0\in P_0$, $\Subst{(S\wedge T)}{p}{p_0}$ is robustly false.  Then, for any $p_0\in P_0$,
either $\Subst{S}{p_S(p)}{p_S(p_0)}$ or $\Subst{T}{p_T(p)}{p_T(p_0)}$ is robustly false. Let $p_0\in P_0$.
Assume, without loss of generality, that $\Subst{S}{p_S(p)}{p_S(p_0)}$ is robustly false.
By Lemma~\ref{lem:robust_neighborhood} there exists a neighborhood $U$ of $p_0$
such that for every $u\in U$, $\Subst{S}{p_S(p)}{p_S(u)}$ is robustly
false.  Let $P(p_0)\subseteq P$ be a box neighborhood of $p_0$ contained in the interior of $U$.
By assumption, there exists an $\varepsilon_{p_0}>0$
such that if $r\leq \varepsilon_{p_0}$ and $P'\subseteq P(p_0)$ has width at most $r$, then
$\mathrm{CheckSat}(S, p_S(P'), r)$ terminates with $\{\False\}$, hence $\mathrm{Conj}(S\wedge T, P', r)$ terminates
with $\{\False\}$ as well.

This can be done for each $p_0\in P_0$. Let $P(p')^\circ$ be the interior of $P(p')$ in the topology of the box $P_0$. 
Then $\{P(p')^\circ\,|\,p'\in P_0\}$ is an open cover of the compact space $P_0$ and there exists a finite subcovering 
$\{P(p_1)^\circ,\ldots, P(p_m)^\circ\}$ of $P_0$. Take $\varepsilon$ to be so small that each box $P\subseteq P_0$ of width
at most $\varepsilon$ is contained in some $P(p_j)$ and $\varepsilon<\min_i \epsilon_{p_i}$. Then 
$\mathrm{Conj}(S\wedge T, P, r)$ terminates with $\{\False\}$ for any $r\leq\varepsilon$ and any box $P$ of width
at most $r$.
\end{proof}

For disjunctions the situation is analogous. 

Together with the correctness proof from Section~\ref{sec:correctness} this concludes the proof of Theorem~\ref{thm:main}.

\section{Limitations on Generalization}
\label{sec:nonquasidec}


We showed in Lemma~\ref{m<n} that an overdetermined system of equations ($m<n$) never has a robust solution. In the underdetermined case ($m>n$), in some cases, we could fix $m-n$ input variables in $f$ to constants $a\in\R^{m-n}$ and try to analyze the formula $\exists x\in \bar\Omega^a
\;.\; f(a,x)=0$, where $\bar\Omega^a=\{x\in\R^n \, | \,(a,x)\in \bar\Omega\}$.
If $f(a,\cdot )$ has a robust zero in $\bar\Omega^a$, then $f$ has a robust zero in $\bar\Omega$.
However, the converse is not true: If $f(a,\cdot)$ does not have a robust zero in $\bar\Omega^a$ for any
fixed choice of $a\in\R^{m-n}$ (the components of $a$ ranging over all $(m-n)$-subsets of the
total number of $m$ variables), $f$ still may have a robust zero in $\bar\Omega$. 
\Long{For an example, consider the Hopf fibration $H:S^3\to S^2$ and define a map
$\tilde{h}:\R^4\to \R^3$ in polar coordinates by $\tilde{h}(rx)=rH(x)$ for $r\geq 0$ and $x\in S^3$.
Let $h$ be the restriction of $\tilde{h}$ to the box $B=[-1,1]^4$. Clearly, $0$ is the only zero of 
$h$. For any $a\in [-1,1]$, $h^a(x)=h(a,x)$ has not a robust zero in $[-1,1]^3$ (for $a=0$, a small perturbation
of $h^a$ is nowhere zero). 
However, $\exists x\in B \;.\; h=0$ is robustly true. To show this, assume it is not and let $h_1$ be a nowhere zero
$1$-perturbation of $h$. Then $F(t)=th+(1-t)h_1$ is a homotopy between $h$ and $h_1$, $0\notin F(t)(\partial B)$ for all $t$.
So, the map $H_1:=h_1/|h_1|$ from $S^3$ to $S^2$ is homotopic to the Hopf map $H=h/|h|:S^3\to S^2$.
Further, $H_1$ is homotopic to the trivial map via $G(t): x\mapsto h_1(tx)/|h_1(tx)|$, but $H$ is not homotopically
trivial. This is  a contradiction and therefore, $h$ contains a robust zero in $B$.}

Indeed, Theorem~\ref{thm:nonquasidec} states that a generalization to the underdetermined case is (under certain weak conditions) impossible, and we will spend the rest of this section to prove this theorem.
If $Q$ is a quasi-decision procedure (Def.~\ref{def:quasidec}) and $\I$ an algorithmic assignment of $\I(f)$ to all function symbols $f$, we will denote by $Q_\I$ the algorithm that takes a sentence $\varphi$ and returns $Q(\varphi,\I)$. We need the following:
\begin{lemma}
\label{lem:strict}
Assume that there exists a quasi-decision procedure $Q$ for some class of formulas such that
each function symbol appears in each formula at most once, and such that each term in each formula consists of one single function symbol. Assume that the quasi-decision procedure has access only to the oracle $\I(f)$ for each function symbol $f$ 
in the formula.\footnote{That is, it may call $\I(f)$ with any input an arbitrary number of times, but apart from the results of calling $\I(f)$ it does not use any properties of $f$, nor does it analyze how $\I(f)$ is computed.} 

Then there exists an algorithmic assignment $\I'(f)$ to all function symbols $f$ such that $Q_{\I'}(\varphi)$ terminates \emph{if and only if} $\varphi$ is robust. 
\end{lemma}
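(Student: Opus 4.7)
The plan is to prove the two directions separately. The ``if'' direction (robust $\Rightarrow$ terminates) is immediate: provided each $\I'(f)$ still satisfies Definition~\ref{def:intcomp} for $f$, the algorithm $Q_{\I'}$ remains a quasi-decision procedure and therefore terminates on every robust $\varphi$. All the real work is in the ``only if'' direction, where I must force $Q_{\I'}$ to diverge on every non-robust $\varphi$.

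For that direction, I would first observe that $Q$ itself makes robustness semi-decidable: to semi-verify that $\varphi$ is robust, enumerate rational $\epsilon > 0$ and, for each $\epsilon$, all rational $\epsilon$-perturbations $\varphi'$ of $\varphi$ obtained by perturbing the interpretation of a single function-symbol term, running $Q_{\I}(\varphi')$ on each; for small enough $\epsilon$ all such $\varphi'$ are themselves robust, so $Q$ terminates on them and the returned truth values must agree with that of $\varphi$, witnessing robustness. If $\varphi$ is non-robust, then for every $\epsilon$ some perturbation has the opposite truth value, so no such witness is ever found. I would then define $\I'(f)(B)$ to first compute $\I(f)(B)$ and, in parallel, dovetail this semi-decision procedure over all triples $(\varphi,\epsilon,n)$ with $\varphi$ syntactically containing $f$, $\epsilon>0$ rational, and $n\in\mathbb{N}$: the oracle simulates $n$ steps of $Q$ on all rational $\epsilon$-perturbations of $\varphi$ at the $f$-slot and checks whether the returned truth values agree, releasing the precomputed enclosure only when the robustness check has succeeded for the formula currently being processed by $Q_{\I'}$.

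The main obstacle is linking the dovetailed search, which ranges over all formulas containing $f$, to the specific input $\varphi$ currently passed to $Q_{\I'}$. Since $\I'(f)$ must be uniform in $\varphi$ yet diverge exactly when $\varphi$ is non-robust, the oracle has to identify $\varphi$ from the information available to it. I would exploit the two standing hypotheses---each function symbol occurs at most once per formula, and $Q$ is deterministic with oracle-only access to $f$---to argue that the sequence of queries $(B_1,B_2,\dots)$ issued to $\I'(f)$ fingerprints $\varphi$ up to its behaviour on $f$; by maintaining this query history as internal state, $\I'$ can index its robustness check by it and ensure that witnesses found for unrelated formulas cannot trigger a spurious early return on a query issued inside a non-robust~$\varphi$. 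Showing rigorously that this fingerprinting together with the dovetailed semi-decision procedure yields termination \emph{exactly} on robust inputs is the delicate part of the argument.
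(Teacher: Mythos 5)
Your ``if'' direction is fine, but the ``only if'' direction takes a route that does not work, and it is quite different from the paper's. The paper's proof is a short adversary argument: define $\I'(f)(B'):=\I(f)(B')+[-\Width{B'},\Width{B'}]^n$. This is still a total, valid assignment in the sense of Definition~\ref{def:intcomp} (so $Q_{\I'}$ terminates on robust inputs for free), but every answer now contains $f(B')$ in its \emph{interior}. If $Q_{\I'}(\varphi)$ terminated on a non-robust $\varphi$, it would have made only finitely many oracle calls, and non-robustness yields a perturbation $\tilde\varphi$ of opposite truth value whose functions $\tilde f$ satisfy $\tilde f(B)\subseteq\I'(f)(B)$ on all queried boxes. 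One then builds a second valid assignment $\I''$ under which $Q_{\I''}(\tilde\varphi)$ receives literally the same oracle answers as $Q_{\I'}(\varphi)$ (here the hypothesis that each function symbol occurs only once is used), hence returns the same result --- contradicting correctness of $Q$. No robustness test is ever performed inside the oracle.

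Your construction has three concrete problems. First, an oracle that ``releases the precomputed enclosure only when the robustness check has succeeded'' is not an admissible $\I'$: on queries issued while processing a non-robust formula it never returns, so $\I'(f)$ is not a total algorithm satisfying Definition~\ref{def:intcomp}, and then your first paragraph's claim that $Q_{\I'}$ ``remains a quasi-decision procedure'' no longer follows --- the forward direction collapses. Second, the proposed semi-decision procedure for robustness requires checking that $Q$ agrees on \emph{all} rational $\epsilon$-perturbations for a fixed $\epsilon$; this is an infinite family, so the test never produces a positive witness in finite time, and in any case semi-decidability of robustness is essentially the \emph{conclusion} one wants to extract from this lemma (it is what $Q_{\I'}$'s termination behaviour is supposed to deliver), not an ingredient available for building $\I'$. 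Third, the fingerprinting step is not merely ``delicate'' but impossible as posed: $\I'(f)$ receives only boxes, many distinct formulas (and many distinct perturbations of the same formula) generate identical query prefixes, and the oracle has no access to ``the formula currently being processed.'' Indeed, the paper's proof exploits exactly this indistinguishability of nearby inputs from finitely many oracle answers as the engine of the argument, rather than trying to overcome it.
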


\begin{proof}
Let us define the addition of boxes naturally by $B_1+B_2:=\{b_1+b_2:\,b_1\in B_1,b_2\in B_2\}$. 
For every function symbol $f$ corresponding to a function  $B\to\R^n$, let $\I'(f)$ be the algorithm defined by $\I'(f)(B'):=\I(f)(B')+[-\Width{B'},\Width{B'}]^n$.
This algorithm is a modification of $\I(f)$, it still represents the function $f$
and satisfies the assumptions of Definition~\ref{def:intcomp}. However, for any box $B'\subseteq B$, the output 
$\I'(f)(B')$ contains $f(B')$ in its interior. 
We will show that $Q_{\I'}$ terminates if and only if the input is robust.

By definition of $Q$, $Q_{\I'}(\varphi)$ terminates whenever $\varphi$ is a robust sentence. 
It remains to prove that it does not terminate for inputs that are not robust. 
Let $\varphi$ be a fixed non-robust sentence.  
For proving that $Q_{\I'}(\varphi)$ does not terminate, we assume that it does terminate and derive a contradiction. 

$Q_{\I'}(\varphi)$ only uses a finite number of evaluations of $\I'(f)(B)$ with $f$ being a function in $\varphi$. Let $\tilde{\varphi}$ be a perturbation of $\varphi$ in which each function $f$ is replaced by an interval computable function $\tilde{f}$ representable by a term in our first-order language such that 
\begin{itemize}
\item $\tilde{\varphi}$ and $\varphi$ have different truth values,
\item $\tilde{f}(B)\subseteq \I'(f)(B)$ for every $B$ used by $Q_{\I'}(\varphi)$ in a call to $\I'(f)(B)$. 
\end{itemize}
Such functions exist, because $\varphi$ is non-robust, $\I'(f)(B)$ contains $f(B)$ in its interior and arbitrarily close to $f$ are other functions representable by a term in our first-order language.  
Now, let $\I''$ be equal to $\I$ with the exception that for every $\tilde{f}$ occurring in $\tilde\varphi$, $\I''(\tilde{f})(B):=$
\begin{itemize}
\item $\I'(f)(B)$, for every box $B$ used by $Q_{\I'}(\varphi)$ in a call to $\I'(f)(B)$, and
\item $\I(\tilde{f})(B)$, otherwise.
\end{itemize}
$\I''$ still satisfies both axioms of Definition~\ref{def:intcomp}. 
All function symbols in both $\varphi$ and $\tilde{\varphi}$ are mutually different and both $Q_{\I'}$ and $Q_{\I''}$ do not use 
any other information about the function symbols in $\varphi$ and $\tilde{\varphi}$ than the evaluations $\I'(f)$ and $\I''(\tilde{f})$, respectively. 

However, for every call $\I''(\tilde{f})(B)$ of $Q_{\I''}(\tilde{\varphi})$, and corresponding call $\I'(f)(B)$ of $Q_{\I'}(\varphi)$, $\I''(\tilde{f})(B)=\I'(f)(B)$.
Hence $Q_{\I''}(\tilde{\varphi})$ uses exactly the same information about its input as $Q_{\I'}(\varphi)$ and they have to return the same result.  But this is impossible, because $\varphi$ and $\tilde{\varphi}$ have different truth values.

Therefore, $Q_{\I'}$ does not terminate whenever the input is non-robust. \qed
\end{proof}
  
For proving Theorem~\ref{thm:nonquasidec} we use a reduction from a~recent undecidability result~\cite[p. 19]{Franek:2014}. For this we introduce the following notions: A \emph{triangulation} of the box $[0,1]^d$, with $d\in\N$, is a subdivision of $[0,1]^d$ into a finite set $S$ of simplices such that the intersection of any two simplices in $S$ is again a simplex (possibly empty) in $S$. A \emph{piecewise linear function} from $[0,1]^d$ to $\R^d$ is a function that is linear on each simplex of some triangulation. It is uniquely determined by values on the vertices of the simplices. If the simplices have rational coordinates and the values of $f$ on the vertices are all rational, then $f$ is interval computable; moreover, for any box $B\subseteq[0,1]^d$ with rational vertices,
the image $f(B)$ can be computed exactly by means of linear programming. We summarize the statement given in~\cite[Inequalities, Section 4]{Franek:2014}.
\begin{theorem}
\label{thm:non-dec}
There is no algorithm with the following specification: \pagebreak[2]
\begin{description}
\item[Input: ]~
  \begin{itemize}
  \item $n,d\in\N$,
  \item $T$, a triangulation of $[0,1]^d$ with rational vertices
  \item $(f, g):[0,1]^d\to\R^n\times\R$, piecewise linear with rational values on vertices of $T$
  \end{itemize}
\item[Output: ] At least one correct answer from the following two options:
\begin{itemize}
\item $\exists x\in[0,1]^d\;.\; f(x)=0\,\wedge\,g(x)\leq 0$ is robustly true,
\item Some $1$-perturbation of $\exists x\in [0,1]^d\;.\; f(x)=0\,\wedge\,g(x)\leq 0$ is false.
\end{itemize}
\end{description}
\end{theorem}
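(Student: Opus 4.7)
The plan is a reduction by contradiction: assume $Q$ exists with the stated specification, and construct from $Q$ an algorithm solving the problem ruled out by Theorem~\ref{thm:non-dec}.

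First, I would promote $Q$ to a variant that terminates precisely on robust inputs. The formulas in the theorem satisfy the syntactic hypothesis of Lemma~\ref{lem:strict} (each function symbol appears at most once and each term consists of a single function symbol applied to the bound variable), so applying that lemma to $Q$ yields an algorithmic oracle assignment $\I'$ such that $Q_{\I'}(\varphi)$ terminates iff $\varphi$ is robust. Combining this with correctness of $Q$ and with Lemma~\ref{robustfalse} (whose proof, as noted there, requires no restriction on the number of equations) gives the clean characterization: $Q_{\I'}(\varphi)=\{\True\}$ iff $\varphi$ is robustly true; $Q_{\I'}(\varphi)=\{\False\}$ iff $\varphi$ is false (equivalently, robustly false); and $Q_{\I'}(\varphi)$ fails to terminate iff $\varphi$ is true but not robust.

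For the reduction itself, given an instance $(f,g)$ of the problem of Theorem~\ref{thm:non-dec}, I would replace $g$ by $-g$ to match sign conventions and set $\varphi := \exists x\in[0,1]^d\;.\; f_1(x)=0\wedge\ldots\wedge f_n(x)=0\wedge g(x)\geq 0$. Then enumerate all rational piecewise linear formulas $\tilde\varphi_1,\tilde\varphi_2,\ldots$ in the class with $d(\tilde\varphi_i,\varphi)\leq 1$, including $\varphi$ itself; the distance constraint is checkable by linear programming on the rational vertex values. Dovetail the executions of $Q_{\I'}(\tilde\varphi_i)$ together with $Q_{\I'}(\varphi)$. The algorithm outputs (A) ``$\varphi$ is robustly true'' as soon as $Q_{\I'}(\varphi)$ returns $\{\True\}$, and outputs (B) ``some $1$-perturbation of $\varphi$ is false'' as soon as some $Q_{\I'}(\tilde\varphi_i)$ returns $\{\False\}$ (witnessed by $\tilde\varphi_i$, which is then robustly false, hence false).

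The hard part will be proving that this dovetailed procedure always terminates, which then yields the desired contradiction with Theorem~\ref{thm:non-dec} (correctness of whichever output is produced is immediate from the characterization above). Termination splits into cases. If $\varphi$ is robustly true, $Q_{\I'}(\varphi)$ halts with $\{\True\}$. If $\varphi$ is false, it is robustly false by Lemma~\ref{robustfalse}, so $Q_{\I'}(\varphi)$ halts with $\{\False\}$ and (B) is witnessed by $\varphi$ itself as a $0$-perturbation. The interesting case is $\varphi$ true but not robust: then there is a continuous perturbation $\psi$ with $d(\psi,\varphi)<1$ that is false, and hence robustly false by Lemma~\ref{robustfalse}. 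Combining the resulting open neighborhood of robustly false perturbations with density of rational piecewise linear terms in the sup norm will produce some $\tilde\varphi_i$ in the enumeration with $d(\tilde\varphi_i,\varphi)<1$ that is still robustly false, on which $Q_{\I'}(\tilde\varphi_i)$ terminates with $\{\False\}$. The alternative formulation mentioned after the theorem (where the oracle condition is replaced by requiring $Q$ to never terminate on non-robust inputs) is obtained by the same argument, using $Q$ directly in place of $Q_{\I'}$ and bypassing Lemma~\ref{lem:strict}.
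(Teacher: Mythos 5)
There is a genuine gap, and it is structural: your argument does not prove Theorem~\ref{thm:non-dec} at all. What you assume is a quasi-decision procedure $Q$ for unrestricted systems of equations and inequalities (this is the only kind of $Q$ to which Lemma~\ref{lem:strict} and the ``alternative formulation'' remark apply), and what you derive is a contradiction \emph{with Theorem~\ref{thm:non-dec}} --- i.e.\ you use the very statement you were asked to prove as the known impossibility result that powers the reduction. That is circular as a proof of Theorem~\ref{thm:non-dec}; what you have actually reconstructed, quite faithfully (Lemma~\ref{lem:strict} to make termination equivalent to robustness, Lemma~\ref{robustfalse} for the false case, enumeration of rational piecewise linear perturbations within distance $1$, and dovetailing), is the paper's proof of Theorem~\ref{thm:nonquasidec}, with the minor variation that your ``negative'' branch runs $Q_{\I'}$ on the enumerated perturbations, whereas the paper's Step~2 decides falsity of each piecewise linear perturbation directly by linear programming on barycentric subdivisions and needs no call to $Q$ there, working with a $1/2$-versus-$1$ margin.

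In the paper, Theorem~\ref{thm:non-dec} is not proved: it is an imported undecidability result, quoted from Franek and Kr{\v c}{\'a}l~\cite[Inequalities, Section 4]{Franek:2014}, and it serves as the undecidable seed from which Theorem~\ref{thm:nonquasidec} is obtained by reduction. Its proof lives entirely outside this paper (it rests on the machinery of~\cite{Franek:2014} on robust satisfiability of piecewise linear systems, not on anything derivable from the quasi-decidability framework here), so no reduction of the kind you propose --- starting from a hypothetical quasi-decision procedure --- can establish it. If the intended target had been Theorem~\ref{thm:nonquasidec}, your outline would be essentially the paper's argument; for Theorem~\ref{thm:non-dec} itself, the only correct move within this paper is to cite~\cite{Franek:2014}.
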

In the cited theorem, the notion of ``robustly true'' means that for some $\epsilon>0$, for \emph{arbitrary continuous} 
functions $\tilde{f}_i$ and $\tilde{g}$ such that $\|\tilde{f}_i-f\|<\epsilon$ and $\|\tilde{g}-g\|<\epsilon$, it holds that the sentence 
$\exists x\in[0,1]^d\;.\; \tilde{f}(x)=0\,\wedge\,\tilde{g}(x)\leq 0$ is true, not only for interval computable functions from a specified language. However,
if our first-order language contains all piecewise linear functions with rational values on rational vertices, then both notions of robustness are equivalent. This can be shown as follows:

Assume that all piecewise linear $\epsilon$-perturbations $\tilde{f}_i^{PL}, \tilde{g}^{PL}$  of $f_i, g$ satisfy $\exists x\in [0,1]^d\;\; \tilde{f}^{PL}(x)=0\,\wedge\,\tilde{g}^{PL}(x)\leq 0$, and for some continuous $(\epsilon/2)$-perturbations
$f_i', g_i'$ the sentence $\exists x\in [0,1]^d\;\; {f'}(x)=0\,\wedge\,{g'}(x)\leq 0$ is false. 
Then the last sentence is also ``robustly false'' by the remarks after Lemma~\ref{robustfalse}: ``robustly false'' here means that
any small enough continuous perturbation is false (note that $f'$ and $g'$ doesn't need to be interval computable). 
However, arbitrary close to $f'$ and $g'$ are some piecewise linear functions, which contradicts our assumption
that any piecewise linear $\epsilon$-perturbation of $\exists x\in [0,1]^d\;\; {f}(x)=0\,\wedge\,{g}(x)\leq 0$ is true. 
Therefore, both notions of being robustly true are equivalent and we do not need to distinguish them further.

Further, Theorem~\ref{thm:non-dec} still holds, if we assume that the function symbols $\{f_1,\ldots, f_n, g\}$ in the input formula
\begin{equation}
\label{e:soei}
\exists x\in [0,1]^d\,\,f_1=0\,\wedge\,\ldots\,\wedge f_n=0\,\wedge\,g\leq 0
\end{equation} are all pairwise different, and that the perturbations
consist of formulas in which all functions are pairwise different. This can be seen as follows:

If, in formula (\ref{e:soei}), two functions $f_i$ and $f_j$ or $f_i$ and $g$ coincide, we can easily construct
an arbitrary small perturbation of (\ref{e:soei}) that is false, because each component can be perturbed independently.
So, without loss of generality, we may assume that all function symbols in the input of Theorem~\ref{thm:non-dec} are different.
It can easily be shown that the sentence (\ref{e:soei}) is robustly true if and only if for some $\epsilon>0$, each
$\epsilon$-perturbation 
$$
\exists x\in [0,1]^d\,\,\tilde{f}_1=0\,\wedge\,\ldots\,\wedge \tilde{f}_n=0\,\wedge\,\tilde{g}\leq 0
$$
in which all $n+1$ functions $\tilde{f}_j, \tilde{g}$ are different, is true. Similarly, some $1$-perturbation is false, if some $1$-perturbation
in which all functions are different, is false. Summarizing the previous paragraphs, we obtain the following consequence:
\begin{lemma}
\label{l:nonrepeating}
Assume that we have a language containing function symbols for all piecewise linear functions on rational triangulations of $[0,1]^d$
with rational values on vertices, and the class of all sentences $\mathcal{A}$ of the type (\ref{e:soei}) 
such that in each sentence, all function symbols are different.
Then there is no algorithm with the following specification:
\begin{description}
\item[Input: ] ~
  \begin{itemize}
  \item A sentence $\varphi$ from $\mathcal{A}$.
  \end{itemize}
\item[Output: ] At least one correct answer from the following two options:
\begin{itemize}
\item $\varphi$ is robustly true wrt. the class $\mathcal{A}$
\item Some $1$-perturbation of $\varphi$ from $\mathcal{A}$ is false.
\end{itemize}
\end{description}
\end{lemma}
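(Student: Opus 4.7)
The plan is to prove this by contradiction via a reduction from Theorem~\ref{thm:non-dec}. Suppose an algorithm $A$ meeting the specification of the lemma exists. I will use $A$ to build an algorithm $A'$ satisfying the specification of Theorem~\ref{thm:non-dec}, yielding a contradiction. Given an instance $(n,d,T,(f,g))$ of that theorem, $A'$ syntactically constructs the sentence $\varphi$ of the form~(\ref{e:soei}). If some two function symbols among $f_1,\dots,f_n,g$ coincide, $A'$ outputs \emph{some $1$-perturbation of $\varphi$ is false}; otherwise $\varphi$ lies in $\mathcal{A}$, and $A'$ runs $A$ on $\varphi$ and forwards its answer.

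For correctness of the repetition case, I would observe that if $f_i$ and $f_j$ (or $f_i$ and $g$) denote the same function, then replacing one of the repeated occurrences with a piecewise linear function differing from it by a small nonzero constant of magnitude less than $1$ produces an unsatisfiable, hence false, sentence; so the answer is correct. For the non-repetition case, the key is that the two notions of robustness agree on $\mathcal{A}$: a sentence in $\mathcal{A}$ is robustly true with respect to perturbations in $\mathcal{A}$ (piecewise linear, rational, with distinct function symbols) if and only if it is robustly true in the sense of Theorem~\ref{thm:non-dec} (arbitrary continuous perturbations). This equivalence is exactly what the two paragraphs immediately preceding the lemma establish: density of piecewise linear rational functions in the continuous ones handles the piecewise-linear restriction (invoking the remark after Lemma~\ref{robustfalse} that a false sentence is also robustly false under arbitrary continuous perturbations), and the pairwise-distinct restriction is free because any piecewise linear perturbation can be further perturbed by an arbitrarily small amount to make the symbols distinct without affecting truth values.

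The main obstacle, already dispatched in those paragraphs, is precisely the equivalence of the three notions of robustness: with respect to arbitrary continuous perturbations, with respect to piecewise linear rational perturbations, and with respect to piecewise linear rational perturbations with pairwise distinct function symbols. Once this equivalence is in place, the answer returned by $A$ in the non-repetition case is a correct answer for the instance $(n,d,T,(f,g))$ of Theorem~\ref{thm:non-dec}, and combined with the trivial repetition case this contradicts that theorem. The whole argument is a short bookkeeping of the reductions built up in the preceding discussion; no further topological or analytic ingredient is required.
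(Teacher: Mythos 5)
Your proposal is correct and follows essentially the same route as the paper: the paper also obtains Lemma~\ref{l:nonrepeating} as a direct consequence of Theorem~\ref{thm:non-dec}, dispatching repeated function symbols by perturbing one occurrence independently to falsify the sentence, and using the two preceding paragraphs to identify robustness with respect to $\mathcal{A}$ with robustness with respect to arbitrary continuous perturbations. The only cosmetic difference is that the paper phrases this as a without-loss-of-generality restriction of the undecidable problem rather than as an explicit contradiction via an algorithm $A'$, which changes nothing of substance.
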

Now we are ready to prove Theorem~\ref{thm:nonquasidec}:

\begin{proof}[of Theorem~\ref{thm:nonquasidec}.]
We will assume that a quasi-decision procedure for the class of sentences defined in Theorem~\ref{thm:nonquasidec} exists, and derive a contradiction. Let us call the assumed quasi-decision procedure $Q$. We prove that the existence of $Q$ implies the existence of an algorithm solving the undecidable problem from Lemma~\ref{l:nonrepeating}. For this we will first (Step~1) construct an algorithm computing positive information, then (Step~2) an algorithm computing negative information, and finally (Step~3) run them in parallel to get an algorithm specified in Lemma~\ref{l:nonrepeating}.

{\bf Step 1.} First we show that the existence of $Q$ implies the existence of an algorithm with input as in Lemma~\ref{l:nonrepeating} such that it terminates iff $\exists x\in[0,1]^d\;.\; f=0\,\wedge\,g\leq 0$ is robustly true.

We can easily construct an algorithm assigning to each piecewise linear function $f:[0,1]^d\to \R$ with rational values on the vertices 
an algorithm $\I(f)$ satisfying
the axioms in Definition~\ref{def:intcomp}. 
From the quasi-decision procedure $Q$ for general systems of equations and inequalities we get an algorithm $Q_\I$ 
that takes an input such as in Lemma~\ref{l:nonrepeating} and decides whether it is robustly true or not, whenever
the input is robust.
By Lemma~\ref{lem:strict}, we can algorithmically replace $\I$ by $\I'$ and obtain an algorithm $Q_{\I'}$ 
that terminates iff the input is robust. This procedure can be modified such that 
instead of terminating with $\{\False\}$, it runs forever. The result is an algorithm that terminates if and only if the input $\exists x\in [0,1]^d\,f=0\,\wedge\,g\leq 0$ is robustly true.

{\bf Step 2.} Now we show that there exists an algorithm with input such as in Lemma~\ref{l:nonrepeating} with the following specification:
\begin{itemize}
\item if some $1/2$-perturbation of $\exists x\in [0,1]^d\;.\; f=0\,\wedge\,g\leq 0$ is false, then it terminates, and
\item if it terminates, then some $1$-perturbation of the above formula is false.
\end{itemize}
This algorithm can be described as follows: In the $i$-th step, it constructs the $i$-th barycentric subdivision $T^{(i)}$ of the given triangulation $T$, and further constructs all piecewise linear functions $f', g'$ on this subdivision such that their values on the vertices $\{v_k\}_k$ of $T^{(i)}$ are rational with denominators at most $i$ and such that for each $k$, the values $f_i'(v_k)$ resp. $g'(v_k)$ differ from $f(v_k)$ resp. $g(v_k)$ by less than $1$. For all such piecewise linear functions $f',g'$, the truth value of $\exists x\in[0,1]^d\;.\; f'=0\,\wedge\,g'\leq 0$ can be computed. Moreover, due to the restriction on the denominators of the values on the vertices, there exists only a finite number of such functions. So, for all those finitely many $f'$ and $g'$, the algorithm checks whether $\exists x\in [0,1]^d\;.\; f'=0\,\wedge\,g'\leq 0$ is false and terminates as soon as it finds a pair $(f', g')$ for which the formula is false.

In the rest of step 2 of the proof we show that this algorithm satisfies the above specification.

The absolute value $|f_i-{f'}_i|$ of a linear function $f_i-{f'}_i$ is a convex function on each simplex $\Delta\in T^{(i)}$, so on each simplex it attains its maximum on a vertex. Therefore, a piecewise linear function ${f'}$ is a $1$-perturbation of $f$ iff its restriction to the vertices is a $1$-perturbation of the restriction of $f$. It follows that $f'=0\,\wedge\,g'\leq 0$ is a $1$-perturbation of $f=0\,\wedge\,g\leq 0$ if and only if the differences  $|f_i(v_k)-\tilde{f}_i(v_k)|\leq 1$ and $|g_i(v_k)-\tilde{g}_i(v_k)|\leq 1$ for all $i$ and all vertices $v_k$. Assume that $f,g: [0,1]^d\to\R^n\times\R$ are piecewise linear on a given triangulation $T$ of $[0,1]^d$ and that  some $1/2$-perturbation of $f=0\,\wedge\,g\leq 0$ is unsatisfiable. Each continuous function can be approximated arbitrarily precisely by some piecewise linear function on an iterated barycentric subdivision.
So, there exists an iterated barycentric subdivision $T^{(i)}$ of $T$ and piecewise linear functions 
$\tilde{f}, \tilde{g}$ on $T^{(i)}$  such that $\exists x\in[0, 1]^d\;.\;\tilde{f}=0\,\wedge\,\tilde{g}\leq 0$ is a false $1$-perturbation of $\exists x\in [0,1]^d\;.\; f=0\,\wedge\,g\leq 0$. The algorithm finds this perturbation in its $i$th step and terminates.

Conversely, if the algorithm terminates, then it had found a false $1$-perturbation of 
$\exists x\in [0,1]^d\;.\; f=0\,\wedge\,g\leq 0$.

{\bf Step 3.} Finally, we show that the existence of $Q$ contradicts Lemma~\ref{l:nonrepeating}. Given piecewise linear functions $(f,g): [0,1]^d\to\R^n\times \R$ with non-repeating function symbols and the quasi-decision procedure $Q$ for systems of equations and inequalities, we could run an algorithm specified in Step~1 that terminates if and only if $\exists x\in [0,1]^d\;.\; f=0\,\wedge\,g\leq 0$ is robustly true. Further, by Step 2, we could run another algorithm that terminates whenever some $1/2$-perturbation of this formula is false. A formula is either robustly true, or has a false 1/2-perturbation, so at least one of these algorithm would always terminate. If the first algorithm terminates,
we know that the formula is robustly true and if the second one terminates, we know that some 1-perturbation is false. 
Thus, we could choose at least one correct answer from the output specified in~Lemma~\ref{l:nonrepeating},
which is impossible. \qed
\end{proof}

\section{Related Work}
\label{sec:related-work}

From the very beginning of engineering the notion of robustness has played a
key role. This is being recognized more and more in several scientific fields:
For example, the field of robust control~\cite{Zhou:97,Bhattacharyya:95} is now
considered as a central subject of control engineering. Robustness also plays an
increasingly important role in applied and computational mathematics, as shown by
the emerging fields of robust optimization~\cite{Ben-Tal:09} and uncertainty
quantification (with a journal of the same name recently having been launched by SIAM).

Also in the computing field, robustness has been a core issue from the very
beginning. In computer systems design this is usually captured by the keyword
''fault-tolerance'' and for numerical algorithms ''stability''.  Robustness also
plays an important role in computational geometry~\cite{Yap:04}. 

In the complexity analysis of algorithms, the notion of perturbation has helped to
explain the good practical behavior of algorithms with exponential worst-case
complexity~\cite{Spielman:04}. The present paper in analogy applies Spielman and
Deng's~\citeyear{Spielman:09} motivation ''The basic idea is to identify typical
properties of practical data, define an input model that captures these
properties, and then rigorously analyze the performance of algorithms assuming
their inputs have these properties'' to undecidable problems, where the main
goal then is not performance analysis but finding a terminating algorithm.

Apparently, the first paper that follows this approach of ensuring termination of an algorithm for all robust inputs to an undecidable problem (in this case safety
verification of hybrid systems) is due to
Fr{\"a}nzle~\citeyear{Fraenzle:99}. Since then, a similar approach has been applied
several times~\cite[e.g.]{Fraenzle:99,Fraenzle:01,Ratschan:02b,Ratschan:02f,Damm:07} to
problems in formal verification.

To the best of our knowledge, the first paper to apply such an approach to decision procedures for the real numbers is by one of the co-authors~\cite[Theorem~5]{Ratschan:01c} (see also~\cite[Theorem~6]{Ratschan:02f}), based on an analysis of robustness of
first-order formulas~\cite{Ratschan:02b}. The main difference to the present paper and---at the same time---main weakness is, that it expresses equalities of the form $f(x)=0$ as a conjunction of two equalities of the form $f(x)\leq 0\wedge -f(x)\leq 0$ which---in general---loses robustness, since the two occurrences of $f$ can be perturbed independently and a solution of $f(x)=0$ can vanish under perturbations of $f(x)\leq 0\wedge -f(x)\leq 0$. Hence, the corresponding algorithm need not necessarily terminate in such cases of satisfiable equalities.

Recently, Gao et. al.~\citeyear{Gao:12} took a similar approach: They model perturbations of formulas by the notion of $\delta$-strengthening which roughly means that  inequalities of the form $f\geq 0$ are replaced by inequalities of the form $f \geq \delta$, where $\delta>0$. However, instead of allowing non-termination in non-robust cases, the approach uses the notion of $\delta$-decidability that requires an algorithm to terminate always, but either decides that the input formula is true, or that a $\delta$-strengthening of the input formula is false. These two answers overlap, and especially for non-robust inputs, for every $\delta>0$, a $\delta$-strengthening of the input is false, and hence both answers are allowed. 

Since $\delta$-decidability cannot give a definite answer for inputs that are false, $\delta$-decidability does not imply quasi-decidability, in general. 
However, it does imply quasi-decidability for classes of formulas that are closed under negation, since then it is possible to run the corresponding algorithm in parallel on both the input formula and its negation. It would be an easy extension of the algorithm in this paper to return also quantitative information on robustness (i.e. a value $\varepsilon\in\R$ s.t. the input is $\varepsilon$-robust). 

Gao and co-authors handle equalities of the form $t=0$ as the non-robust formula $-|t|\geq 0$. Hence---in contrast to the present paper---their approach cannot prove equalities to have a solution. For example, it cannot prove that $\exists x\in[-10, 10]\;.\; x=0$ is true since this formula is  handled as the non-robust formula $\exists x\in[-10,10]\;.\; -|x| \geq 0$ for which, for every $\delta>0$, the $\delta$-strengthening  $\exists x\in[-10,10]\;.\; -|x| \geq \delta$ is false. And indeeed, in such cases the approach  returns that a $\delta$-strengthening of the formula is false. The paper~\cite{Gao:12} also studies complexity of such algorithms in some model of computable analysis~\cite{Brattka:08}. Another  paper~\citeyear{Gao:12a} studies $\delta$-decidability in a satisfiability modulo theory (SMT) context, where the approach either returns ``unsatisfiable'' or ``a $\delta$-weakening is satisfiable'' with the notion of $\delta$-weakening defined in analogy to $\delta$-strengthening. 

Due to the fact that those approaches~\cite{Ratschan:02f,Gao:12,Gao:12a} do not handle equalities directly, but reformulated as inequalities, those algorithms that do not need to, and in fact do not exploit continuity of the involved functions. In contrast to that, in the present paper we use the topological degree as the notion that captures the essential information about the roots of continuous functions under continuous perturbations. 

All those approaches can depend on the precise way perturbations of first-order formulas are modeled. There are various possibilities for this, and some have been compared~\cite{Ratschan:02b}, but a comprehensive exploration of this is still missing.

The approach of relaxing the semantics of first-order formulas can be taken even further than just relaxing the dichotomy satisfiable/unsatisfiable. For example, one can weaken the necessity of distinguishing between close values~\cite{Casagrande:09}, or introduce quantifiers that are weaker than the classical ones~\cite{Ratschan:03}.

Collins~\citeyear{Collins:08b} presents similar result to ours for the special case of systems of $n$ equalities in $n$ variables, formulated in the language of computable analysis~\cite{Brattka:08}. However, the paper contains only very rough proof sketches, that we were not able to complete into full proofs.

Franek and Kr{\v c}{\'a}l study~\cite{Franek:2014} the problem whether or not each continuous $r$-perturbation of a system $f(x)=0$ has a solution or not, where $f: K\to\R^n$ is a piecewise linear function defined on a finite simplicial complex $K$. This turns out to be decidable whenever $\dim K< 2n-2$ or $n$ is even and undecidable for a fixed odd $n\geq 3$ and arbitrary $K$.

Verification of zeros of systems of equations is a major topic in the interval
computation community~\cite{Neumaier:90,Rump:98,Kearfott:02,Frommer:05}. However, here
people are usually not interested in some form of completeness of their methods,
but in usability within numerical solvers for systems of equations or global
optimization.

Basic existence theorems that are commonly used for proving that an equation
$f=0$ has a solution in $B$ are Kantorovich's, Miranda's and Borsuk's
theorem. Among these Borsuk's theorem is
the strongest~\cite{Alefeld:04,Frommer:05}, that is, if the assumptions of the other theorems are fulfilled,
then the assumptions of Borsuk's theorem are fulfilled as well.

We will now recall Borsuk's theorem and then compare its power for proving existence of a zero with that of the use of the topological degree:

\begin{theorem}[Borsuk's theorem]
If $B\subseteq\R^n$ is open, bounded, convex and symmetric with respect to an interior point $x$, $f: \bar{B}\to\R^n$ is continuous 
and non-zero on the boundary $\partial B$ and if for any $x+y\in\partial B$ and $\lambda>0$, $$f(x+y)\neq \lambda f(x-y),$$ 
then $f=0$ has a solution in $B$.
\end{theorem}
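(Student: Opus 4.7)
The plan is to show that $\deg(f,B,0) \neq 0$, from which Property~\ref{item:degsol} of the topological degree immediately yields a zero of $f$ in $B$. After translating we may assume without loss of generality that $x = 0$, so that $B$ is symmetric about the origin and the antipodal hypothesis reads $f(y) \neq \lambda f(-y)$ for every $y \in \partial B$ and every $\lambda > 0$.

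Next I would introduce the odd auxiliary map $F \colon \bar B \to \R^n$ defined by $F(y) := f(y) - f(-y)$, together with the linear homotopy $h_t(y) := f(y) - t\,f(-y)$, $t \in [0,1]$, connecting $h_0 = f$ to $h_1 = F$. The key observation is that $h_t$ vanishes nowhere on $\partial B$: if $h_0(y) = f(y) = 0$, this contradicts $f$ being nonzero on $\partial B$; and if $h_t(y) = 0$ for some $t > 0$, then $f(y) = t\,f(-y)$, directly contradicting the antipodal hypothesis with $\lambda = t$. Hence homotopy invariance of the degree (Axiom~\ref{item:deghomot}) gives $\deg(f,B,0) = \deg(F,B,0)$.

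The remaining step is to invoke the classical Borsuk--Ulam theorem asserting that the degree of an odd continuous map $F \colon \bar B \to \R^n$ on a symmetric bounded open $B \ni 0$ with $0 \notin F(\partial B)$ is an odd integer, and in particular nonzero. The standard proof of this fact (which is the main obstacle in our plan) proceeds by approximating $F$ by a smooth odd map $G$ having $0$ as a regular value: one first obtains a smooth approximation via Stone--Weierstrass, symmetrizes it by averaging with its antipodal reflection to restore oddness, and then adds a small odd linear perturbation to make $G'(0)$ invertible and $0$ a regular value. Oddness forces $G'(-y) = G'(y)$, hence $\det G'(-y) = \det G'(y)$, so the nonzero preimages in $G^{-1}(0)$ come in antipodal pairs each contributing $\pm 2$ to the degree, while the forced preimage at the origin contributes $\pm 1$, giving an odd total. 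Arranging the odd-equivariant smoothing so that $0$ is simultaneously a regular value (without destroying oddness) is the delicate part; by contrast, the degree-theoretic reduction in the first two paragraphs is short and mechanical.
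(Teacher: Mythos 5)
Your approach is essentially the same as the paper's: the paper's proof is folded into the footnote immediately following the theorem statement, where it exhibits the homotopy $H(t,x)=\frac{\tilde f(x)-t\tilde f(-x)}{|\tilde f(x)-t\tilde f(-x)|}$ between the normalized map $\tilde f=f/|f|$ and the odd map $g$, then invokes the classical fact (citing Dieudonn\'e) that an odd map of spheres has odd, hence nonzero, degree. Your unnormalized homotopy $h_t(y)=f(y)-t f(-y)$ from $f$ to the odd map $F(y)=f(y)-f(-y)$ is exactly this argument before dividing by the modulus, and both proofs then rest on the same external input (odd degree for odd maps); the only cosmetic difference is that you sketch the smooth-approximation argument for that input where the paper just cites it.
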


It can be shown that if the assumption of Miranda's theorem are satisfied, then the degree has to be $1$ or $-1$
and if the assumption of Borsuk's theorem are satisfied, then the degree $\deg(f,B,0)$ has to be an odd number\footnote{This
can be shown as follows. The function $\tilde{f}:= f/|f|: \partial B\to S^{n-1}$ is homotopic to 
$g(x):=\frac{\tilde{f}(x)-\tilde{f}(-x)}{|\tilde{f}(x)-\tilde{f}(-x)|}$ via the homotopy 
$H(t,x)=\frac{\tilde{f}(x)-t\tilde{f}(-x)}{|\tilde{f}(x)-t\tilde{f}(-x)|}$, so $\tilde{f}$ and $g$ have the same degree.
Assumptions on $B$ imply that $\partial B\simeq S^{n-1}$ and an odd map $g(-x)=-g(x)$ between spheres 
has odd degree~\cite[p. 180]{Dieudonne:2009}.}.
On the other hand, if $f$ has an isolated zero of even degree, then one cannot prove that using
Borsuk's theorem.
A simple illustration of this is the complex function $f(z)=z^2$ from $\C\simeq\R^2$ to itself,
\Long{
$$
f:
\begin{pmatrix}
x\\ y
\end{pmatrix}\mapsto
\begin{pmatrix}
x^2-y^2\\ 2xy
\end{pmatrix}
$$}
defined in a~symmetric and convex neighborhood $B$ of $0\simeq (0,0)$. 
This function has a robust zero in $B$ and $\deg(f,B,0)=2$, 
so the assumptions of Borsuk's theorem are not fulfilled in any such neighborhood $B$.


An essential ingredience of our algorithm is the computation of the topological degree. 
Many papers deal with the question of an effective implementation, e.g.~\cite{Erdelsky:73,Kearfott:79,Boult:86,Aberth:94,Franek:12b}.
Our online package {\tt TopDeg}\footnote{\url{http://topdeg.sourceforge.net}} computes $\deg(f,B,0)$ for a function $f$ defined
as an expression containing symbols such as polynomials and sin, and a~low-dimensional box $B$. 
The degree can also be computed by the use of packages for 
\emph{simplicial homology} computations,
such as Chomp\footnote{\url{http://chomp.rutgers.edu}}, GAP homology packes\footnote{\url{http://www.linalg.org/gap.html}}, or a collection of MATLAB routines PLEX~\footnote{\url{http://comptop.stanford.edu/u/programs/plex/}}. 
However, to compute the degree with the use of these programs, one has to create first a simplicial approximation of 
$f/|f|: \partial B\to S^{n-1}$, which can be done by means of interval arithmetic.

A limitation of our approach is the fact that while in the context of real-world problems and engineering applications, the robustness assumption is natural, theorems with a purely mathematical motivation often fail to be robust. In such a context, the only option to automatize theorem proving 
of first-order sentences of the reals with function symbols such as $\sin$ is the systematic usage of heuristics. This has been successfully implemented in the MetiTarski~\cite{Akbarpour:10} package.

\section{Conclusion}
\label{sec:conclusion}

Motivated by the fact that in many application domains robustness is an essential property of formal models, we showed that for an undecidable class of first-order formulas over the real numbers one can algorithmically check
satisfiability in all robust cases (under the additional assumption that all variables range over bound intervals). Moreover, we showed that it is not possible to generalize this result to the case without restrictions on the number of variables versus number of equations. Still, it might be possible to find a quasi-decision procedure for certain, specific numbers of variables versus equations. Moreover, it might be possible to find a quasi-decision procedure for a class of formulas with functions that are more specific than general interval computable.

The generalization to arbitrary existential quantification is hindered by the fact that the property that $\forall x\in I\;.\; S$ is robustly true if and only if for each $x_0\in I$, the sentence $\Subst{S}{x}{x_0}$ is robustly true (Lemma~\ref{robusttrue}) does not hold in analogy for existential quantifiers. The  sentence $\exists x\in [-1,1] \;.\; x=0$ is robustly true but for any $x_0\in [-1,1]$, the sentence $x_0=0$ ($x_0$ is considered to be a constant function here) is not robustly true. A topological reformulation of adding an existence quantifier to the beginning of a formula would be desirable and could be a subject of future research.

It also remains an open problem to come up with an algorithm that is both a quasi-decision procedure and efficient in practice.

\web{}{
\begin{acknowledgements}
  \acktext
\end{acknowledgements}}

\bibliographystyle{abbrv}
\bibliography{real_quasidec}

\end{document}